\tikzset{
  LabelStyle/.style = { rectangle, rounded corners, draw,
                        minimum width = 2em, fill = yellow!50,
                        text = red, font = \bfseries },
  VertexStyle/.append style = { inner sep=2pt,
                                font = \bfseries},
  EdgeStyle/.append style = {->,  left} }
\newtheorem{example}{Example}
\newtheorem{assumption}{Assumption}
\newtheorem{proposition}{Proposition}
\newtheorem{theorem}{Theorem}
\newtheorem{lemma}{Lemma}
\newcolumntype{P}[1]{>{\centering\arraybackslash}p{#1}}
\newcommand*{\addFileDependency}[1]{
  \typeout{(#1)}
  \@addtofilelist{#1}
  \IfFileExists{#1}{}{\typeout{No file #1.}}
}
\newcommand*{\myexternaldocument}[1]{%
    \externaldocument{#1}%
    \addFileDependency{#1.tex}%
    \addFileDependency{#1.aux}%
}
\title{Bridging Parametric and Nonparametric Methods \\
in Cognitive Diagnosis 
\footnote{This research is partially supported by NSF CAREER SES-1846747, DMS-1712717, SES-1659328.}
}
\author{Chenchen Ma$^1$, Jimmy de la Torre$^2$, and Gongjun Xu$^1$\\
$^1$ University of Michigan\\
$^2$ University of Hong Kong}
\date{}
\begin{document}

\maketitle

\begin{abstract}
\noindent A number of parametric and nonparametric methods for estimating cognitive diagnosis models (CDMs) have been developed and applied in a wide range of contexts.
However, in the literature, a wide chasm exists between these two families of methods, and their relationship to each other is not well understood.
In this paper, we propose a unified estimation framework to bridge the divide between parametric and nonparametric methods in cognitive diagnosis to better understand their relationship.
We also develop iterative joint estimation algorithms and establish consistency properties within the proposed framework.
Lastly, we present comprehensive simulation results to compare different methods, and provide practical recommendations on the appropriate use of the proposed framework in various CDM contexts.

\end{abstract}

\section{Introduction}
\label{sec-intro}

Cognitive diagnosis models (CDMs), also known as diagnostic classification models, are typically used in conjunction with diagnostic assessments to determine fine-grained classifications of subjects' latent attribute patterns based on their observed responses to specifically-designed diagnostic items.
In educational assessments, the latent attributes can represent the mastery or lack of target skills \citep{de2011generalized,junker2001cognitive}.
Students' skill profiles, which are inferred from the their responses to test items, are used for subsequent learning interventions.
In psychiatric diagnosis, the latent attributes can be construed as the presence or absence of some underlying mental disorders \citep*{de2018analysis, templin2006measurement}.
Patients' responses to questionnaire items serve as the basis for identifying their mental disorder statuses, which in turn determines the appropriate treatments.

Several parametric models for cognitive diagnosis have been developed and widely applied in practice.
Popular examples include the deterministic input, noisy “and” gate (DINA) model \citep{junker2001cognitive}, the deterministic input, noisy “or” gate (DINO) model \citep{templin2006measurement}, the reduced reparameterized unified model  \citep[Reduced RUM;][]{hartz2002bayesian}, the general diagnostic model  \citep[GDM;][]{von2005general}, the log-linear CDM  \citep[LCDM;][]{henson2009defining}, and the generalized DINA model \citep[GDINA;][]{de2011generalized}.
To estimate these parametric models, estimators maximizing the marginal likelihood or joint likelihood functions have been employed \citep[e.g.,][]{chiu2016joint, de2009dina}.

Parametric CDMs, such as the DINA or DINO model, invoke certain parametric assumptions about the item response functions.
As pointed out in \cite{chiu2013nonparametric}, such assumptions may raise validity concerns about the assumed model and the underlying process.
As an alternative, some researchers have explored nonparametric methods for assigning subjects to latent groups without relying on parametric model assumptions.
For example, \cite{chiu2013nonparametric} proposed the nonparametric classification (NPC) method, where a subject is classified to its closet latent group by comparing the observed responses with ideal responses either from the DINA or DINO model.
Its generalization, the general NPC (GNPC) method proposed by \cite{chiu2018cognitive}, uses the weighted average of ideal responses from the DINA and DINO models to accommodate more general settings.
Consistency results for the NPC and the GNPC methods were established by \cite{wang2015consistency} and \cite{chiu2019consistency}, respectively.
Simulation results show that, compared to parametric methods, nonparametric methods tend to perform better when the sample sizes are not sufficiently large to provide reliable maximum likelihood estimates.

Even though the aforementioned parametric and nonparametric methods have been used in many CDM applications, the relationship between these two families of methods have not been explicitly discussed in the literature.
Although seemingly divergent from the surface, these frameworks are in fact closely related.
In this paper, we propose a unified estimation framework for cognitive diagnosis that subsumes both parametric and nonparametric methods.
In the proposed framework, we use a general loss function to measure the distance between a subject's responses and the centroid of a latent class.
By using different loss functions, the method can assume different parametric and nonparametric forms.
Under the general framework, we further develop a unified iterative joint estimation algorithm, as well as establish the consistency properties of the corresponding estimators.
 {Finally, we conduct comprehensive simulation studies to compare different parametric and nonparametric methods under a variety of settings, and provide relevant practical recommendations accordingly.}

The rest of the paper is organized as follows. Section \ref{sec-review} gives a brief review of both parametric and nonparametric methods  in cognitive diagnosis assessment.
Section \ref{sec-general} introduces the proposed general estimation framework with several illustrative examples. Section \ref{sec-analysis} presents the consistency results of the proposed method, and Section \ref{sec-simulation} presents the simulation results. Finally, Section 6 discusses some future extensions, whereas proofs of the main theorems are reported in the online Appendix.

\section{Parametric and Nonparametric Methods}
\label{sec-review}

Before introducing our proposed estimation framework, we give a brief review of both parametric and nonparametric methods that are widely used in the CDM literature.

\subsection{Parametric Methods}
\label{sec-para}

Parametric methods directly model item response functions under certain parametric model assumptions.
Most of CDMs are parametric models, where the item response probabilities are modeled as functions of item parameters and the latent attributes of subjects.
Specifically, in a CDM with $J$ items and $K$ latent attributes, two types of subject-specific variables are of interest.
One is the observed responses to $J$ items $\boldsymbol{x} = (x_1,\dots,x_J)\in\{0,1\}^J$, and the other is the mastery profile of $K$ latent attributes  $\boldsymbol{\alpha} = (\alpha_1,\dots,\alpha_K)\in \{0,1\}^K$.
 {There are $2^K$ possible latent patterns, }
and we use $\boldsymbol{p}=(p_{\boldsymbol{\alpha}}:\boldsymbol{\alpha}\in\{0,1\}^K)$ to denote the proportion parameters for the latent attribute patterns of subjects, which satisfies $p_{\boldsymbol{\alpha}}\in [0,1]$ and $\sum_{\boldsymbol{\alpha}\in\{0,1\}^K} p_{\boldsymbol{\alpha}}=1$.

Given a subject's latent attribute pattern $\boldsymbol{\alpha}$, the responses to $J$ items are assumed to be independent and follow Bernoulli distributions with parameters $\theta_{1,\boldsymbol{\alpha}},\dots,\theta_{J,\boldsymbol{\alpha}}$.
Specifically, $\theta_{j,\boldsymbol{\alpha}}:=\mathbb{P}(x_j=1|\boldsymbol{\alpha})$, which is the probability (item response function) of providing a positive response to item $j$ for latent class $\boldsymbol{\alpha}$. We write $\boldsymbol{\theta} = \big(\theta_{j, \boldsymbol{\alpha}}: j \in [J], \boldsymbol{\alpha} \in \{0,1\}^K\big)$ to denote the item  response probability matrix, with $[J]$ denoting the set $\{1,\ldots, J\}$. 
Then under the local independence assumption, the probability mass function of a subject's response vector $\boldsymbol{x} = (x_1,\dots,x_J)\in\{0,1\}^J$ takes the form
\begin{equation}
    \mathbb{P}(\boldsymbol{x}\mid \boldsymbol{\theta},\boldsymbol{p}) = \sum_{\boldsymbol{\alpha}\in\{0,1\}^K} p_{\boldsymbol{\alpha}}\prod_{j=1}^J \theta_{j,\boldsymbol{\alpha}}^{x_j}(1-\theta_{j,\boldsymbol{\alpha}})^{1-x_j}.
\end{equation}
To reflect the dependence between items and the latent attributes of subjects, a structural matrix, the so-called $Q$-matrix \citep{tatsuoka1983rule}, is used to impose constraints on item parameters. 
Specifically, $\boldsymbol{Q} \in \{0,1\}^{J\times K}$, where $q_{j,k}=1$ if item $j$ requires (or depends on) attribute $k$.
The $j$th row vector of $\boldsymbol{Q}$ denoted by $\boldsymbol{q}_j$ describes the full dependence of item $j$ on $K$ latent attributes.
Usually in applications such as cognitive diagnostic assessments, the matrix $\boldsymbol{Q}$ is pre-specified by domain experts \citep{george2015cognitive,  junker2001cognitive, von2005general} to reflect some scientific assumptions.
The structural matrix $\boldsymbol{Q}$ puts constrains on item parameters in certain ways under different model assumptions.
One important common assumption is that the item response function
 $\theta_{j,\boldsymbol{\alpha}}$ only depends on whether latent attribute pattern $\boldsymbol{\alpha}$ contains the required attributes by item $j$ (i.e., the attributes in the set $\mathcal{K}_j = \{k\in[K]:q_{j,k}=1\}$ with $[K]$ denoting the set $\{1,\ldots, K\}$).
Here we introduce three commonly used parametric models.

\begin{example}[DINA]
The DINA \citep{junker2001cognitive} model assumes a conjunctive relationship among attributes, where mastery of all the required attributes for an item is necessary for a subject to be deemed capable of providing a positive response (e.g., correct response, item endorsement), and possessing additional unnecessary attributes does not compensate for the lack of necessary attributes. In the DINA model, the so-called ideal response for each item $j$ and each latent attribute pattern $\boldsymbol{\alpha}$ is defined as,
\begin{equation}
    \eta_{j,\boldsymbol{\alpha}}^{\text{DINA}} = \prod_{k=1}^K \alpha_k^{q_{jk}}.
    \label{eq-ideal-DINA}
\end{equation}
The uncertainty is further incorporated by introducing the slipping and guessing parameters $s_j$ and $g_j$ for $j=1,\dots,J$.
For each item $j$, the slipping parameter is the probability of a negative response for capable subjects, and the guessing parameter is the probability of a positive response for incapable subjects, as in, $s_j = \mathbb{P}(x_{ij} = 0 \mid \eta_{j,\boldsymbol{\alpha}_i} = 1)$ and $g_j = \mathbb{P}(x_{ij} = 1 \mid \eta_{j,\boldsymbol{\alpha}_i} = 0)$, where $\boldsymbol{\alpha}_i$ is the latent pattern for the $i$th subject.
Therefore, we have
\begin{equation*}
    \theta_{j,\boldsymbol{\alpha}}^{\text{DINA}} = (1-s_j)^{\eta_{j, \boldsymbol{\alpha}}^{\text{DINA}}}
    g_j^{1-\eta_{j, \boldsymbol{\alpha}}^{\text{DINA}}}.
\end{equation*}
That is, the item response function is $1-s_j$ if the ideal response is 1,  and  $g_j$ otherwise.
\end{example}

\begin{example}[DINO]
The DINO \citep{templin2006measurement} model assumes a disjunctive relationship among attributes, where mastery of one of the required attributes of an item is necessary for a subject to be considered capable of providing a positive response.
In the DINO model, the ideal response is defined as,
\begin{equation}
    \eta_{j,\boldsymbol{\alpha}}^{\text{DINO}} = 1 - \prod_{k=1}^K(1- \alpha_k)^{q_{jk}}.
    \label{eq-ideal-DINO}
\end{equation}
For the DINO model, the slipping parameters and guessing parameters are defined in the similar way as the DINA model, as in, $s_j = \mathbb{P}(x_{ij} = 0 \mid \eta_{j, \boldsymbol{\alpha}_i} = 1)$ and $g_j = \mathbb{P}(x_{ij} = 1 \mid \eta_{j, \boldsymbol{\alpha}_i} = 0)$. Accordingly, we have
    $$\theta_{j,\boldsymbol{\alpha}}^{\text{DINO}} = (1-s_j)^{\eta_{j, \boldsymbol{\alpha}}^{\text{DINO}}}
    g_j^{1-\eta_{j,\boldsymbol{\alpha}}^{\text{DINO}}}.$$

\end{example}

\begin{example}[GDINA]
The GDINA \citep{de2011generalized} model is a more general CDM, where all the interactions among the required latent attributes by each item are considered. 
The item response function for the GDINA model is defined as
\begin{equation*}
    \theta_{j,\boldsymbol{\alpha}}^{\text{GDINA}} = f\Big(\sum_{S\subset \mathcal{K}_j}\beta_{j,S}\prod_{k\in S}\alpha_k \Big),
\end{equation*}
where $\mathcal{K}_j = \{k\in[K]:q_{j,k}=1\}$ is the set of required attributes by item $j$, and $f(\cdot)$ is a link function.
The link function is usually taken to be the identity, log, or logistic link.
In this work we use the identity link.
The coefficients can be interpreted as following: $\beta_{j,\emptyset}$ is the probability of a positive response for the most incapable subjects with $\boldsymbol{\alpha} = \mathbf{0}$; $\beta_{j,\{k\}}$ is the increase in the probability of a positive response for the subjects with $\alpha_{k} = 1$ compared to those with $\alpha_{k} = 0$; $\beta_{j,S}$ is the increase in the response probability for subjects with $\{\alpha_k = 1, k \in S\}$ compared to those missing one of the attributes in $S$.
By incorporating all the interactions among the required attributes, the GDINA model is one of the most general CDMs.
\end{example}

From a broader perspective, the aforementioned three CDMs belong to a general family of finite mixture models called restricted latent class models (RLCMs; \citealt{haertel1989using, xu2017identifiability}).
One common restriction is that all the capable subjects with all the required attributes have the same and highest item response parameters, that is,
\begin{equation}
    \label{eq-RLCM-A1}
    \underset{\boldsymbol{\alpha}:\boldsymbol{\alpha}\succeq \boldsymbol{q}_j}{\max} \theta_{j,\boldsymbol{\alpha}} = \underset{\boldsymbol{\alpha}:\boldsymbol{\alpha}\succeq \boldsymbol{q}_j}{\min} \theta_{j,\boldsymbol{\alpha}} \ \geq\ \theta_{j,\boldsymbol{\alpha}'} \ \geq\ \theta_{j,\boldsymbol{0}}, \text{ for any } \boldsymbol{\alpha}'\nsucceq \boldsymbol{q}_j,
\end{equation}
where we write $\boldsymbol{\alpha} \succeq \mathbf{q}_{j}$ if $\alpha_{k} \geq q_{j,k}$ for all $k=1,...,K$, and  $\boldsymbol{\alpha} \nsucceq \mathbf{q}_{j}$ otherwise.

To fit CDMs, popularly used parametric methods include marginal maximum likelihood estimation (MMLE) through EM algorithms \citep{de2009dina,de2011generalized} and MCMC techniques \citep{dibello2007review,von2005general}.
\cite*{chiu2016joint} also proposed a joint maximum likelihood estimation (JMLE) method for fitting CDMs.
The parametric estimation methods usually perform well when there are sufficiently large data.
However, as found in recent studies \citep*{chiu2019consistency, chiu2018cognitive}, they may either produce inaccurate estimates with small sample sizes or suffer from high computational costs.
This has lead researchers to consider nonparametric methods, which are reviewed in Section \ref{sec-nonpara}.

\subsection{Nonparametric Methods}
\label{sec-nonpara}

As the name suggests, nonparametric methods  no longer depend on  parametric model assumptions.
Instead of modeling item response functions, nonparametric methods directly classify the subjects to latent classes by minimizing the distance between subject's observed item responses and the centers of the latent classes.
Two popular examples of nonparametric methods are the NPC and the GNPC methods, which compare the subject's observed item responses to the so-called ideal response vectors of each proficiency-class.
Different CDMs define the ideal response vectors differently.
For example,
as specified in \eqref{eq-ideal-DINA} or \eqref{eq-ideal-DINO}, the ideal response in the DINA or DINO model will be 1 only if the subject possesses all the required attributes or one of the required attributes, respectively.
In the following, we give a brief introduction of the NPC and the GNPC methods. Please refer to \cite{chiu2019nonparametric}, \cite{chiu2013nonparametric} and \cite{chiu2018cognitive} for more details.

For the NPC method,  we use $M = 2^K$ to denote the total number of proficiency latent classes (i.e., attribute profiles), and for $m=1,\ldots, M$, we write $\boldsymbol{\eta}_m = (\eta_{1,m}, \eta_{2,m},\dots,\eta_{J,m})$ as the ideal response vector for the $m$th proficiency-class, where $\eta_{j,m}$ can be the DINA or DINO ideal response.
Given the ideal response vectors for each proficiency class, a subject is classified to the closest proficiency class that minimizes the distance between the subject's observed responses and the ideal responses:
\begin{equation*}
    \hat{\boldsymbol{\alpha}}_i = \underset{m\in\{1,2,\dots,M\}}{\arg\min} d(\boldsymbol{x}_i,\boldsymbol{\eta}_m),
\end{equation*}
where $d(\cdot)$ is a distance function.
For example, in \cite{chiu2013nonparametric}, they used the Hamming distance:
\begin{equation*}
    d_H (\boldsymbol{x}, \boldsymbol{\eta}) = \sum_{j=1}^J |x_j - \eta_j|.
\end{equation*}
In the NPC method, the ideal responses are either the DINA ideal responses or the DINO ideal responses, which are all binary; thus,
the absolute difference will be $0$ if the observed response is equal to the ideal response, and $1$ otherwise.
Moreover, because the observed and the ideal responses are all binary, the $L_2$ distance will lead to the same results as the Hamming distance in the NPC method.

Due to its dependence on the DINA or DINO model assumptions, which define two extreme relations between $\boldsymbol{q}$ and $\boldsymbol{\alpha}$, the NPC method may not be sufficiently flexible.
The GNPC method addresses this issue by considering a more general ideal response that represents a weighted average of the ideal responses of the DINA and DINO models, as in:
\begin{equation*}
	\eta_{j,m}^{(w)} = w_{j,m}\eta_{j,m}^{\text{DINA}} + (1-w_{j,m}) \eta_{j,m}^{\text{DINO}},
    \label{eq-gnpc}
\end{equation*}
where $w_{j,m}$ is the weight for the $j$th item and the $m$th proficiency class.
We use $\boldsymbol{\eta}_m^{(w)} = (\eta_{1,m}^{(w)},\dots,\eta_{J,m}^{(w)})$ to denote the weighted ideal response vector for the $m$th proficiency class in the GNPC method.
To get the estimates of the weights, \cite{chiu2018cognitive} proposed to minimize the $L_2$ distance between the responses to item $j$ and the weighted ideal responses $\eta_{j,m}^{(w)}$:
\begin{equation}
    d_{jm} = \sum_{i\in C_m} \big(x_{ij} - \eta_{j,m}^{(w)}\big)^2,
    \label{eq-weight-dist}
\end{equation}
where $\{C_m\}_{m=1}^M$ is the partition of the subjects into $M$ proficiency classes.
Minimizing \eqref{eq-weight-dist} leads to
\begin{equation*}
    \hat{w}_{j,m} = 1 - \Bar{x}_{j,C_m}, \quad \hat{\eta}_{j,m}^{(w)}=\Bar{x}_{j,C_m},
\end{equation*}
where $\Bar{x}_{j,C_m} = |C_m|^{-1} \sum_{i\in C_m} x_{ij}$,  the mean of the $j$th item responses for subjects in the $m$th proficiency class, and $|C_m|$ is the number of subjects in $C_m$.
Because the true memberships are unknown, they proposed to iteratively estimate the memberships and the ideal response vectors.
Specifically, starting with an initial partition of the subjects, the ideal response vectors are chosen to minimize the $L_2$ distance $\sum_{m=1}^M \sum_{i\in C_m} \sum_{j=1}^J (x_{ij} - \eta_{j,m}^{(w)})^2$.
The memberships of the subjects are then determined by minimizing the $L_2$ distance between the observed responses of a subject and the ideal response vectors estimated from the former step, as in, $\hat{\boldsymbol{\alpha}}_i = {\arg\min}_{m\in\{1,2,\dots,M\}}\ d\big(\mathbf{x}_i,{\hat{\boldsymbol{\eta}}_m}^{(w)}\big)$.

To implement the GNPC method, start with some initial values at $t=0$ step.
At the $(t+1)$th step, update the estimates as follows:
\begin{equation*}
	{\hat{\boldsymbol{\alpha}}_i}^{(t+1)}=\underset{m\in\{1,2,\dots,M\}}{\arg\min}
	d\big(\boldsymbol{x}_i, {\hat{\boldsymbol{\eta}}_m}^{{(w)}{(t)}}\big),
	\quad {\hat{\eta}_{j,m}}^{(w) (t+1)}= \ \Bar{\boldsymbol{x}}_{j,{\hat{C}_{m}^{(t+1)}}},
\end{equation*}
where $\hat{\boldsymbol{\eta}}_m^{(w) (t)}$ is the estimated centroids obtained in step $t$, and $\hat{C}_{m}^{(t+1)}$ is the partition of the subjects based on $\big\{\hat{\boldsymbol{\alpha}}_i^{(t+1)}\big\}_{i=1}^N$.
\cite{chiu2018cognitive} demonstrated through simulation studies that, compared to parametric methods, the nonparametric methods generally performed better in small-scale test settings.

\section{A General Estimation Framework}
\label{sec-general}

In this section, we propose a unified estimation framework that subsumes both the parametric and nonparametric models considered in Section \ref{sec-review}. This approach would facilitate a better statistical understanding of the relationship between the two families of CDM estimations.

For the parametric methods, we shall focus on the joint estimation of the subjects' latent classes $(\boldsymbol{\alpha}_i)_{i=1}^n$ and the model parameters.
Considering the joint maximum likelihood estimation for parametric CDMs and the nonparametric estimation approaches as introduced in Section \ref{sec-review}, we can see that the   $\boldsymbol{\theta}$ in the parametric models and the ideal response vectors $\boldsymbol{\eta}$ in the nonparametric methods are closely related, both denoting a certain ``centroid" of the responses of the latent classes under different model assumptions.
For instance, $\theta_{j,\boldsymbol{\alpha}} = P(x_j=1\mid \boldsymbol{\alpha})$ can be viewed as the statistical population average (center) of the responses to item $j$ of those subjects with attribute profile $\boldsymbol{\alpha}$, whereas $\eta_{j,\boldsymbol{\alpha}}$ corresponds to the nonparametric clustering center of the responses to item $j$ of those  in cluster $\boldsymbol{\alpha}$.
Therefore, similarly to the nonparametric clustering methods, the joint maximum likelihood estimation of parametric model can be viewed as minimization of some ``distance" function, introduced by the negative log-likelihood, between the observed responses and the ``centroid" responses $\boldsymbol{\theta}$.

Motivated by this observation, we propose a unified estimation framework for both the parametric and nonparametric methods.
Specifically, we let $\boldsymbol{A} = \big(\boldsymbol{\alpha}_i\big)_{i=1}^N$ denote a class membership matrix for $N$ subjects.
Based on the membership matrix $\boldsymbol{A}$, we can obtain a partition of $N$ subjects into $2^K$ proficiency classes, denoted by $\boldsymbol{C}(\boldsymbol{A} )=\big\{C_{\boldsymbol{\alpha}}(\boldsymbol{A} ):\boldsymbol{\alpha} \in \{0,1\}^K\big\}$, where $C_{\boldsymbol{\alpha}}(\boldsymbol{A} )$ denotes the set of subjects whose latent patterns are specified as $\boldsymbol{\alpha}$ by $\boldsymbol{A}$ .
For each latent class $\boldsymbol{\alpha}\in\{0,1\}^K$, we use $\boldsymbol{\mu}_{\boldsymbol{\alpha}}$ to denote the ``centroid'' parameters for both parametric and nonparametric methods.
Our proposed estimators for the latent attributes and centroid parameters are obtained by   minimizing a loss function of $(\boldsymbol{A}, \boldsymbol{\mu})$ as follows:

\begin{equation}
L(\boldsymbol{A},\boldsymbol{\mu}) := \sum_{\boldsymbol{\alpha}\in\{0,1\}^K} \sum_{i \in C_{\boldsymbol{\alpha}}(\boldsymbol{A} )}l(\boldsymbol{x}_i,\boldsymbol{\mu}_{\boldsymbol{\alpha}}),
 \label{eq-C}
\end{equation}
and the corresponding estimators are $(\hat{\boldsymbol{A}}, \hat{\boldsymbol{\mu}})=\underset{({\boldsymbol{A}, \boldsymbol{\mu}})}{\arg\min}L(\boldsymbol{A},\boldsymbol{\mu}).$
In \eqref{eq-C}, $l(\boldsymbol{x}_i,\boldsymbol{\mu}_{\boldsymbol{\alpha}})$ is a loss function that
measures the distance between the $i$th subject's response vector $\boldsymbol{x}_i$ and the centroid of latent class $\boldsymbol{\alpha}$.
Specifically, the loss function takes the additive form
 $l(\boldsymbol{x}_i,\boldsymbol{\mu}_{\boldsymbol{\alpha}}) = \sum_{j=1}^J l(x_{ij},\mu_{j,\boldsymbol{\alpha}})$,
where we abuse the notation $l(\cdot,\cdot)$  a little,  and when the loss function takes two vectors,
we use it to denote  the summation of the element-wise losses.
In this work, we also assume that $l(x_{ij},\mu_{j,\boldsymbol{\alpha}})$ is continuous in $\mu_{j,\boldsymbol{\alpha}}$.
Note that \eqref{eq-C} can also be expressed as
\begin{equation}\label{eqform}
	L(\boldsymbol{A},\boldsymbol{\mu}) = \sum_{\boldsymbol{\alpha}\in\{0,1\}^K} \sum_{i \in C_{\boldsymbol{\alpha}}(\boldsymbol{A} )}l(\boldsymbol{x}_i,\boldsymbol{\mu}_{\boldsymbol{\alpha}}) =  \sum_{i=1}^N \sum_{\boldsymbol{\alpha}\in\{0,1\}^K} I\{\boldsymbol{\alpha}_i = \boldsymbol{\alpha} \} \cdot l(\boldsymbol{x}_i,\boldsymbol{\mu}_{\boldsymbol{\alpha}}) = \sum_{i=1}^N l(\boldsymbol{x}_i,\boldsymbol{\mu}_{\boldsymbol{\alpha}_i}),
\end{equation}
which corresponds to a joint estimation of  $(\boldsymbol{A},\boldsymbol{\mu})$ under the loss function $l(\cdot,\cdot)$.
From the joint estimation perspective, we can show that, with appropriate loss functions (e.g., $L_1$, $L_2$, cross-entropy) and constraints on the centroids (e.g., centroids based on the ideal responses, weighted ideal responses, or specific CDM assumptions), the proposed framework can provide estimates for all the models discussed in Section \ref{sec-review}.
The examples below demonstrate how the NPC method, the GNPC method, and parametric estimation of the DINA and GDINA models can be derived from the proposed framework using various loss functions and centroid constraints.

\begin{example}[NPC]
In the proposed framework, let the ideal responses under the NPC method be the centroids, that is,
$\boldsymbol{\mu}_{\boldsymbol{\alpha}} = \boldsymbol{\eta}_{\boldsymbol{\alpha}}$.
If we use the $L_1$ loss function $l(x_{ij},\eta_{j,\boldsymbol{\alpha}}) = |x_{ij} - \eta_{j,\boldsymbol{\alpha}}|$,
then our proposed framework will become exactly the NPC method.
Recall that in the NPC method, the ideal response vectors $\boldsymbol{\eta}_{\boldsymbol{\alpha}}$ are determined by pre-specified model assumptions (either the DINA or the DINO); thus, we only need to classify each subject to the closest proficiency class.
\label{ex-NPC}
\end{example}

\begin{example}[GNPC]
Recall that in the GNPC method, the ideal response is defined as $\eta_{j,m}^{(w)} = w_{j,m}\eta_{j,m}^{\text{DINA}} + (1-w_{j,m}) \eta_{j,m}^{\text{DINO}}$, a weighted average of the DINA ideal response and the DINO ideal response.
 Note that for proficiency classes and items such that
$\boldsymbol{\alpha} \succeq \boldsymbol{q}_j$,  we have $\eta_{\boldsymbol{\alpha},j}^{\text{DINA}} = \eta_{\boldsymbol{\alpha},j}^{\text{DINO}} = 1$, and for $\boldsymbol{\alpha}\odot\boldsymbol{q}_j={\mathbf 0}$, where  $\odot$ denotes the elementwise multiplication of vectors,  we have $\eta_{\boldsymbol{\alpha},j}^{\text{DINA}} = \eta_{\boldsymbol{\alpha},j}^{\text{DINO}} = 0$.
In such cases, the weights in fact do not affect the weighted ideal responses since the DINA and the DINO models have the same ideal responses.
Therefore, if we constrain $\boldsymbol{\mu}_{\boldsymbol{\alpha}} =(\mu_{j,\boldsymbol{\alpha}}, j=1,\ldots, J)$ in \eqref{eq-C}, such that
$\mu_{j,\boldsymbol{\alpha}} = 1$ if $\boldsymbol{\alpha} \succeq \boldsymbol{q}_j,$
$\mu_{j,\boldsymbol{\alpha}} = 0$ if $\boldsymbol{\alpha} \odot \boldsymbol{q}_j={\mathbf 0}$, and $ \mu_{j,\boldsymbol{\alpha}} = \eta_{j,m}^{(w)}$ as defined in the GNPC
for the rest of the items,
while at the same time use the $L_2$ loss function $l(x_{ij},\eta_{j,\boldsymbol{\alpha}}) = (x_{ij} - \eta_{j,\boldsymbol{\alpha}})^2$,
then the criterion in \eqref{eq-C} is equivalent to the GNPC method.
\label{ex-GNPC}
\end{example}

\begin{example}[DINA]
Let's consider the cross-entropy loss (i.e., the negative log-likelihood function),
\begin{equation}
    l(x_{ij}, \mu_{j,\boldsymbol{\alpha}})
    = - \big(x_{ij}\log \mu_{j,\boldsymbol{\alpha}} + (1-x_{ij})\log(1-\mu_{j,\boldsymbol{\alpha}})\big).
    \label{eq-ce}
\end{equation}
In addition, if we constrain the centroids to satisfy the following conditions:
\begin{equation*}
	\underset{\boldsymbol{\alpha}:\boldsymbol{\alpha}\succeq \boldsymbol{q}_j}{\max} \mu_{j,\boldsymbol{\alpha}} = \underset{\boldsymbol{\alpha}:\boldsymbol{\alpha}\succeq \boldsymbol{q}_j}{\min} \mu_{j,\boldsymbol{\alpha}} \ \geq \ \underset{\boldsymbol{\alpha}:\boldsymbol{\alpha}\nsucceq \boldsymbol{q}_j}{\max} \mu_{j,\boldsymbol{\alpha}} = \underset{\boldsymbol{\alpha}:\boldsymbol{\alpha}\nsucceq \boldsymbol{q}_j}{\min} \mu_{j,\boldsymbol{\alpha}},
\end{equation*}
that is, all the capable subjects share the same higher item positive probabilities, whereas all the incapable subjects share the same lower item probabilities, then the proposed criterion \eqref{eq-C} becomes the JMLE criterion for the DINA model. Moreover, the centroids here correspond to item response parameters $\boldsymbol{\theta}$ for each latent class in the DINA model.
\label{ex-DINA}
\end{example}

\begin{example}[GDINA]
In Example \ref{ex-DINA}, we can put the following constraints on the centroids:
$\mu_{j,\boldsymbol{\alpha}} = \mu_{j,\boldsymbol{\alpha}'}, \text{ if } \boldsymbol{\alpha}_{\mathcal{K}_j} = \boldsymbol{\alpha}'_{\mathcal{K}_j}, $
where $\boldsymbol{\alpha}_{\mathcal{K}_j}= (\alpha_k)_{k\in \mathcal{K}_j}$ is the sub-vector of $\boldsymbol{\alpha}$ on the set $\mathcal{K}_j$, and $\mathcal{K}_j = \{k\in[K]:q_{j,k}=1\}$ is the set of required attributes by item $j$.
Equivalently, these constraints will result in the same centroid parameters for any two latent patterns sharing the same values on the required attributes of item $j$, which is a GDINA model assumption. Furthermore, if we take the same loss functions as in Example \ref{ex-DINA}, it will result in the JMLE criterion for the GDINA model.
Again, the centroids correspond to item response parameters $\boldsymbol{\theta}$ for each proficiency class.
\label{ex-GDINA}
\end{example}{}

As demonstrated in the above examples, by taking different loss functions and different constraints on the centroid of each latent class, our proposal \eqref{eq-C} provides a general estimation framework bridging both the parametric and nonparametric methods in the literature.
The parametric estimation approaches mostly use the cross-entropy loss (negative log-likelihood) function, whereas the nonparametric approaches use the $L_1$ or $L_2$ distance measures.
 The analogous roles of negative log-likelihood for a parametric CDM and the distance function for a nonparametric CDM were also noted in \cite{chiu2018cognitive}.

It can be noted that the proposed estimation criterion \eqref{eq-C} does not directly use the information pertaining to the population distribution of the latent attribute profiles, which differentiates it from marginal likelihood estimation.
As the population proportion of each latent class of attribute profiles may also provide useful information for the model estimation, we propose to further generalize \eqref{eq-C} by including the proportion parameters in the loss function as follows:
\begin{equation}
 L(\boldsymbol{A} ,\boldsymbol{\mu},\boldsymbol{\pi}) := \sum_{\boldsymbol{\alpha}\in\{0,1\}^K} \sum_{i \in C_{\boldsymbol{\alpha}} (\boldsymbol{A} )} \Big( l(\boldsymbol{x}_i,\boldsymbol{\mu}_{\boldsymbol{\alpha}}) + h(\pi_{\boldsymbol{\alpha}})\Big),
\label{eq-C2}
\end{equation}
where $l(\cdot,\cdot)$ is the loss function as in \eqref{eq-C}, and $h(\cdot)$ is a continuous nonincreasing regularization function of the proportion parameter $\pi_{\boldsymbol{\alpha}}$, which denotes the population proportion of latent class $\boldsymbol{\alpha}$.
As can be seen from \eqref{eq-C2}, the loss function $L$ depends on both the centroids and the class proportions, with one part measuring the distance between a subject's response $\boldsymbol{x}_i$ and the centroid of a latent class $\boldsymbol{\mu}_{\boldsymbol{\alpha}}$, and the other part involving a regularization of class proportions.

Implicitly, Examples \ref{ex-NPC}--\ref{ex-GDINA} take $h(\pi_{\boldsymbol{\alpha}})=0$. When we take the loss function $l(x_{ij}, \mu_{j,\boldsymbol{\alpha}})$ to be the cross-entropy loss function as in \eqref{eq-ce}, and let $h(\pi_{\boldsymbol{\alpha}}) = - \log \pi_{\boldsymbol{\alpha}}$, then \eqref{eq-C2} becomes
\begin{equation}
	L(\boldsymbol{A} ,\boldsymbol{\mu},\boldsymbol{\pi}) = \sum_{\boldsymbol{\alpha}\in\{0,1\}^K}\sum_{i\in C_{\boldsymbol{\alpha}}(\boldsymbol{A} ) } \Big(l(x_{ij}, \mu_{j,\boldsymbol{\alpha}})- \log \pi_{\boldsymbol{\alpha}}\Big)
	  = - \sum_{i=1}^N \log\Big\{\pi_{\boldsymbol{\alpha}_i}\times   Lik(\boldsymbol{x}_i;\boldsymbol{\mu}_{\boldsymbol{\alpha}_i})\Big\},
	\label{eq-CML2}
\end{equation}
where $Lik(\boldsymbol{x};\boldsymbol{\mu}_{\boldsymbol{\alpha}})=P(\boldsymbol{x} \mid \boldsymbol{\mu}_{\boldsymbol{\alpha}})$ is the likelihood function for latent class ${\boldsymbol{\alpha}}$ and observation $\boldsymbol{x}$, and  $\boldsymbol{\mu}_{\boldsymbol{\alpha}} =(\mu_{j,\boldsymbol{\alpha}}, j=1,\ldots, J)$ is the corresponding model parameters with $\mu_{j,\boldsymbol{\alpha}} = \theta_{j,\boldsymbol{\alpha}} =P(x_{ij}=1\mid \boldsymbol{\alpha})$.
Note that
 $ \pi_{\boldsymbol{\alpha}_i}\times   Lik(\boldsymbol{x}_i;\boldsymbol{\mu}_{\boldsymbol{\alpha}_i})$ in the RHS of \eqref{eq-CML2} corresponds to the {\it complete-data} likelihood of $(\boldsymbol{\alpha}_i, \boldsymbol{x}_i)$; therefore, the loss function \eqref{eq-CML2} is in fact the complete-data log-likelihood of $(\boldsymbol{A,X})$.

The loss function \eqref{eq-CML2} also corresponds to the extension of the classification maximum likelihood (CML) criterion \citep{celeux1992classification} applied to the CDM setting.
In Examples \ref{ex-DINA} and \ref{ex-GDINA}, using the loss function as in \eqref{eq-CML2} corresponds to the CML criterion for the DINA or GDINA model respectively.
It can be noted that the CML differs from the JMLE in that the former has an additional term $\log \pi_{\boldsymbol{\alpha}}$ in the loss function to make use of the information in the proportion parameters.
The CML is also closely related to the EM estimation for the marginal MLE in that the CML directly maximizes the complete-data log-likelihood whereas the EM algorithm maximizes the expected complete-data log-likelihood with respect to the posterior distribution of the latent variables.
Finally, it can also be underscored that, by incorporating a wide range of loss functions, the proposed criterion \eqref{eq-C2} is a generalization of the CML criterion \eqref{eq-CML2}.

\medskip
To implement the unified estimation framework, we develop an algorithm to minimize \eqref{eq-C2}.
The algorithm is a general iterative algorithm to classify each subject to the closet proficiency class.
Starting from initial values, the current loss for each subject's responses and the centroid of each latent class is first computed, after which the subject is assigned to the closest latent class that minimizes the loss.
Based on the assigned memberships, the estimates for the centroids and class proportions are updated.
The details of the steps are shown in Algorithm \ref{algo1}.
\begin{algorithm}[h!]
\caption{General Iterative Classification Algorithm}
\label{algo1}
\SetKwInOut{Input}{Input}
\SetKwInOut{Output}{Output}

\Input{\ Binary response matrix $\boldsymbol{X}\in\{0,1\}^{N\times J}$ and structural $Q$-matrix $\boldsymbol{Q}\in\{0,1\}^{J\times K}$}

Initialize $\hat{\boldsymbol{A}}^{(0)}$, $\hat{\boldsymbol{\mu}}^{(0)}$ and $\hat{\boldsymbol{\pi}}^{(0)}$.

\While{convergence not reached}{

At the $(t+1)^{th}$ iteration,

\textbf{Step 1}:
Compute the current loss between $\boldsymbol{x_i}$ and the centroid of each proficiency class,
\begin{equation*}
    l(\boldsymbol{x}_i,\hat{\boldsymbol{\mu}}_{\boldsymbol{\alpha}}^{(t)}) + h(\hat{\pi}_{\boldsymbol{\alpha}}^{(t)}) ,\ i = 1,\dots,N,\ \boldsymbol{\alpha}\in\{0,1\}^K.
    \label{eq-step1}
\end{equation*}{}

\textbf{Step 2}:
Assign each $\boldsymbol{x}_i$ to the closest proficiency class, as in,
    \begin{equation*}
        \hat{\boldsymbol{\alpha}}_i^{(t)} = \underset{\boldsymbol{\alpha}}{\arg\min} \ l(\boldsymbol{x}_i,\hat{\boldsymbol{\mu}}_{\boldsymbol{\alpha}}^{(t)}) + h(\hat{\pi}_{\boldsymbol{\alpha}}^{(t)}),\ i=1,\dots,N.
        \label{eq-step2}
    \end{equation*}
    \quad \quad \quad \ and obtain the resulting partition $\hat{\boldsymbol{C}}^{(t)}:= \boldsymbol{C} (\boldsymbol{\hat{A}}^{(t)} )$.
\vspace{0.1in}

\textbf{Step 3}:
Compute the centroid and proportion of each proficiency class,
    \begin{equation*}
        (\hat{\boldsymbol{\mu}}_{\boldsymbol{\alpha}}^{(t+1)}, \hat{\boldsymbol{\pi}}_{\boldsymbol{\alpha}}^{(t+1)}) = \underset{(\boldsymbol{\mu},\boldsymbol{\pi})}{\arg\min} \sum_{i \in \hat{C}_{\boldsymbol{\alpha}}^{(t)} } \Big(l(\boldsymbol{x}_i,\hat{\boldsymbol{\mu}}_{\boldsymbol{\alpha}}^{(t)}) + h(\hat{\pi}_{\boldsymbol{\alpha}}^{(t)})\Big), \ \boldsymbol{\alpha}\in\{0,1\}^K.
        \label{eq-step3}
    \end{equation*}{}}
\Output{$\hat{\boldsymbol{A}}$, $\hat{\boldsymbol{\mu}}$, and $\hat{\boldsymbol{\pi}}$.}
\end{algorithm}

In the CDM context, certain proficiency classes share the same item response parameters for each item given a particular $Q$-matrix.
For example, for all CDMs, any $\boldsymbol{\alpha}$ such that $\boldsymbol{\alpha}\succeq \boldsymbol{q}_j$, has the same item response function; for the DINA model, there are only two levels of probabilities for each item's  item response function, and the capable classes with $\boldsymbol{\alpha}\succeq \boldsymbol{q}_j$ share the same item response function $1-s_j$, 
and the incapable classes with $\boldsymbol{\alpha}\nsucceq \boldsymbol{q}_j$ share the same item response function $g_j$. 
Based on this observation, under certain model assumptions, the proficiency classes can be partitioned into some equivalence classes for each item according to the $Q$-matrix.
Specifically, for item $j$,
let $\Tilde{A}_j = \big\{\Tilde{A}_{j,\boldsymbol{\alpha}}=\{\boldsymbol{\alpha}': \mu_{j,\boldsymbol{\alpha}} = \mu_{j,\boldsymbol{\alpha}'}\} \big\}$.
Under this partitioning, the proficiency classes in the same equivalence class will have the same item response probability for this item.
For example, in a DINA model with two latent attributes, if $\boldsymbol{q}_j = (1,0)$, then the proficiency classes can be partitioned into $\Big\{ \big\{(0,0),(0,1)\big\}, \big\{(1,0),(1,1)\big\} \Big\}$, where $\boldsymbol{\alpha} \in \big\{(0,0),(0,1)\big\}$ will have the same item response function, $g_j$, and $\boldsymbol{\alpha} \in \big\{(1,0),(1,1)\big\}$  will also share the same  item response function, $1 -s_j$.
 Therefore, by incorporating information of the $Q$-matrix and certain model assumptions, we can develop an iterative classification algorithm tailored for CDMs that updates the centroids associated with equivalence classes together.

To illustrate, if we let the negative log-likelihood function be the loss function as specified in (\ref{eq-ce}), then Step 3 in Algorithm \ref{algo1} simplifies to
\begin{align*}
    \hat{\pi}_{\boldsymbol{\alpha}}^{(t+1)} = \frac{\sum_{i=1}^N I\{\hat{\boldsymbol{\alpha}}_i^{(t)} = \boldsymbol{\alpha} \}}{N},\quad
    \hat{\mu}^{(t+1)}_{j,\boldsymbol{\alpha}} = 
    \frac{\sum_{\boldsymbol{\alpha}'\in \Tilde{A}_{j,\boldsymbol{\alpha}}}\sum_{i \in \hat{C}^{(t)}_{\boldsymbol{\alpha}'}} \ x_{ij}}{\sum_{\boldsymbol{\alpha}'\in \Tilde{A}_{j,\boldsymbol{\alpha}}} |\hat{C}^{(t)}_{\boldsymbol{\alpha}'}|},
\end{align*}
where $|\cdot|$ is the cardinality of a set. Based on this simplification, the estimated proportion parameters are the sample proportions based on the estimated partition of the subjects, and the estimated centroids are the corresponding sample means of the equivalence classes also based on the estimated partition.
Moreover, if fixed and equal proportions, together with $L_2$ loss $l(x_{ij}, \mu_{j,\mathbf{\alpha}}) = (x_{ij} - \mu_{j,\mathbf{\alpha}})^2$ are used, the algorithm becomes the iterative algorithm for the GNPC method outlined in \cite{chiu2018cognitive}.

\section{Analysis of the Proposed Framework}
\label{sec-analysis}

In this section, we provide a theoretical analysis of the proposed framework.
We show that, under certain conditions, the proposed estimation framework will give consistent estimates. The consistency results can be regarded as extensions of those for the NPC and the GNPC methods developed in \cite{wang2015consistency} and \cite{chiu2019consistency}.
In addition to the asymptotic results, we also provide an analysis of the proposed algorithm in the finite sample situations.

As we introduced in Section \ref{sec-para}, all the parametric CDMs belong to the family of latent class models.
Hence, in our following analysis, we assume a general latent class model as the underlying model.
Our results below are also easily adapted to the $Q$-matrix restricted latent class models.
We use $\theta_{j,\boldsymbol{\alpha}}^0$ to denote the true probability of providing a positive response for the  $j$th item and latent pattern $\boldsymbol{\alpha}$, as in, $\theta_{j,\boldsymbol{\alpha}}^0 = \mathbb{P}(x_j=1\mid \boldsymbol{\alpha})$,
and we use $\boldsymbol{\theta}^0_{\boldsymbol{\alpha}} = (\theta^0_{1,\boldsymbol{\alpha}},\dots,
\theta^0_{J,\boldsymbol{\alpha}})$ to denote item   probability vector for latent pattern $\boldsymbol{\alpha}$.
We let $\boldsymbol{A}^0 = (\boldsymbol{\alpha}_i^0)_{i=1}^N$ denote the true latent pattern matrix of the $N$ subjects to be classified.
Before we establish the consistency results, we first make some mild assumptions.

\begin{assumption}
The loss function $l(x,\mu)$ is H\"older continuous in $\mu$ on $[\tau, 1-\tau]$ for any $\tau \in(0,0.5)$,
and the total loss \eqref{eq-C2} is minimized at class means given the subjects' membership, as in,
$\hat{\mu}_{j,\boldsymbol{\alpha}} =  \sum_{i \in C_{\boldsymbol{\alpha}}}  x_{ij} / |C_{\boldsymbol{\alpha}}|$.
\label{assump-minima}
\end{assumption}{}

\begin{assumption}
$h(\cdot)$ in \eqref{eq-C2} is a continuous nonincreasing function of the proportion parameters.
\label{assump-prop}
\end{assumption}{}

\begin{assumption}\label{assump-delta-theta}
There exist constants $ \delta_1,\delta_2>0$ such that
 $\underset{J\rightarrow \infty}{\lim} \{ \underset{\boldsymbol{\alpha}\neq \boldsymbol{\alpha}'}{\min}\  J^{-1}\|\boldsymbol{\theta}_{\boldsymbol{\alpha}}^0 -   \boldsymbol{\theta}_{ \boldsymbol{\alpha}'}^0 \|_1\} \geq \delta_1$,
 and
  $\delta_2 \leq \underset{j,\boldsymbol{\alpha}}{\min} \  \theta_{j,\boldsymbol{\alpha}}^0 < \underset{j,\boldsymbol{\alpha}}{\max} \  \theta_{j,\boldsymbol{\alpha}}^0 \leq 1-\delta_2$, where $\|\cdot\|_1$ denotes the $L_1$ norm.
\end{assumption}

\begin{assumption}\label{assump-delta-loss}
There exists $\delta \geq 1$ such that
\begin{equation}
    \Big| E[l(x_{ij},\theta_{j,\boldsymbol{\alpha}}^0)] - E[l(x_{ij},\theta_{j,\boldsymbol{\alpha}^0_i}^0)] \Big| \geq \big|\theta_{j,\boldsymbol{\alpha}}^0 - \theta_{j,\boldsymbol{\alpha}^0_i}^0\big|^{\delta},\ \forall \ \boldsymbol{\alpha} \neq \boldsymbol{\alpha}^0_i.
\label{eq-theta}
\end{equation}
\end{assumption}

One can easily check that the $L_2$ and cross-entropy (negative log-likelihood) loss functions given in Section \ref{sec-general}  satisfy Assumption \ref{assump-minima}.
Note that the second part of Assumption \ref{assump-minima} is a natural requirement for the consistent estimation of $\theta_{j,\boldsymbol{\alpha}}^0$,
as $\theta_{j,\boldsymbol{\alpha}}^0$ represents  the population average of the responses of subjects in latent class $\boldsymbol{\alpha}$, that is,  $\theta_{j,\boldsymbol{\alpha}}^0 = \mathbb{P}(x_j=1\mid \boldsymbol{\alpha})$.
Given the true memberships of the subjects, for an estimator $\hat\mu_{j,\boldsymbol{\alpha}}$ that is consistent for  $\theta_{j,\boldsymbol{\alpha}}^0$, it must satisfy $\big|\hat\mu_{j,\boldsymbol{\alpha}}-\sum_{i \in C_{\boldsymbol{\alpha}}}  x_{ij} / |C_{\boldsymbol{\alpha}}| \big |\to 0$ in probability by the law of large number.
An interesting counterexample is the $L_1$ loss function, which does not satisfy this assumption because given the memberships, $\hat\mu_{j,\boldsymbol{\alpha}}$ that minimizes the $L_1$ loss function is the sample median instead of the sample mean.
Since in the CDM setting the responses are binary, the sample median would be either 0 or 1, which makes $\hat\mu_{j,\boldsymbol{\alpha}}$ under the $L_1$ loss not a consistent estimator of  $\theta_{j,\boldsymbol{\alpha}}^0$ even when the true memberships are known.
In other words, the $L_1$ loss cannot provide consistent estimation of the centroid parameters while  the  $L_2$ and cross-entropy losses can, as to be shown in the following theorems.
More generally, following the M-estimation theory \citep{van2000asymptotic}, the second part of Assumption \ref{assump-minima} can be further relaxed to requiring $E_{\theta_{j,\boldsymbol{\alpha}}^0}[l(x_{ij},\mu_{j,\boldsymbol{\alpha}})]$ has a unique minima at $\theta_{j,\boldsymbol{\alpha}}^0$ and some additional technical conditions. For the presentation brevity, here we shall use the current  assumption, which is already broad enough for practical use.

Assumptions \ref{assump-prop} and \ref{assump-delta-theta} ensure the identifiability of the model, and also keep the true parameters away from the boundaries of the parameter space.
Particularly, the assumption $\underset{J\rightarrow \infty}{\lim} \big\{\underset{\boldsymbol{\alpha}\neq \boldsymbol{\alpha}'}{\min}\  J^{-1}\|\boldsymbol{\theta}_{\boldsymbol{\alpha}}^0 -   \boldsymbol{\theta}_{ \boldsymbol{\alpha}'}^0 \|_1\big\} \geq \delta_1$ implies that  there is sufficient information to distinguish any two different classes $\boldsymbol{\alpha}$ and $\boldsymbol{\alpha}'$, thus ensuring the completeness \citep{Chiu} and identifiability conditions \citep{gu2018partial}.
It is also similar to Condition (b) in \cite{wang2015consistency}:
\medskip

\noindent { \textbf{Condition(b).} \textit{Define the set $\mathcal{A}_{m, m'}=\{j \mid \eta_{mj}\neq \eta_{m'j}\}$, where $m$ and $m'$ index the attribute profiles of different proficiency classes among all the $M=2^K$ realizable proficiency classes; $\text{Card}(\mathcal{A}_{m, m'})\rightarrow \infty$ as $J\rightarrow \infty$.}}

\noindent  {Condition (b) in \cite{wang2015consistency} and Assumption \ref{assump-delta-theta} in our work are essentially stating that for any two different proficiency classes, there are infinitely many items such that the item response functions for these two proficiency classes are different.
}

The condition \eqref{eq-theta} in Assumption \ref{assump-delta-loss} also holds for the aforementioned loss functions in Section \ref{sec-general}.
For example, it is easy to check the condition (\ref{eq-theta}) for the $L_2$ loss and the cross-entropy loss.
For the $L_2$ loss, we have
$
     E\big[l(x_{ij},\theta_{j,\boldsymbol{\alpha}}^0)\big] - E\big[l(x_{ij},\theta_{j,\boldsymbol{\alpha}_i^0}^0)\big]
=  (\theta_{j,\boldsymbol{\alpha}}^0 -\theta_{j,\boldsymbol{\alpha}_i^0}^0)^2.
$
For the cross-entropy loss, we have
\allowdisplaybreaks
\begin{align*}
    & E\big[l(x_{ij},\theta_{j,\boldsymbol{\alpha}}^0)\big] - E\big[l(x_{ij},\theta_{j,\boldsymbol{\alpha}_i^0}^0)\big]  \\
    = &
    -\theta_{j,\boldsymbol{\alpha}_0}^0 \log (\theta_{j,\boldsymbol{\alpha}}^0) - (1-\theta_{j,\boldsymbol{\alpha}_i^0}^0) \log(1 -\theta_{j,\boldsymbol{\alpha}}^0) + \theta_{j,\boldsymbol{\alpha}_i^0}^0 \log (\theta_{j,\boldsymbol{\alpha}_i^0}^0) + (1-\theta_{j,\boldsymbol{\alpha}_i^0}^0) \log (1 - \theta_{j,\boldsymbol{\alpha}_i^0}^0) \\
    = &\  D_{KL}\Big(p(\theta_{j,\boldsymbol{\alpha}}^0)\ \big|\big|\ p(\theta_{j,\boldsymbol{\alpha}_i^0}^0)\Big)
    \geq   \ \frac{1}{2}\Big(\big|\theta_{j,\boldsymbol{\alpha}}^0 - \theta_{j,\boldsymbol{\alpha}_i^0}^0 \big| + \big|(1-\theta_{j,\boldsymbol{\alpha}}^0) - (1-\theta_{j,\boldsymbol{\alpha}_i^0}^0)\big|\Big)^2\\
    = &\ 2(\theta_{j,\boldsymbol{\alpha}}^0 - \theta_{j,\boldsymbol{\alpha}_i^0}^0)^2,
\end{align*}
where $D_{KL}(\cdot \mid\mid \cdot)$ is the Kullback-Leibler divergence, $p(\cdot)$  is the mass function of a Bernoulli distribution, and the last inequality follows from Theorem 1.3 in \cite{popescu2016bounds}.

Similar to the analysis of the joint maximum likelihood estimation in \cite{chiu2016joint}, we assume that there is a calibration dataset that would give a statistically consistent estimator of the calibration subjects' latent class membership $\hat{\boldsymbol{A}}_c$, in the sense that $\mathbb{P}(\hat{\boldsymbol{A}_c} \neq \boldsymbol{A}_c^0) \to 0$ as $J\to \infty$.
We use $N_c$ and $\boldsymbol{A}_c^0$ to denote the sample size   and    the true membership matrix of the calibration dataset, respectively.
Here the subscript $c$ denotes the calibration dataset.
Similar assumption is also made in the consistency theories of the GNPC method in \cite*{chiu2019consistency}.
In the next theorem, we show that the consistent membership estimator will give consistent estimators for the centroids of the latent classes.

\begin{theorem}
\label{thm-theta}
Suppose the data conforms to CDMs that can be expressed in terms of general latent class models, and Assumptions 1-3 hold.
Further assume that
$J \exp  \big( -N_c\epsilon  \big)\to 0$ as $J,N_c\rightarrow \infty$ for any $\epsilon>0$. 
If $\hat{\boldsymbol{A}}_c$ is a consistent estimator of $\boldsymbol{A}_c^0$, then $\hat{\boldsymbol{\mu}}$ is also   consistent for $\boldsymbol{\theta}^0$ as $J,N_c \xrightarrow{}\infty$, that is, $\|\hat{\boldsymbol{\mu}}-\boldsymbol{\theta}^0\|_\infty \xrightarrow{P} 0$ as $J,N_c \xrightarrow{}\infty$, where $\|\cdot\|_\infty$ is the supremum norm.
\end{theorem}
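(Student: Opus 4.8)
The plan is to use Assumption~\ref{assump-minima} to identify the centroid estimator with a within-class sample mean, and then to reduce the claim to a uniform binomial-concentration statement on the event that the calibration memberships are recovered exactly. By the second part of Assumption~\ref{assump-minima}, given any membership assignment the loss is minimized at $\hat{\mu}_{j,\boldsymbol{\alpha}}=|C_{\boldsymbol{\alpha}}|^{-1}\sum_{i\in C_{\boldsymbol{\alpha}}}x_{ij}$; hence on the event $E=\{\hat{\boldsymbol{A}}_c=\boldsymbol{A}_c^0\}$ each fitted centroid coincides with the sample mean $\bar{x}_{j,C_{\boldsymbol{\alpha}}^0}$ of item $j$ over the \emph{true} class, an average of $|C_{\boldsymbol{\alpha}}^0|$ i.i.d.\ Bernoulli$(\theta_{j,\boldsymbol{\alpha}}^0)$ variables.

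First I would split
\begin{equation*}
\mathbb{P}\big(\|\hat{\boldsymbol{\mu}}-\boldsymbol{\theta}^0\|_\infty>\epsilon\big)\le \mathbb{P}\big(\|\hat{\boldsymbol{\mu}}-\boldsymbol{\theta}^0\|_\infty>\epsilon,\,E\big)+\mathbb{P}(E^c),
\end{equation*}
where $\mathbb{P}(E^c)\to 0$ is exactly the assumed consistency of $\hat{\boldsymbol{A}}_c$. On $E$ the estimator equals the true-class mean, so using equality on $E$ and then discarding the event,
\begin{equation*}
\mathbb{P}\big(\|\hat{\boldsymbol{\mu}}-\boldsymbol{\theta}^0\|_\infty>\epsilon,\,E\big)\le \sum_{j=1}^{J}\sum_{\boldsymbol{\alpha}\in\{0,1\}^K}\mathbb{P}\big(|\bar{x}_{j,C_{\boldsymbol{\alpha}}^0}-\theta_{j,\boldsymbol{\alpha}}^0|>\epsilon\big).
\end{equation*}
Each summand is a deviation of an average of independent Bernoulli variables over a fixed (true) index set, so Hoeffding's inequality gives the bound $2\exp(-2|C_{\boldsymbol{\alpha}}^0|\epsilon^2)$. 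Since $K$ is fixed, the double sum carries only a constant factor $2^K$ beyond the factor $J$.

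To close the argument I need the minimal class size to grow with $N_c$; assuming (as is standard, and consistent with the non-degeneracy in Assumptions~\ref{assump-prop}--\ref{assump-delta-theta}) that the true proportions are bounded away from zero, we have $\min_{\boldsymbol{\alpha}}|C_{\boldsymbol{\alpha}}^0|\ge c\,N_c$ for some $c>0$, whence the right-hand side is at most $2^{K+1}\,J\,\exp(-2c\epsilon^2 N_c)$. This vanishes precisely under the stated rate condition $J\exp(-N_c\epsilon)\to0$, completing the proof.

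The main obstacle is the tension between letting $J\to\infty$ and controlling the supremum over all $J$ items simultaneously: a naive union bound pays a factor of $J$, and only the exponential concentration in the calibration size $N_c$ can absorb it---this is exactly what the hypothesis $J\exp(-N_c\epsilon)\to0$ encodes. A secondary technical point is ensuring that no relevant latent class is asymptotically depleted, i.e.\ that $\min_{\boldsymbol{\alpha}}|C_{\boldsymbol{\alpha}}^0|$ grows linearly in $N_c$; Assumption~\ref{assump-minima} fixes the \emph{form} of the estimator but says nothing about class sizes, so this must be supplied separately, either through a lower bound on the true proportions or, if the patterns are themselves sampled, via an additional binomial tail bound on the class counts that is absorbed into the union bound.
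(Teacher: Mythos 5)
Your proposal is correct and follows essentially the same route as the paper's proof: condition on the exact-recovery event $\{\hat{\boldsymbol{A}}_c=\boldsymbol{A}_c^0\}$, identify $\hat{\mu}_{j,\boldsymbol{\alpha}}$ with the true-class sample mean via Assumption 1, apply Hoeffding's inequality with a union bound over the $J$ items (and finitely many classes), and absorb the factor $J$ using the rate condition $J\exp(-N_c\epsilon)\to 0$. The class-size issue you flag at the end is handled in the paper exactly as you anticipate, by noting $|C_{\boldsymbol{\alpha}}|/N_c\to\pi_{\boldsymbol{\alpha}}$ almost surely so that $|C_{\boldsymbol{\alpha}}|=(1+o(1))N_c\pi_{\boldsymbol{\alpha}}$.
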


Theorem \ref{thm-theta} states that if we could get a consistent estimate of the calibration subjects' membership $\hat{\boldsymbol{A}}_c$, then the estimated centroids $\hat{\boldsymbol{\mu}}$ are also consistent for the true item  response probabilities $\boldsymbol{\theta}^0$ in a uniform sense that all item parameters can be uniformly consistently estimated.
The detailed proof is in online Appendix \ref{appendix-thm1}.
 {This result is similar to Lemmas 1 and 2 in \cite{chiu2019consistency} under the GNPC framework, and Theorem 2 in \cite{chiu2016joint} under the JMLE framework.}
Note that for the GNPC method, the centroids are weighted averages of the ideal responses from the DINA and DINO models.
As discussed in Example \ref{ex-GNPC}, if the DINA and DINO models have the same ideal responses (i.e., $\boldsymbol{\alpha} \succeq \boldsymbol{q}_j$ or $\boldsymbol{\alpha}  \odot \boldsymbol{q}_j={\mathbf 0}$), then the corresponding centroid will be fixed to be 0 or 1, which thus does not lead to a consistent estimation of the corresponding item response probability $\theta^0_{j,\boldsymbol{\alpha}}$; 
 however, note that for the nonparametric GNPC method, such a fixed centroid  does not necessarily lead to inconsistency of $\hat{\boldsymbol{\alpha}}$.
Here we allow all the centroid parameters to be free, and the consistency estimation is ensured as in Theorem~\ref{thm-theta}.

The next theorem shows that if we start with a consistent membership $\hat{\boldsymbol{A}}_c$ obtained from the calibration dataset,
and use the estimated centroids to classify the subjects, then the resulting classifications are also consistent for each subject.

\begin{theorem}
Suppose Assumptions 1--4 and the assumptions of Theorem \ref{thm-theta} hold.
If we start with a consistent $\hat{\boldsymbol{A}}_c$ obtained from a calibration dataset to estimate the centroid $\hat{\boldsymbol{\mu}}$,
then 
$\hat{\boldsymbol{\alpha}}_i$ obtained by Algorithm \ref{algo1} is also a consistent estimator of $\boldsymbol{\alpha}_i$ for each $i=1,\dots,N$.
\label{thm-alpha}
\end{theorem}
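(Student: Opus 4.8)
The plan is to show that, for each fixed subject $i$, the assignment rule in Step 2 of Algorithm \ref{algo1} places $\boldsymbol{x}_i$ in its true class $\boldsymbol{\alpha}_i^0$ with probability tending to one. The centroids $\hat{\boldsymbol{\mu}}$ and proportions $\hat{\boldsymbol{\pi}}$ are computed from the calibration dataset, so they are independent of the fresh response vector $\boldsymbol{x}_i$; I would exploit this by conditioning on $\hat{\boldsymbol{\mu}}$ and treating it as a fixed vector close to $\boldsymbol{\theta}^0$. By Theorem \ref{thm-theta}, the event $\mathcal{E}=\{\|\hat{\boldsymbol{\mu}}-\boldsymbol{\theta}^0\|_\infty<\epsilon\}$ has probability tending to one, and on $\mathcal{E}$ all estimated centroids, together with all true parameters (bounded away from $0$ and $1$ by $\delta_2$ in Assumption \ref{assump-delta-theta}), lie in some interval $[\tau,1-\tau]$ on which the H\"older continuity of Assumption \ref{assump-minima} applies.

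The core of the argument is a positive per-item lower bound on the loss gap between any competing class $\boldsymbol{\alpha}\neq\boldsymbol{\alpha}_i^0$ and the true class. First I would replace $\hat{\boldsymbol{\mu}}$ by $\boldsymbol{\theta}^0$ in the objective: H\"older continuity gives $J^{-1}\sum_j|l(x_{ij},\hat\mu_{j,\boldsymbol{\alpha}})-l(x_{ij},\theta_{j,\boldsymbol{\alpha}}^0)|\le L\|\hat{\boldsymbol{\mu}}-\boldsymbol{\theta}^0\|_\infty^{\gamma}$, which is negligible on $\mathcal{E}$. Taking expectations over $x_{ij}\sim\mathrm{Bernoulli}(\theta_{j,\boldsymbol{\alpha}_i^0}^0)$, the explicit computations preceding the theorem (for the $L_2$ and cross-entropy losses) show $E[l(x_{ij},\theta_{j,\boldsymbol{\alpha}}^0)]-E[l(x_{ij},\theta_{j,\boldsymbol{\alpha}_i^0}^0)]\ge|\theta_{j,\boldsymbol{\alpha}}^0-\theta_{j,\boldsymbol{\alpha}_i^0}^0|^{\delta}\ge 0$ by Assumption \ref{assump-delta-loss}, so the true class has the smaller expected loss. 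Averaging over items and using the convexity of $x\mapsto x^{\delta}$ for $\delta\ge1$ together with Jensen's inequality and the separation in Assumption \ref{assump-delta-theta} yields, for $J$ large,
\[
\frac{1}{J}\sum_{j=1}^J\Big(E[l(x_{ij},\theta_{j,\boldsymbol{\alpha}}^0)]-E[l(x_{ij},\theta_{j,\boldsymbol{\alpha}_i^0}^0)]\Big)\ \ge\ \Big(\frac{1}{J}\sum_{j=1}^J\big|\theta_{j,\boldsymbol{\alpha}}^0-\theta_{j,\boldsymbol{\alpha}_i^0}^0\big|\Big)^{\delta}\ \ge\ \delta_1^{\delta}.
\]

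To pass from expected to realized losses I would apply Hoeffding's inequality to the summands $l(x_{ij},\theta_{j,\boldsymbol{\alpha}}^0)-l(x_{ij},\theta_{j,\boldsymbol{\alpha}_i^0}^0)$, which are independent by local independence and bounded because the parameters stay in $[\delta_2,1-\delta_2]$; this shows the empirical average falls below $\delta_1^{\delta}/2$ only on an event of probability $\exp(-cJ)$. The regularization difference $h(\hat\pi_{\boldsymbol{\alpha}})-h(\hat\pi_{\boldsymbol{\alpha}_i^0})$ is $O(1)$ by continuity of $h$ (Assumption \ref{assump-prop}), hence negligible against the $\Theta(J)$ loss gap after dividing by $J$. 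Combining these facts, on $\mathcal{E}$ the total objective at $\boldsymbol{\alpha}_i^0$ is strictly smaller than at any $\boldsymbol{\alpha}\neq\boldsymbol{\alpha}_i^0$ except with probability $\exp(-cJ)$, and a union bound over the finitely many ($2^K-1$) competing classes gives $\mathbb{P}(\hat{\boldsymbol{\alpha}}_i\neq\boldsymbol{\alpha}_i^0)\to0$.

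The main obstacle I anticipate is cleanly separating and then jointly controlling the two sources of error, namely the centroid estimation error inherited from the calibration step (handled by Theorem \ref{thm-theta} and H\"older continuity) and the sampling fluctuation of $\boldsymbol{x}_i$ (handled by Hoeffding), so that the deterministic per-item separation $\delta_1^{\delta}$ strictly dominates both while the boundedness and H\"older modulus of the loss hold uniformly over items confined to $[\tau,1-\tau]$. The Jensen step that converts the $L_1$ separation of Assumption \ref{assump-delta-theta} into a lower bound on the average of $\delta$-th powers is the key device making the separation usable for general $\delta\ge1$.
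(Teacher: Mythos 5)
Your proposal is correct and follows essentially the same route as the paper's proof, which likewise combines Theorem \ref{thm-theta} with a H\"older-continuity replacement of $\hat{\boldsymbol{\mu}}$ by $\boldsymbol{\theta}^0$ on the event that all centroids stay in $[\tau,1-\tau]$, a Hoeffding concentration of realized losses around expected losses (the content of Lemmas \ref{lemma-1} and \ref{lemma-2}), the negligibility of $h(\hat{\pi}_{\boldsymbol{\alpha}})/J$, and a union bound over the $2^K-1$ competing classes. The only substantive difference is your use of Jensen's inequality (convexity of $x\mapsto x^{\delta}$ for $\delta\geq 1$) to turn the $L_1$ separation of Assumption \ref{assump-delta-theta} into the lower bound $\delta_1^{\delta}$ on $J^{-1}\sum_{j}\big|\theta^0_{j,\boldsymbol{\alpha}}-\theta^0_{j,\boldsymbol{\alpha}_i^0}\big|^{\delta}$, whereas the paper instead counts a proportion $c_1$ of items with per-item gap at least $c_0$ to obtain the constant $c_1c_0^{\delta}$; both devices are valid and yours is arguably cleaner.
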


\noindent To establish the consistency in Theorem \ref{thm-alpha}, the following two lemmas are needed.

\begin{lemma}\label{lemma-1}
Suppose Assumptions in Theorem \ref{thm-alpha} hold.
For each subject $i$, the true attribute pattern  minimizes $E[l(\boldsymbol{x}_i,\hat{\boldsymbol{\mu}}_{\boldsymbol{\alpha}}) + h(\hat{\pi}_{\boldsymbol{\alpha}})]$ with probability approaching 1 as $J\xrightarrow{}\infty$, as in,
\begin{equation*}
    P \Big(
    \boldsymbol{\alpha}_i^0 = \underset{\boldsymbol{\alpha}}{\arg\min}
    \ E\big[ l(\boldsymbol{x}_i, \hat{\boldsymbol{\mu}}_{\boldsymbol{\alpha}}) +h(\hat{\pi}_{\boldsymbol{\alpha}})
    \big]
    \Big) \xrightarrow{} 1
    \quad \text{ as } J \xrightarrow{} \infty.
\end{equation*}
\end{lemma}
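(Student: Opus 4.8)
The plan is to show that the population objective
$$G(\boldsymbol{\alpha}) := E\big[l(\boldsymbol{x}_i,\hat{\boldsymbol{\mu}}_{\boldsymbol{\alpha}})\big] + h(\hat{\pi}_{\boldsymbol{\alpha}}) = \sum_{j=1}^J E\big[l(x_{ij},\hat\mu_{j,\boldsymbol{\alpha}})\big] + h(\hat\pi_{\boldsymbol{\alpha}}),$$
where the expectation is taken over $\boldsymbol{x}_i$ generated from its true class $\boldsymbol{\alpha}_i^0$ with $\hat{\boldsymbol{\mu}},\hat{\boldsymbol{\pi}}$ held fixed from the calibration sample, is strictly minimized at $\boldsymbol{\alpha}=\boldsymbol{\alpha}_i^0$ with probability tending to one. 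Since $K$ is fixed, there are only $2^K-1$ competitors $\boldsymbol{\alpha}\neq\boldsymbol{\alpha}_i^0$, so it suffices to establish $G(\boldsymbol{\alpha})-G(\boldsymbol{\alpha}_i^0)>0$ with high probability for each fixed competitor and then take a union bound over this finite set. I would decompose the gap into a population ``signal'' term evaluated at the true centroids, an estimation-error term from replacing $\boldsymbol{\theta}^0$ by $\hat{\boldsymbol{\mu}}$, and the proportion term $h(\hat\pi_{\boldsymbol{\alpha}})-h(\hat\pi_{\boldsymbol{\alpha}_i^0})$.

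For the signal term, I would first note that because $\boldsymbol{x}_i$ is generated from class $\boldsymbol{\alpha}_i^0$, the map $\mu\mapsto E[l(x_{ij},\mu)]$ is minimized at $\mu=\theta^0_{j,\boldsymbol{\alpha}_i^0}$ (as verified for the $L_2$ and cross-entropy losses in Section~\ref{sec-general}), so each summand $E[l(x_{ij},\theta^0_{j,\boldsymbol{\alpha}})]-E[l(x_{ij},\theta^0_{j,\boldsymbol{\alpha}_i^0})]$ is nonnegative, which pins down the sign needed to strip the absolute value in Assumption~\ref{assump-delta-loss} and lower-bounds the summand by $|\theta^0_{j,\boldsymbol{\alpha}}-\theta^0_{j,\boldsymbol{\alpha}_i^0}|^{\delta}$. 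Writing $a_j=|\theta^0_{j,\boldsymbol{\alpha}}-\theta^0_{j,\boldsymbol{\alpha}_i^0}|\in[0,1]$ and invoking convexity of $t\mapsto t^{\delta}$ for $\delta\geq 1$, Jensen's inequality yields $\sum_j a_j^{\delta}\geq J\,(J^{-1}\sum_j a_j)^{\delta}$. By Assumption~\ref{assump-delta-theta}, $J^{-1}\sum_j a_j=J^{-1}\|\boldsymbol{\theta}^0_{\boldsymbol{\alpha}}-\boldsymbol{\theta}^0_{\boldsymbol{\alpha}_i^0}\|_1\geq \delta_1/2$ for all large $J$, so the signal term is at least $J(\delta_1/2)^{\delta}$, growing linearly in $J$ with a strictly positive coefficient.

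For the estimation-error term, Assumption~\ref{assump-delta-theta} keeps $\theta^0_{j,\boldsymbol{\alpha}}\in[\delta_2,1-\delta_2]$, and Theorem~\ref{thm-theta} gives $\|\hat{\boldsymbol{\mu}}-\boldsymbol{\theta}^0\|_\infty\xrightarrow{P}0$, so with high probability every $\hat\mu_{j,\boldsymbol{\alpha}}$ also lies in $[\delta_2/2,1-\delta_2/2]$; the H\"older continuity in Assumption~\ref{assump-minima} (with exponent $\gamma$ and constant $C$ on this interval) then bounds each $|E[l(x_{ij},\hat\mu_{j,\boldsymbol{\alpha}})]-E[l(x_{ij},\theta^0_{j,\boldsymbol{\alpha}})]|$ by $C\|\hat{\boldsymbol{\mu}}-\boldsymbol{\theta}^0\|_\infty^{\gamma}$, making the total error across the two indices at most $2JC\|\hat{\boldsymbol{\mu}}-\boldsymbol{\theta}^0\|_\infty^{\gamma}=o_P(J)$. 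The proportion term $h(\hat\pi_{\boldsymbol{\alpha}})-h(\hat\pi_{\boldsymbol{\alpha}_i^0})$ is $O_P(1)$ by continuity of $h$ (Assumption~\ref{assump-prop}) together with the boundedness of the proportions, hence $o(J)$. Combining the three pieces gives $G(\boldsymbol{\alpha})-G(\boldsymbol{\alpha}_i^0)\geq J(\delta_1/2)^{\delta}-2JC\|\hat{\boldsymbol{\mu}}-\boldsymbol{\theta}^0\|_\infty^{\gamma}-O_P(1)$, which is strictly positive with high probability.

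The main obstacle is that the signal and the estimation error are both of order $J$, so the argument cannot rely on a clean separation of scales; instead it hinges on the error coefficient $C\|\hat{\boldsymbol{\mu}}-\boldsymbol{\theta}^0\|_\infty^{\gamma}$ shrinking below the fixed positive signal coefficient $(\delta_1/2)^{\delta}$, which is exactly what the uniform consistency of Theorem~\ref{thm-theta} supplies. The secondary care points will be verifying that $\hat{\boldsymbol{\mu}}$ remains inside the interval on which H\"older continuity is assumed (handled by consistency plus Assumption~\ref{assump-delta-theta}) and confirming the nonnegativity of the per-coordinate population gap so that Assumption~\ref{assump-delta-loss} applies without its absolute value.
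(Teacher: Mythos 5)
Your proof is correct and follows essentially the same route as the paper's: the same three-part decomposition into a population signal term at the true centroids, estimation-error terms controlled by H\"older continuity together with the uniform consistency of $\hat{\boldsymbol{\mu}}$ from Theorem \ref{thm-theta}, and a proportion term that is negligible relative to $J$. The only substantive difference is cosmetic: you lower-bound $\sum_j |\theta^0_{j,\boldsymbol{\alpha}}-\theta^0_{j,\boldsymbol{\alpha}_i^0}|^{\delta}$ via Jensen's inequality applied to $t\mapsto t^{\delta}$, whereas the paper counts the (order-$J$ many) items with per-item gap at least $c_0$ to get the bound $c_1 c_0^{\delta} J$; both yield the required linear-in-$J$ signal, and your explicit check that the per-item population gap is nonnegative before dropping the absolute value in Assumption \ref{assump-delta-loss} is a point the paper leaves implicit.
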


\begin{lemma}\label{lemma-2}
Suppose Assumptions in Theorem \ref{thm-alpha} hold,
then we have
\begin{equation*}
    P\Big(\underset{\boldsymbol{\alpha}}{\max} \Big|
    \frac{1}{J} \sum_{j=1}^J \big(
    l(x_{ij},\hat{\mu}_{j,\boldsymbol{\alpha}})
    - E[l(x_{ij},\theta_{j,\boldsymbol{\alpha}}^0)]
    \big) \Big| \geq \epsilon  \Big) \xrightarrow{} 0,
    \text{ as } J\xrightarrow{}\infty.
\end{equation*}
\end{lemma}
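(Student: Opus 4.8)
The plan is to prove this uniform (over the finitely many $\boldsymbol{\alpha}\in\{0,1\}^K$) law of large numbers by decoupling the randomness of the calibration-based estimate $\hat{\boldsymbol{\mu}}$ from the fluctuations of subject $i$'s responses $x_{i1},\dots,x_{iJ}$. For each fixed $\boldsymbol{\alpha}$ I would split
$$\frac{1}{J}\sum_{j=1}^J\big(l(x_{ij},\hat{\mu}_{j,\boldsymbol{\alpha}})-E[l(x_{ij},\theta_{j,\boldsymbol{\alpha}}^0)]\big)=(\mathrm{I})_{\boldsymbol{\alpha}}+(\mathrm{II})_{\boldsymbol{\alpha}},$$
where $(\mathrm{I})_{\boldsymbol{\alpha}}=J^{-1}\sum_{j=1}^J\big(l(x_{ij},\hat{\mu}_{j,\boldsymbol{\alpha}})-l(x_{ij},\theta_{j,\boldsymbol{\alpha}}^0)\big)$ is an estimation-error term and $(\mathrm{II})_{\boldsymbol{\alpha}}=J^{-1}\sum_{j=1}^J\big(l(x_{ij},\theta_{j,\boldsymbol{\alpha}}^0)-E[l(x_{ij},\theta_{j,\boldsymbol{\alpha}}^0)]\big)$ is a centered fluctuation term. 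Because there are only $2^K$ values of $\boldsymbol{\alpha}$, a constant not growing with $J$, a union bound reduces the $\max_{\boldsymbol{\alpha}}$ statement to controlling each $\boldsymbol{\alpha}$ separately.

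For $(\mathrm{I})_{\boldsymbol{\alpha}}$ I would combine the H\"older continuity in Assumption \ref{assump-minima} with the uniform consistency $\|\hat{\boldsymbol{\mu}}-\boldsymbol{\theta}^0\|_\infty\xrightarrow{P}0$ from Theorem \ref{thm-theta}. Taking $\tau=\delta_2/2$ with $\delta_2$ as in Assumption \ref{assump-delta-theta}, on the high-probability event $\{\|\hat{\boldsymbol{\mu}}-\boldsymbol{\theta}^0\|_\infty<\delta_2/2\}$ every $\theta_{j,\boldsymbol{\alpha}}^0$ and $\hat{\mu}_{j,\boldsymbol{\alpha}}$ lies in $[\tau,1-\tau]$, so there exist a constant $C$ and an exponent $\gamma\in(0,1]$, uniform over $x\in\{0,1\}$, $j$, and $\boldsymbol{\alpha}$ since the same $l$ is evaluated on one fixed compact interval, with $|l(x_{ij},\hat{\mu}_{j,\boldsymbol{\alpha}})-l(x_{ij},\theta_{j,\boldsymbol{\alpha}}^0)|\le C|\hat{\mu}_{j,\boldsymbol{\alpha}}-\theta_{j,\boldsymbol{\alpha}}^0|^{\gamma}$. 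Averaging over $j$ then gives $|(\mathrm{I})_{\boldsymbol{\alpha}}|\le C\|\hat{\boldsymbol{\mu}}-\boldsymbol{\theta}^0\|_\infty^{\gamma}$ simultaneously for all $\boldsymbol{\alpha}$, which tends to $0$ in probability by Theorem \ref{thm-theta}.

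For $(\mathrm{II})_{\boldsymbol{\alpha}}$ I would use that, conditional on the true pattern $\boldsymbol{\alpha}_i^0$, the responses $x_{i1},\dots,x_{iJ}$ are independent by local independence, so the summands $l(x_{ij},\theta_{j,\boldsymbol{\alpha}}^0)$ are independent; moreover Assumption \ref{assump-delta-theta} confines every $\theta_{j,\boldsymbol{\alpha}}^0$ to $[\delta_2,1-\delta_2]$, on which the continuous loss is bounded, so each centered summand is bounded by a fixed constant. Hence $(\mathrm{II})_{\boldsymbol{\alpha}}$ is a centered average of independent, uniformly bounded variables, and Hoeffding's inequality yields $P(|(\mathrm{II})_{\boldsymbol{\alpha}}|\ge\epsilon)\le 2\exp(-cJ\epsilon^2)\to 0$ (Chebyshev's inequality, giving an $O(1/J)$ variance bound, would already suffice). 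Since $\hat{\boldsymbol{\mu}}$ is computed from the calibration data and is therefore independent of $\boldsymbol{x}_i$, the randomness entering $(\mathrm{I})_{\boldsymbol{\alpha}}$ does not disturb this concentration bound. Combining the two bounds with the union bound over $\{0,1\}^K$ completes the argument.

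The main obstacle is the coupling in $(\mathrm{I})_{\boldsymbol{\alpha}}$: since $\hat{\mu}_{j,\boldsymbol{\alpha}}$ is data-dependent and the sum runs over the growing index $j$, one cannot apply a law of large numbers directly to $l(x_{ij},\hat{\mu}_{j,\boldsymbol{\alpha}})$. The resolution is precisely the decoupling above, using the H\"older modulus to replace the random $\hat{\mu}_{j,\boldsymbol{\alpha}}$ by the deterministic $\theta_{j,\boldsymbol{\alpha}}^0$ and thereby collapsing the uniform-in-$j$ estimation error into the single supremum-norm quantity $\|\hat{\boldsymbol{\mu}}-\boldsymbol{\theta}^0\|_\infty$ already controlled by Theorem \ref{thm-theta}. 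A minor point to verify is that the H\"older constant and exponent can be chosen independently of $j$ and $\boldsymbol{\alpha}$, which holds because all relevant arguments remain in the fixed compact interval $[\tau,1-\tau]$.
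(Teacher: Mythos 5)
Your proposal is correct and follows essentially the same route as the paper: the same split into an estimation-error term controlled by H\"older continuity of the loss on the interval $[\delta_2/2,\,1-\delta_2/2]$ and a centered fluctuation term controlled by Hoeffding's inequality plus a union bound over the $2^K$ patterns. The only difference is presentational — the paper re-derives explicit exponential tail bounds for the estimation-error term by conditioning on $\hat{\boldsymbol{A}}_c = \boldsymbol{A}_c^0$ and applying Hoeffding to the class means (bounds it later reuses for uniformity over subjects in Theorem \ref{thm-uniform}), whereas you invoke the supremum-norm consistency from Theorem \ref{thm-theta} directly; both are valid.
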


\noindent Lemma \ref{lemma-1} extends  Proposition 1 in \cite{wang2015consistency} and Lemma 3 in \cite{chiu2019consistency} to more general loss functions.
Lemma \ref{lemma-2} generalizes Proposition 3 in \cite{wang2015consistency} and Lemma 4 in \cite{chiu2019consistency}.
The detailed proofs of Lemma \ref{lemma-1}, Lemma \ref{lemma-2}, and Theorem \ref{thm-alpha} are given in online Appendices \ref{appendix-lemma1} -- \ref{appendix-thm2}.
Note that Theorem \ref{thm-alpha} only gives the consistency for each $\boldsymbol{\alpha}_i$; however, we can further establish uniform consistency for all $\boldsymbol{\alpha}_i$, $i = 1,\dots,N$, as shown in Theorem \ref{thm-uniform}.

\begin{theorem}
\label{thm-uniform}
Suppose all the assumptions of Theorem \ref{thm-alpha} hold.
Further assume that $N > J$,  {$N_c > J$} and for any $\epsilon > 0$,
 $N\exp({-J\epsilon})\xrightarrow{} 0$. 
 If we start with a consistent  $\hat{\boldsymbol{A}}_c$ obtained from a calibration dataset, then $\hat{\boldsymbol{\alpha}}_i$ obtained from Algorithm \ref{algo1} is uniformly consistent for $\boldsymbol{\alpha}_i$, for all $i=1,\dots,N$.
\end{theorem}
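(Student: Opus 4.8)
The plan is to upgrade the pointwise consistency of Theorem~\ref{thm-alpha} to a statement holding simultaneously over all $N$ subjects. The key realization is that pointwise consistency together with a union bound suffices, provided the single-subject misclassification probability decays \emph{exponentially} in $J$; the added rate condition $N\exp(-J\epsilon)\to 0$ is exactly what absorbs the union bound over the $N$ subjects. First I would condition on the high-probability event $\mathcal{E}=\{\|\hat{\boldsymbol{\mu}}-\boldsymbol{\theta}^0\|_\infty\le \tau_0\}$ for a small $\tau_0>0$. By Theorem~\ref{thm-theta}, $P(\mathcal{E}^c)\to 0$, and on $\mathcal{E}$ every estimated centroid lies in $[\tau,1-\tau]$, so the H\"older continuity in Assumption~\ref{assump-minima} applies. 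Since $\hat{\boldsymbol{\mu}}$ and $\hat{\boldsymbol{\pi}}$ are functions of the calibration data alone, conditional on them the classification of subject $i$ depends only on $\boldsymbol{x}_i$, which is independent of the calibration sample; this independence lets me treat each subject with a concentration inequality in the randomness of $\boldsymbol{x}_i$.

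Next, for a fixed subject $i$ the event $\{\hat{\boldsymbol{\alpha}}_i\neq\boldsymbol{\alpha}_i^0\}$ forces some $\boldsymbol{\alpha}\neq\boldsymbol{\alpha}_i^0$ to satisfy
\begin{equation*}
	\tfrac{1}{J}\bigl[l(\boldsymbol{x}_i,\hat{\boldsymbol{\mu}}_{\boldsymbol{\alpha}})+h(\hat{\pi}_{\boldsymbol{\alpha}})\bigr]\le \tfrac{1}{J}\bigl[l(\boldsymbol{x}_i,\hat{\boldsymbol{\mu}}_{\boldsymbol{\alpha}_i^0})+h(\hat{\pi}_{\boldsymbol{\alpha}_i^0})\bigr].
\end{equation*}
By the argument behind Lemma~\ref{lemma-1}, Assumptions~\ref{assump-delta-theta}--\ref{assump-delta-loss} give a uniform lower bound $c_0\delta_1>0$ on the expected per-item loss gap between $\boldsymbol{\alpha}_i^0$ and any competitor, while $h/J\to 0$ and the H\"older correction from replacing $\hat{\boldsymbol{\mu}}$ by $\boldsymbol{\theta}^0$ are negligible on $\mathcal{E}$. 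Thus misclassification forces the empirical loss to deviate from its expectation by at least a fixed constant $\kappa>0$. Applying the Hoeffding bound exactly as in Lemma~\ref{lemma-2}---valid because $\boldsymbol{x}_i$ is binary and the loss is bounded on $[\tau,1-\tau]$---and taking a union over the $2^K$ classes yields $P(\hat{\boldsymbol{\alpha}}_i\neq\boldsymbol{\alpha}_i^0,\ \mathcal{E})\le 2^{K+1}\exp(-cJ\kappa^2)$ for a constant $c>0$ not depending on $i$.

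Finally, a union bound over all subjects gives
\begin{equation*}
	P\bigl(\exists\, i:\hat{\boldsymbol{\alpha}}_i\neq\boldsymbol{\alpha}_i^0\bigr)\le P(\mathcal{E}^c)+N\cdot 2^{K+1}\exp(-cJ\kappa^2),
\end{equation*}
and since $P(\mathcal{E}^c)\to 0$ by Theorem~\ref{thm-theta} and $N\exp(-J\epsilon)\to 0$ by assumption, the right-hand side vanishes, establishing uniform consistency.

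The hardest part will be ensuring that the rate constant $\kappa$ is genuinely uniform across all $i$ and all competing $\boldsymbol{\alpha}$: this requires that the expected-loss gap from Lemma~\ref{lemma-1} be bounded below uniformly (guaranteed by $\delta_1$ in Assumption~\ref{assump-delta-theta} together with Assumption~\ref{assump-delta-loss}) and that the H\"older-continuity correction incurred by using $\hat{\boldsymbol{\mu}}$ rather than $\boldsymbol{\theta}^0$ not erode this gap. This ties the calibration-sample rate in Theorem~\ref{thm-theta} to the item-count rate $N\exp(-J\epsilon)\to 0$, and care is needed so that the single exponential constant $c$ survives both the union over the $2^K$ latent classes and the union over the $N$ subjects.
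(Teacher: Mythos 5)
Your proof is correct and follows the same skeleton as the paper's: the expected-loss gap from Lemma \ref{lemma-1}, Hoeffding concentration in $J$ as in Lemma \ref{lemma-2}, and a union bound over the $N$ subjects absorbed by the assumption $N\exp(-J\epsilon)\to 0$. The one genuine difference is in how the calibration-estimation error is booked. You condition once on the global event $\mathcal{E}=\{\|\hat{\boldsymbol{\mu}}-\boldsymbol{\theta}^0\|_\infty\le\tau_0\}$ and charge $P(\mathcal{E}^c)$ a single time outside the union bound, so only the per-subject Hoeffding term $2^{K+1}\exp(-cJ\kappa^2)$ gets multiplied by $N$. The paper instead conditions on $\hat{\boldsymbol{A}}_c=\boldsymbol{A}_c^0$ and carries the centroid-concentration terms of order $J\exp(-c|C_{\boldsymbol{\alpha}}|)$ through the union over $i$, producing terms of order $N^2\exp(-cN_c)$ that it then controls via the extra hypotheses $N>J$ and $N_c>J$, writing $N^2\exp(-cJ)=\big[N\exp(-cJ/2)\big]^2\to 0$. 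Your bookkeeping is tighter: it reaches the same conclusion without actually invoking $N>J$ or $N_c>J$, since $P(\mathcal{E}^c)\to 0$ already follows from the hypotheses of Theorem \ref{thm-theta}. The points requiring care, which you correctly flag, are that the Hoeffding bound must be applied conditionally on the calibration data (legitimate because $\boldsymbol{x}_i$ is independent of $\hat{\boldsymbol{\mu}}$ and the loss is bounded on $[\tau,1-\tau]$ under Assumption \ref{assump-minima}), and that the H\"older correction incurred by replacing $\hat{\boldsymbol{\mu}}$ with $\boldsymbol{\theta}^0$ must be taken smaller than a fixed fraction of the expected-loss gap so that the deviation threshold $\kappa$ is uniform over $i$ and over the $2^K$ competing classes.
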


Uniform consistency has also been established for specific nonparametric methods,
such as Theorem 2 in \cite{wang2015consistency} and Theorem 2 in \cite{chiu2019consistency}.
Our uniformly consistent result in Theorem \ref{thm-uniform} can be regarded as their generalization.
Specifically, in \cite{wang2015consistency}, they showed the uniform consistency for the NPC method, where the loss function is taken to be $L_1$ loss and the centroids are fixed, to be the ideal responses of the DINA or DINO model.
In \cite{chiu2019consistency}, the authors generalize the uniform consistency for the NPC method to the GNPC method, where the loss function is $L_2$ loss and the centroids are weighted averages of ideal responses from the DINA and the DINO models.

The above analysis pertains the asymptotic properties of our framework.
For finite-sample situations, we have the following theoretical properties for the proposed iterative algorithms in Section \ref{sec-general},
which are established following the theory in \cite{celeux1992classification}.

\begin{proposition}
Any sequence $(\boldsymbol{A}^{(t)},\boldsymbol{\mu}^{(t)},\boldsymbol{\pi}^{(t)})$ obtained by Algorithm \ref{algo1} 
decreases the criterion (\ref{eq-C2}) and the sequence $L(\boldsymbol{A}^{(t)},\boldsymbol{\mu}^{(t)},\boldsymbol{\pi}^{(t)})$ converges to a stationary value.
Moreover, if for any fixed $\boldsymbol{A}$, the minima of the loss function $L(\boldsymbol{A} ,\boldsymbol{\mu},\boldsymbol{\pi})$ is well-defined, then the sequence $(\boldsymbol{A}^{(t)},\boldsymbol{\mu}^{(t)},\boldsymbol{\pi}^{(t)})$ also converges to a stationary point.
\label{prop-finite-1}
\end{proposition}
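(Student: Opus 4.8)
The plan is to read Algorithm \ref{algo1} as a two-block coordinate descent on the criterion \eqref{eq-C2} and to exploit the finiteness of the space of membership matrices. Write $L^{(t)} = L(\boldsymbol{A}^{(t)},\boldsymbol{\mu}^{(t)},\boldsymbol{\pi}^{(t)})$ for the criterion value after the M-step of iteration $t$. First I would establish the monotone decrease. Steps 1--2 (the classification step) fix $(\boldsymbol{\mu}^{(t)},\boldsymbol{\pi}^{(t)})$ and assign each subject to the class minimizing its own contribution $l(\boldsymbol{x}_i,\boldsymbol{\mu}_{\boldsymbol{\alpha}}^{(t)})+h(\pi_{\boldsymbol{\alpha}}^{(t)})$; since \eqref{eq-C2} is a sum of such per-subject terms, this produces $\boldsymbol{A}^{(t+1)}$ with $L(\boldsymbol{A}^{(t+1)},\boldsymbol{\mu}^{(t)},\boldsymbol{\pi}^{(t)}) \le L(\boldsymbol{A}^{(t)},\boldsymbol{\mu}^{(t)},\boldsymbol{\pi}^{(t)}) = L^{(t)}$. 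Step 3 (the M-step) then fixes $\boldsymbol{A}^{(t+1)}$ and minimizes over $(\boldsymbol{\mu},\boldsymbol{\pi})$, so $L^{(t+1)} \le L(\boldsymbol{A}^{(t+1)},\boldsymbol{\mu}^{(t)},\boldsymbol{\pi}^{(t)}) \le L^{(t)}$. Chaining the two inequalities gives the claimed non-increase of the criterion.

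For convergence of the criterion value, I would introduce the profiled loss $L^{\star}(\boldsymbol{A}) := \min_{\boldsymbol{\mu},\boldsymbol{\pi}} L(\boldsymbol{A},\boldsymbol{\mu},\boldsymbol{\pi})$, which is well defined because the M-step attains its minimum. Since each subject is assigned to one of $2^K$ classes, the membership matrix ranges over the finite set $(\{0,1\}^K)^N$, so $L^{\star}$ takes only finitely many values. After the M-step one has $L^{(t)} = L^{\star}(\boldsymbol{A}^{(t)})$, hence $(L^{(t)})_t$ is a non-increasing sequence confined to a finite set; such a sequence is eventually constant and therefore converges to a stationary value $\bar{L}$. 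This settles the first assertion.

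For the ``moreover'' part I would invoke the hypothesis that $\min_{\boldsymbol{\mu},\boldsymbol{\pi}} L(\boldsymbol{A},\boldsymbol{\mu},\boldsymbol{\pi})$ is uniquely attained for each fixed $\boldsymbol{A}$. Let $t_0$ be such that $L^{(t)} = \bar{L}$ for all $t \ge t_0$. For such $t$ every inequality in the monotonicity chain is an equality; in particular $L(\boldsymbol{A}^{(t+1)},\boldsymbol{\mu}^{(t)},\boldsymbol{\pi}^{(t)}) = L^{\star}(\boldsymbol{A}^{(t+1)})$, which says that $(\boldsymbol{\mu}^{(t)},\boldsymbol{\pi}^{(t)})$ already minimizes $L(\boldsymbol{A}^{(t+1)},\cdot,\cdot)$. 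By uniqueness of the M-step minimizer this forces $(\boldsymbol{\mu}^{(t+1)},\boldsymbol{\pi}^{(t+1)}) = (\boldsymbol{\mu}^{(t)},\boldsymbol{\pi}^{(t)})$, so the parameter block freezes at some $(\boldsymbol{\mu}^{\star},\boldsymbol{\pi}^{\star})$. The classification step thereafter always minimizes the same function $L(\cdot,\boldsymbol{\mu}^{\star},\boldsymbol{\pi}^{\star})$, so under a fixed tie-breaking rule $\boldsymbol{A}^{(t)}$ is constant for $t \ge t_0+1$ as well. Hence the whole sequence $(\boldsymbol{A}^{(t)},\boldsymbol{\mu}^{(t)},\boldsymbol{\pi}^{(t)})$ stabilizes after finitely many steps at a point that is simultaneously fixed by the classification and the M-steps, i.e. a stationary point of the algorithm; this mirrors the argument of \cite{celeux1992classification}.

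The step I expect to be the main obstacle is the ``moreover'' conclusion: monotonicity and finiteness only pin down the limiting criterion value and do not by themselves preclude the iterates from cycling among several partitions that share that value. The crux is to show that the equality case of the monotonicity chain, combined with the uniqueness hypothesis, forces first the parameters and then the partition to freeze; some care is also needed to fix the tie-breaking convention in the classification step so that the partition update is genuinely deterministic.
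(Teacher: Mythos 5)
Your proposal is correct and follows essentially the same route as the paper's proof: both treat Algorithm \ref{algo1} as alternating minimization over the partition and the parameter block, chain the two per-step inequalities to get monotone decrease, and use the finiteness of the set of membership matrices to conclude convergence of the criterion value, as in \cite{celeux1992classification}. Your treatment of the ``moreover'' part is in fact more careful than the paper's one-line appeal to well-definedness of the minima --- you correctly identify that ruling out cycling among partitions with equal criterion value requires exploiting the equality case of the monotonicity chain together with uniqueness of the M-step minimizer and a fixed tie-breaking rule in the classification step.
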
{}

Proposition \ref{prop-finite-1} indicates that the update sequence $(\boldsymbol{A}^{(t)},\boldsymbol{\mu}^{(t)},\boldsymbol{\pi}^{(t)})$ from the proposed algorithm   converges to a stationary point of the proposed criterion \eqref{eq-C2} with finite samples.
Additionally, all the loss functions in the examples in Section \ref{sec-general} satisfy the condition that the minima is well-defined.
Now, consider a smoothed version of $L(\boldsymbol{A} ,\boldsymbol{\mu},\boldsymbol{\pi})$,
\begin{equation*}
    L(\boldsymbol{U},\boldsymbol{\mu},\boldsymbol{\pi}) = \sum_{\boldsymbol{\alpha}\in \{0,1\}^K} \sum_{i=1}^n u_{i\boldsymbol{\alpha}} \Big(l(\boldsymbol{x_i},\boldsymbol{\mu}_{\boldsymbol{\alpha}}) + h(\pi_{\boldsymbol{\alpha}})\Big),
\end{equation*}
where $\boldsymbol{U} = \{u_{im}\} \in [0,1]^{n\times 2^K}$ is a matrix with nonnegative entries and each column sums to one, which is called a standard classification matrix in \cite{celeux1992classification}.
Recall that $L(\boldsymbol{A} ,\boldsymbol{\mu},\boldsymbol{\pi}) = \sum_{\boldsymbol{\alpha}} \sum_{i \in  C_{\boldsymbol{\alpha}} (\boldsymbol{A} )} \Big(l(\boldsymbol{x}_i,\boldsymbol{\mu}_{\boldsymbol{\alpha}})+h(\pi_{\boldsymbol{\alpha}})\Big) =\sum_{\boldsymbol{\alpha}} \sum_{i=1}^n I(\boldsymbol{\alpha}_i = \boldsymbol{\alpha})\Big(l(\boldsymbol{x}_i,\boldsymbol{\mu}_{\boldsymbol{\alpha}})+h(\pi_{\boldsymbol{\alpha}})\Big)$.
Therefore, $L(\boldsymbol{U},\boldsymbol{\mu},\boldsymbol{\pi})$ can be regarded as a smoothed version, where the hard membership matrix $\boldsymbol{A}$ is replaced by $\boldsymbol{U}$.
Note that the minimum of $L(\boldsymbol{U},\boldsymbol{\mu},\boldsymbol{\pi})$ is attained when $\boldsymbol{U}$ is equal to some hard membership matrix $\boldsymbol{A}$.

\begin{proposition}
Assume that $L(\boldsymbol{U},\boldsymbol{\mu},\boldsymbol{\pi})$ has a local minimum at $(\boldsymbol{U}^*,\boldsymbol{\mu}^*,\boldsymbol{\pi}^*)$ and that the Hessian of $L(\boldsymbol{U},\boldsymbol{\mu},\boldsymbol{\pi})$ exists and is positive definite at $(\boldsymbol{U}^*,\boldsymbol{\mu}^*,\boldsymbol{\pi}^*)$.
Then there is a neighborhood of $(\boldsymbol{U}^*,\boldsymbol{\mu}^*,\boldsymbol{\pi}^*)$ such that starting with any $(\boldsymbol{U}^{(0)},\boldsymbol{\mu}^{(0)},\boldsymbol{\pi}^{(0)})$ in that neighborhood, the resulting sequence $(\boldsymbol{A}^{(t)},\boldsymbol{\mu}^{(t)},\boldsymbol{\pi}^{(t)})$ of the Algorithm \ref{algo1} 
converges to $(\boldsymbol{U}^*,\boldsymbol{\mu}^*,\boldsymbol{\pi}^*)$ at a linear rate.
\label{prop-finite-2}
\end{proposition}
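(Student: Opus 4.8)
The plan is to follow the fixed-point analysis of the classification EM algorithm in \cite{celeux1992classification}, adapted to the relaxed criterion $L(\boldsymbol{U},\boldsymbol{\mu},\boldsymbol{\pi})$. The first step is to identify the target $\boldsymbol{U}^*$ as a vertex of the constraint set. Since $L(\boldsymbol{U},\boldsymbol{\mu},\boldsymbol{\pi})$ is linear in $\boldsymbol{U}$ for fixed $(\boldsymbol{\mu},\boldsymbol{\pi})$ and each subject's membership weights range over a simplex, the minimum over $\boldsymbol{U}$ is attained at an extreme point, i.e., a hard membership matrix $\boldsymbol{A}^*$. Hence a local minimizer $\boldsymbol{U}^*$ of the relaxed criterion must be such a vertex, so $\boldsymbol{U}^*=\boldsymbol{A}^*$ for some hard assignment, which is exactly what makes the convergence statement $\boldsymbol{A}^{(t)}\to\boldsymbol{U}^*$ meaningful. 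I would then express one full sweep of Algorithm \ref{algo1} (Steps 1--3) as a point-to-point iteration map $G$ on the relaxed domain whose fixed point is $(\boldsymbol{U}^*,\boldsymbol{\mu}^*,\boldsymbol{\pi}^*)$.

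The main obstacle is that Step 2 is an $\arg\min$, a hard classification, which is in general a discontinuous operation, so $G$ need not be smooth a priori. I would resolve this with a \emph{locking lemma}. Because $\boldsymbol{U}^*=\boldsymbol{A}^*$ is an isolated local minimizer, no subject can be indifferent between two classes at the optimum: a tie would make $L$ constant along the corresponding simplex edge, contradicting isolation. Thus each subject strictly prefers its assigned class, i.e. $l(\boldsymbol{x}_i,\boldsymbol{\mu}^*_{\boldsymbol{\alpha}_i^*})+h(\pi^*_{\boldsymbol{\alpha}_i^*}) < l(\boldsymbol{x}_i,\boldsymbol{\mu}^*_{\boldsymbol{\alpha}})+h(\pi^*_{\boldsymbol{\alpha}})$ for all $\boldsymbol{\alpha}\neq\boldsymbol{\alpha}_i^*$. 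The continuity of $l$ in $\boldsymbol{\mu}$ and of $h$ in $\boldsymbol{\pi}$ (Assumptions \ref{assump-minima}--\ref{assump-prop}) then yields a neighborhood $\mathcal{N}$ of $(\boldsymbol{\mu}^*,\boldsymbol{\pi}^*)$ on which Step 2 returns exactly the partition $\boldsymbol{A}^*$. Within $\mathcal{N}$ the discrete component of $G$ is constant, so $G$ reduces to the smooth parameter update of Step 3 associated with the fixed partition $\boldsymbol{A}^*$, and is therefore differentiable at the fixed point.

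The final step is linearization. On $\mathcal{N}$ the iteration is governed by the smooth map carrying $(\boldsymbol{\mu},\boldsymbol{\pi})$ to the minimizer of $L(\boldsymbol{A}^*,\cdot,\cdot)$; I would compute the Jacobian $DG(\boldsymbol{U}^*,\boldsymbol{\mu}^*,\boldsymbol{\pi}^*)$ and show its spectral radius $\rho$ is strictly less than one. This is where the nondegeneracy enters: positive-definiteness of the Hessian (understood on the continuous parameters, with $\boldsymbol{U}^*$ an isolated vertex) makes the parameter sub-problem locally strongly convex with a unique, well-separated minimizer, which forces $\rho<1$. Invoking Ostrowski's theorem — a $C^1$ fixed-point map whose Jacobian at a fixed point has spectral radius $\rho<1$ has that point as a point of attraction, with iterates converging at the linear rate $\rho$ — I conclude that for any start in a sufficiently small neighborhood the sequence $(\boldsymbol{A}^{(t)},\boldsymbol{\mu}^{(t)},\boldsymbol{\pi}^{(t)})$ converges to $(\boldsymbol{U}^*,\boldsymbol{\mu}^*,\boldsymbol{\pi}^*)$ at a linear rate. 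The delicate point throughout is the interface between the combinatorial classification step and the continuous relaxation; once the locking lemma renders $G$ locally smooth, the remainder is the standard contraction argument driven by the positive-definite Hessian.
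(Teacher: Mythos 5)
Your proposal is correct, and its first move is identical to the paper's: both recognize that $L(\boldsymbol{U},\boldsymbol{\mu},\boldsymbol{\pi})$ is linear in each subject's membership weights, so the minimum over $\boldsymbol{U}$ is attained at a vertex (a hard partition), and Algorithm \ref{algo1} is exactly grouped coordinate descent on the relaxed criterion. Where you diverge is in what happens next: the paper stops there and cites Theorem 2.2 of \cite{bezdek1987local}, which packages the local-convergence analysis of grouped coordinate minimization; you instead re-derive that content in this specific setting via a locking lemma (strict preferences at the isolated vertex $\boldsymbol{U}^*=\boldsymbol{A}^*$, propagated by continuity of $l$ and $h$ to a neighborhood on which Step 2 returns the fixed partition) followed by Ostrowski's theorem applied to the now-smooth iteration map. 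The citation buys the paper brevity and a ready-made treatment of the degenerate role of $\boldsymbol{U}$ in the Hessian hypothesis; your route buys a self-contained and more transparent argument, and in fact yields something slightly stronger than stated. Once the partition locks at $\boldsymbol{A}^*$, Step 3 is an exact block minimization of $L(\boldsymbol{A}^*,\cdot,\cdot)$ that does not depend on the previous $(\boldsymbol{\mu}^{(t)},\boldsymbol{\pi}^{(t)})$, so the parameters land on $(\boldsymbol{\mu}^*,\boldsymbol{\pi}^*)$ in a single step: your Jacobian $DG$ is the zero matrix, $\rho=0$, and convergence is finite rather than merely linear. One caveat you already flag but should state carefully if you write this out: since $L$ is linear in $\boldsymbol{U}$, the ``positive definite Hessian'' hypothesis can only be meant on the continuous block $(\boldsymbol{\mu},\boldsymbol{\pi})$, and the isolation of the vertex $\boldsymbol{U}^*$ (equivalently, the no-tie condition) must be taken as part of the local-minimum hypothesis rather than deduced from the Hessian; with that reading your argument is sound.
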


Proposition \ref{prop-finite-2} states that if we start with a good initial value which is close enough to the optimal point, then the update sequence will also converge to the optimal point.
These two propositions give good finite-sample properties of our proposed estimation framework.
The detailed proofs of Proposition \ref{prop-finite-1} and Proposition \ref{prop-finite-2} are given in online Appendix \ref{appendix-prop3} and \ref{appendix-prop4}, respectively.

\section{Simulation Studies}
\label{sec-simulation}

We conducted comprehensive simulations under a variety of settings to compare the performance of different methods.
The methods compared were:
\begin{itemize}
    \item NPC: the baseline method, where the centers are the ideal responses from the DINA model, and the loss function is the $L_1$ loss;
    \item GNPC: the centers are weighted averages of the ideal responses from the DINA and DINO models, and the loss function is the $L_2$ loss;
    \item  GNPC + log penalty: add log penalties on the proportion parameters to the GNPC method, where the loss function is $L_2$ loss for the centroids plus the summation of the log functions of the proportion parameters;
    \item JMLE: the Joint Maximum Likelihood Estimate, where the centroid parameters are to be estimated, and the loss function is the negative log-likelihood;
    \item CMLE: the Classification Maximum Likelihood Estimate, where the centers and the loss function are the  same as JMLE but with an additional term of class proportions as specified in (\ref{eq-CML2});
    \item MMLE: the Marginal Maximum Likelihood Estimate obtained from the traditional EM algorithm under the DINA or GDINA model assumption.
\end{itemize}{}

\noindent MMLE, as one of, if not the the most commonly used estimation algorithm in the CDM literature, was included in the comparison to provide a more comprehensive understanding of how different CDM estimation methods perform.

For the underlying true models, we considered two different settings: all items conformed to the DINA, 
	or all items conformed to the GDINA model.
Following the simulation design in \cite{chiu2018cognitive}, the subjects' true latent attribute patterns were either drawn from a uniform distribution or derived from the multivariate normal threshold model.
More specifically, for the uniform setting, each latent pattern $\boldsymbol{\alpha}$ had the same probability $1/2^K$ of being drawn.
For the multivariate normal setting, each subject’s attribute profile was linked to a latent continuous ability vector $\boldsymbol{z} = (z_1,\dots,z_K)' \sim \mathcal{N}(\boldsymbol{0,\Sigma})$ with values along the main diagonal of $\boldsymbol{\Sigma}$ setting to 1 and the off-diagonal entries setting to either $r=0.4$ or $0.8$ for different levels of correlation.
The latent continuous ability vector $\boldsymbol{z}$ was randomly sampled, and the $k$th entry of the attribute pattern was determined by
    \[  \alpha_{ik} =
        \begin{cases}
            1, &  {z}_{ik} \geq \Phi^{-1}\big(\frac{k}{K+1}\big) \\
            0, & \text{otherwise.}
        \end{cases}
    \]
where $\Phi$ is the inverse cumulative distribution function of standard normal distribution.

We considered different numbers of latent attributes ($K=3$ or 5), different sample sizes ($N = 30$, $50$, $200$ or $500$) and different number of items ($J = 30$ or 50).
To ensure identifiability, we set the first two $K\times K$ submatrices of the $Q$-matrix to be identity matrices.
The remaining items were randomly generated from all the possible latent patterns.
When $K=5$, the $Q$-matrix contained items that measured up to three attributes and was constructed the same way as that for $K = 3$.
When the underlying model was the DINA or DINO model, different signal strengths were considered.
Specifically we set $s=g=0.1$ or 0.3.
{When the true model was the GDINA model, we also considered two different signal strength levels.
One had the same item   response functions as those specified in \cite{chiu2018cognitive}, which are listed in  Table \ref{theta-GDINA-1}; the other setting contained larger noise
as listed in  Table \ref{theta-GDINA-2}.}

\begin{table}[H]
\centering
\begin{tabular}{cccccccc}
\hline
$P(\boldsymbol{\alpha}_1)$ & $P(\boldsymbol{\alpha}_2)$ & $P(\boldsymbol{\alpha}_3)$ & $P(\boldsymbol{\alpha}_4)$ & $P(\boldsymbol{\alpha}_5)$ & $P(\boldsymbol{\alpha}_6)$ & $P(\boldsymbol{\alpha}_7)$ & $P(\boldsymbol{\alpha}_8)$ \\ \hline
0.2 & 0.9 &  &  &  &  &  &  \\
0.1 & 0.8 &  &  &  &  &  &  \\
0.1 & 0.9 &  &  &  &  &  &  \\
0.2 & 0.5 & 0.4 & 0.9 &  &  &  &  \\
0.1 & 0.3 & 0.5 & 0.9 &  &  &  &  \\
0.1 & 0.2 & 0.6 & 0.8 &  &  &  &  \\
0.1 & 0.2 & 0.3 & 0.4 & 0.4 & 0.5 & 0.7 & 0.9 \\
\hline
\end{tabular}
\caption{Item response parameters for GDINA with small noise}
\label{theta-GDINA-1}
\end{table}

\begin{table}[H]
\centering
\begin{tabular}{cccccccc}
\hline
$P(\boldsymbol{\alpha}_1)$ & $P(\boldsymbol{\alpha}_2)$ & $P(\boldsymbol{\alpha}_3)$ & $P(\boldsymbol{\alpha}_4)$ & $P(\boldsymbol{\alpha}_5)$ & $P(\boldsymbol{\alpha}_6)$ & $P(\boldsymbol{\alpha}_7)$ & $P(\boldsymbol{\alpha}_8)$ \\ \hline
0.3 & 0.7 &  &  &  &  &  &  \\
0.3 & 0.8 &  &  &  &  &  &  \\
0.3 & 0.4 & 0.7 & 0.8 &  &  &  &  \\
0.3 & 0.4 & 0.6 & 0.7 &  &  &  &  \\
0.2 & 0.3 & 0.6 & 0.7 &  &  &  &  \\
0.2 & 0.3 & 0.3 & 0.4 & 0.4 & 0.5 & 0.6 & 0.7 \\
\hline
\end{tabular}
\caption{Item response parameters for GDINA with large noise}
\label{theta-GDINA-2}
\end{table}

To evaluate the performance of different methods, two metrics were used: the pattern-wise agreement rate (PAR) and the attribute-wise agreement rate (AAR), as defined below,
    \[
    \text{PAR} = \frac{\sum_{i=1}^N I\{\hat{\boldsymbol{\alpha}}_i = \boldsymbol{\alpha}_i\}}{N},\quad
    \text{AAR} = \frac{\sum_{i=1}^N \sum_{k=1}^K I\{\hat{\alpha}_{ik} = \alpha_{ik}\}}{NK}.
    \]
 
 For parametric methods including JMLE, CMLE, MMLE of the DINA and the GDINA models, we also calculate the Mean Squared Errors (MSEs) for item   response probabilities ${\theta}$'s  of each latent class.
For each setting, we repeated 100 times and reported the obtained means of PAR, AAR and MSE.
 Note that the aforementioned methods are iterative in nature, hence, would be affected by how they are initialized. For comparability purposes, we treated the NPC method as the baseline in this work, and used its results to initialize all the other methods. Using the NPC to perform the initialization is a reasonable choice given its non-iterative nature.
In the following result plots, we use DINA or GDINA to stand for the results of MMLE obtained by the EM algorithm under the corresponding model assumptions.

\subsection*{Result I: DINA}

Figures \ref{fig:DINA-PAR-zoom} and \ref{fig:DINA-AAR-zoom} present the PARs and  AARs when the underlying process followed the DINA model.
Under the independent attribute (i.e., uniform) setting, the NPC performed the best, as expected, in almost all the cases.
 The JMLE performed similarly to the CMLE in most cases, and slightly better than CMLE when there were more latent attributes ($K=5$) -- this was so because the JMLE method correctly assumed that the true latent patterns were uniformly distributed.
The GNPC produced similar results to the JMLE and the CMLE in most cases, but much worse results with large noises ($s = g = 0.3$) and more items ($J=50$).
Adding the log penalty to the GNPC method degraded the results under the uniform setting especially when the sample size was large, which is also expected since the GNPC method implicitly assumes a uniform penalty on the latent classes.
In comparison, the MMLE of the DINA and GDINA models did not perform as well as the others.
This was particularly true when the noise was large and sample size was small.

Under the dependent attribute (i.e., multivariate normal) settings, although the NPC still performed the best in almost all the cases with moderate correlation ($r=0.4$), it performed poorly with larger correlation ($r=0.8$) and sample size ($N = 200$ or 500) as a consequence of more unequal latent patterns proportions.
The MMLE of the DINA provided the best results when the sample size was larger ($N=200/500$) and the correlation was large ($r$ = 0.8), but did  not perform well with smaller sample sizes.
The GNPC and JMLE performed similarly when the noise was small, but the GNPC was much worse than the JMLE when the noise was large.
Adding the log penalty on the proportions improved the performance of the GNPC method under the correlated settings, though still not as good as the CMLE method.
In contrast, the CMLE performed uniformly well in almost all cases, and its advantages became more apparent when there were more latent attributes, and the correlation and the noise were large.
Specifically, the CMLE performed similarly to the NPC when the sample sizes were small, and the MMLE of the DINA when the sample sizes were large.
In almost all of the conditions, the MMLE of the GDINA did not perform as well as the other methods, which was not unexpected as the DINA was the true model.

Mean Squared Errors (MSEs) for the item  response probabilities ${\theta}$'s using parametric methods including JMLE, CMLE and MMLE for the DINA and GDINA models are plotted in Figure \ref{fig:DINA-MSE}.
From the results, we can see that across different settings the MMLE for the DINA model gave the best item response probability estimates, while the MMLE for the GDINA model performed the worst.
The JMLE and CMLE methods provided similar results. 
It is actually not surprising that the MMLE for the DINA performed the best for item  response probability estimation since it correctly assumed a two-level DINA model and  directly estimated the corresponding guessing and slipping parameters, while other methods did not have such prior knowledge about the underlying model.

\subsection*{Result II: GDINA}

 Figures \ref{fig:GDINA-PAR-zoom} and \ref{fig:GDINA-AAR-zoom} show the PARs and the AARs when the data conformed to the GDINA model under different settings.
Based on the results, when the latent attributes were independent, the GNPC performed generally the best across the settings, whereas the JMLE ,the CMLE and the MMLE of the GDINA model improved with increasing sample size.
As in the DINA cases, the log penalty on the proportions degraded the performance of the GNPC method under the independent setting.
The JMLE provided comparable or slightly better results than the CMLE, particularly when $K$ was larger.
As mentioned earlier, this is because the JMLE correctly assumed a uniform prior distribution for the latent attributes, whereas the CMLE, although made no assumptions, needed to estimate additional parameters.

Under the correlated latent attributes settings, adding the log penalties on the proportions to the GNPC method greatly improved the performance especially when the noise was large or correlation was high.
GNPC+log penalty tended to provide the best results with small sample sizes, and the CMLE and the MMLE of the GDINA gave the best results with larger samples.
Moreover, with larger noises, the CMLE method provided better results than the MMLE of the GDINA model particularly when there were more latent attributes.
As the correlation became larger, with large sample sizes, the performance of the CMLE method became more similar to that of the MMLE of the GDINA, and better than the JMLE method, due to the proportions of latent attribute patterns no longer being equal.
Based on the above analysis, one can note that the CMLE method was more robust to large noise.

The MSEs for the parametric methods including JMLE, CMLE and MMLE for the GDINA model under the GDINA settings are given in Figure \ref{fig:GDINA-MSE}.
	These three methods gave similar results in most cases, while JMLE and CMLE performed better than the MMLE of the GDINA settings especially when the number of attributes was large or noises were large.

\subsection*{Summary and Recommendations}

\label{sec-simulation-summary}

{Based on the above simulation results, we can see that there is no dominating method that performed uniformly better than other methods across all the simulation settings.
Hence, the choice of the method should be based on the specific setting and other information we have at hand.
In the following, we provide recommendations in practice under different circumstances.}

{If we can safely assumed that the true underlying model is the DINA model, then the NPC method would give good results if the latent attributes are independent.
When the latent attributes are moderately correlated, either the NPC or the CMLE method is recommended.
When the correlations are high among the latent attributes, the NPC and the CMLE would perform well with small sample sizes, whereas the CMLE and the MMLE of the DINA model would give better results with sufficiently large data sizes.}

 In situations where the true model is the GDINA model, the GNPC method will perform generally well if the latent attributes are independent.
When the latent attributes are correlated and the sample size is small, the GNPC augmented by the log penalties on the proportion parameters is preferred.
However, when the sample is sufficiently large, the CMLE method is more robust.
The CMLE method also performs well with small sample sizes when the noise is large.

Finally, if   the true data-generating models are unknown, the CMLE method is recommended when the latent attributes are correlated.
If the latent attributes are independent, the GNPC method is preferred.
Moreover, when the sample size is large enough, the MMLE method for the GDINA model is also recommended.
If the noise is small, the GNPC method will also perform well when the sample size is small,  and augmenting the GNPC method with the log penalties on the proportion parameters will improve its performance under the correlated setting.

\begin{landscape}

\begin{figure}[H]
    \centering
    \subfigure{
    \includegraphics[width=2.56in]{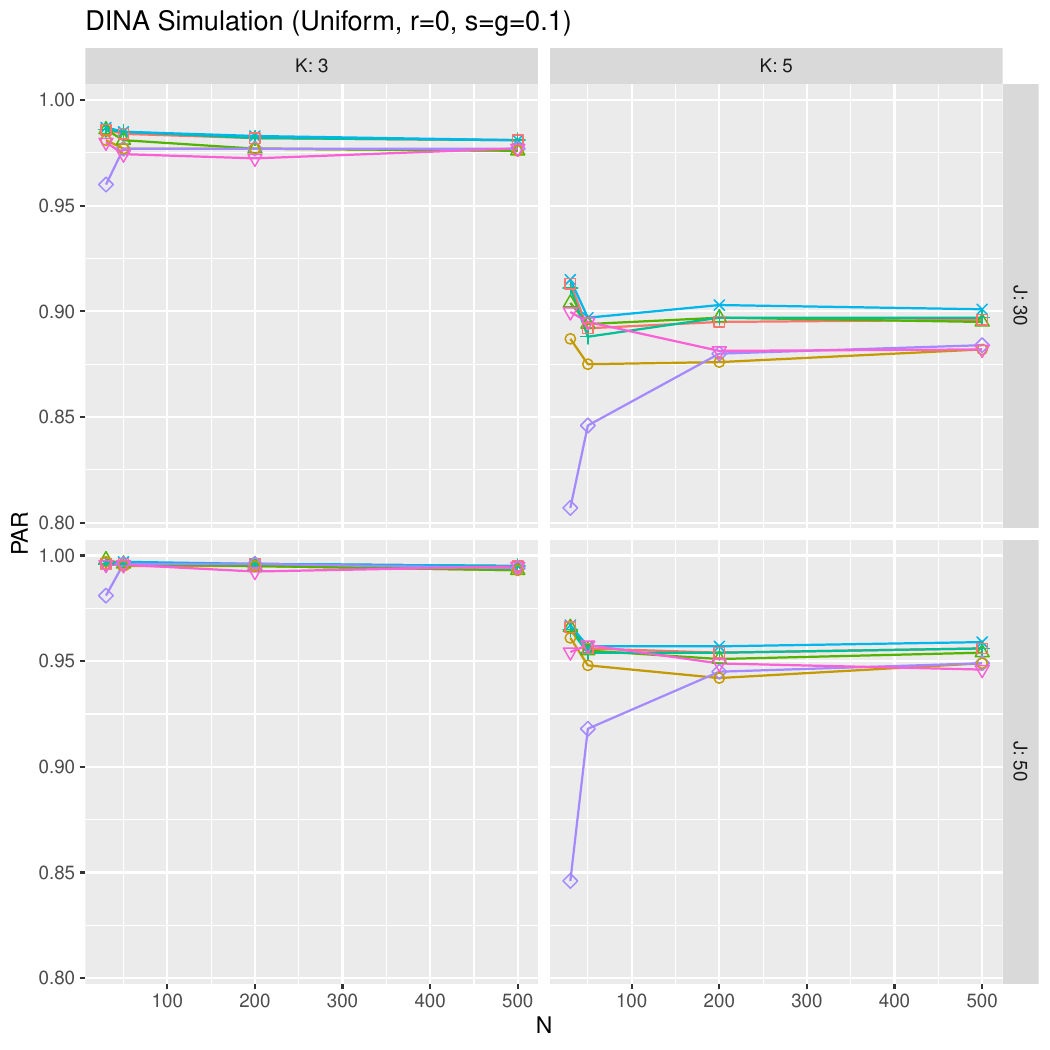}
    }
    \subfigure{
    \includegraphics[width=2.56in]{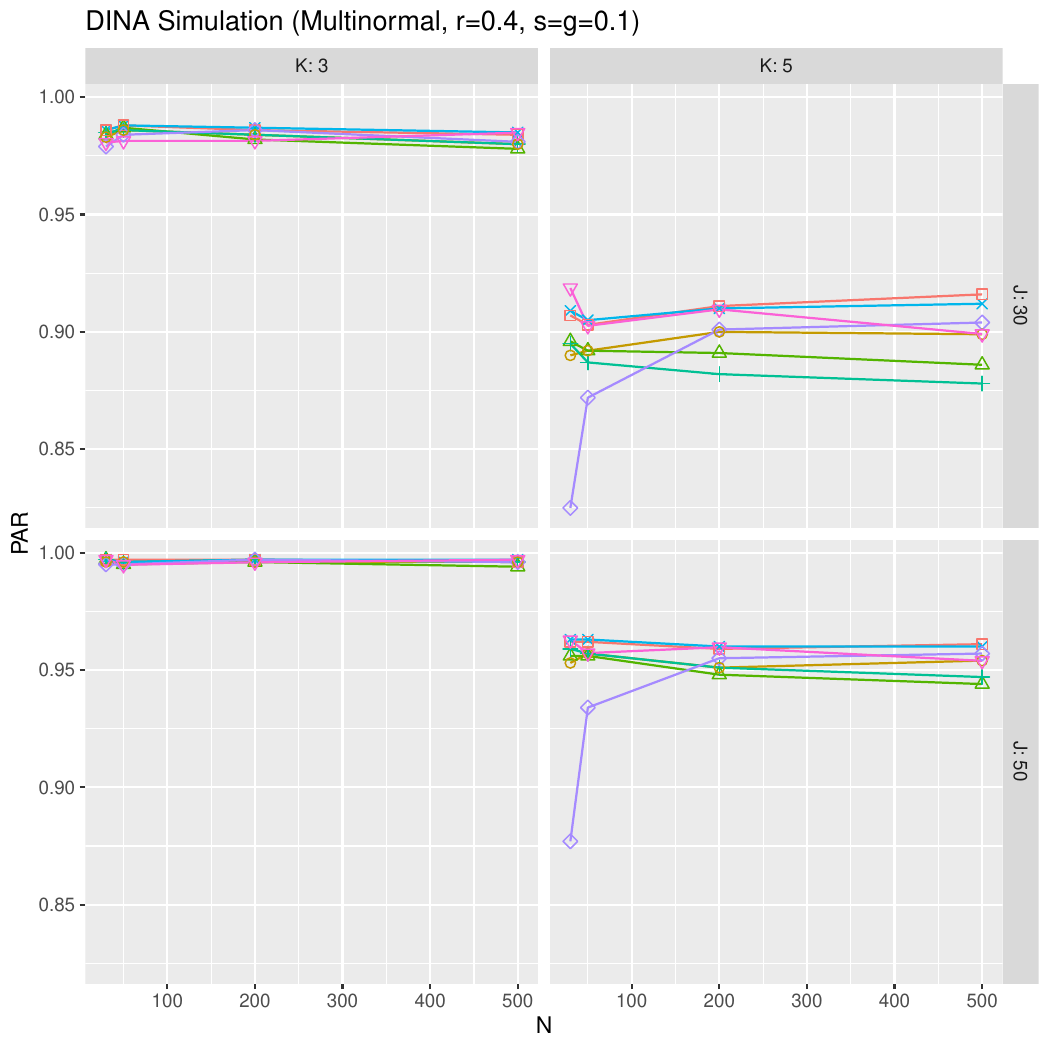}
    }\subfigure{
    \includegraphics[width=2.56in]{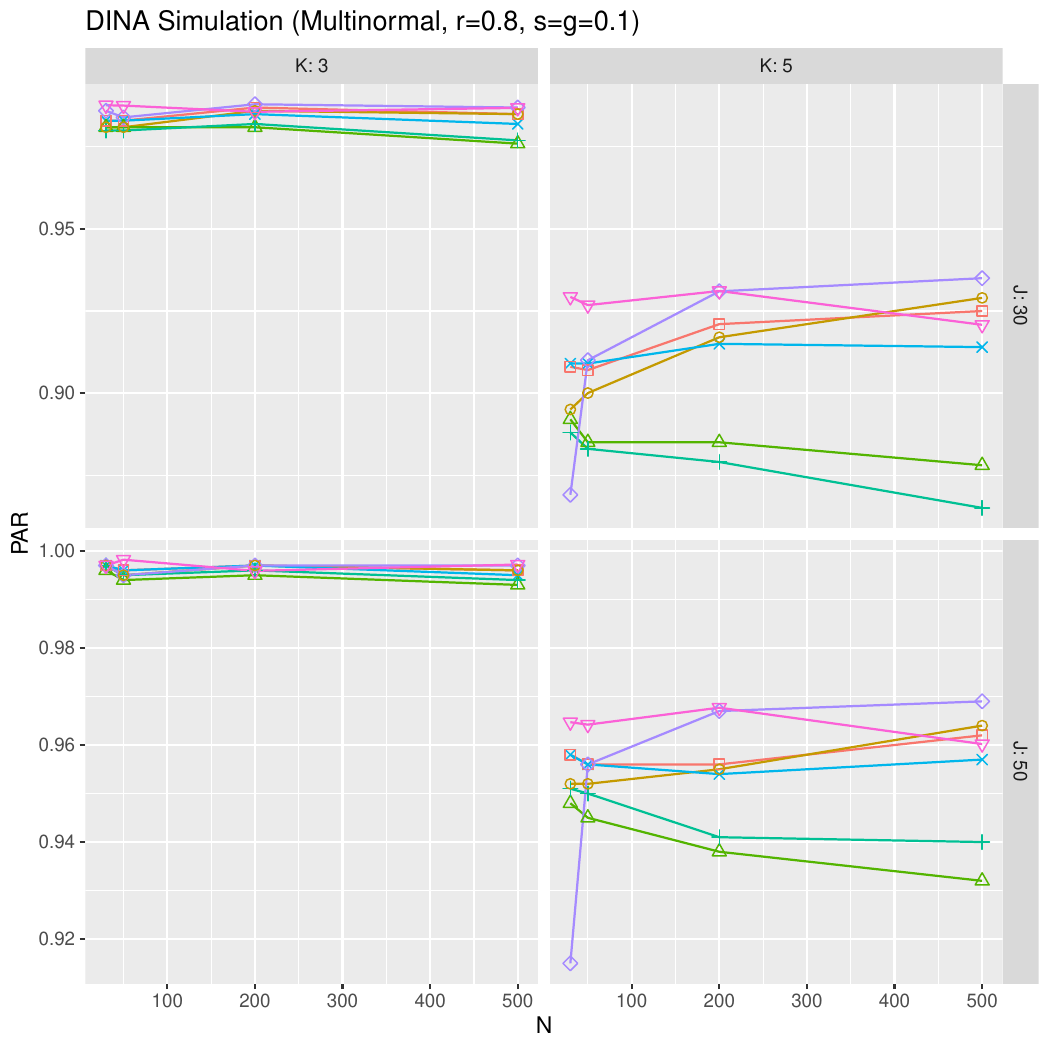}
    }
    \subfigure{
    \includegraphics[width=0.45in]{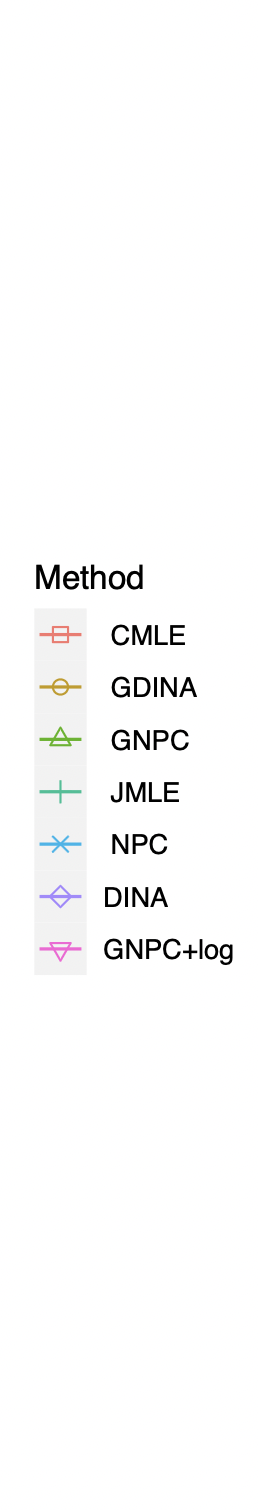}
    }\\
    \subfigure{
    \includegraphics[width=2.56in]{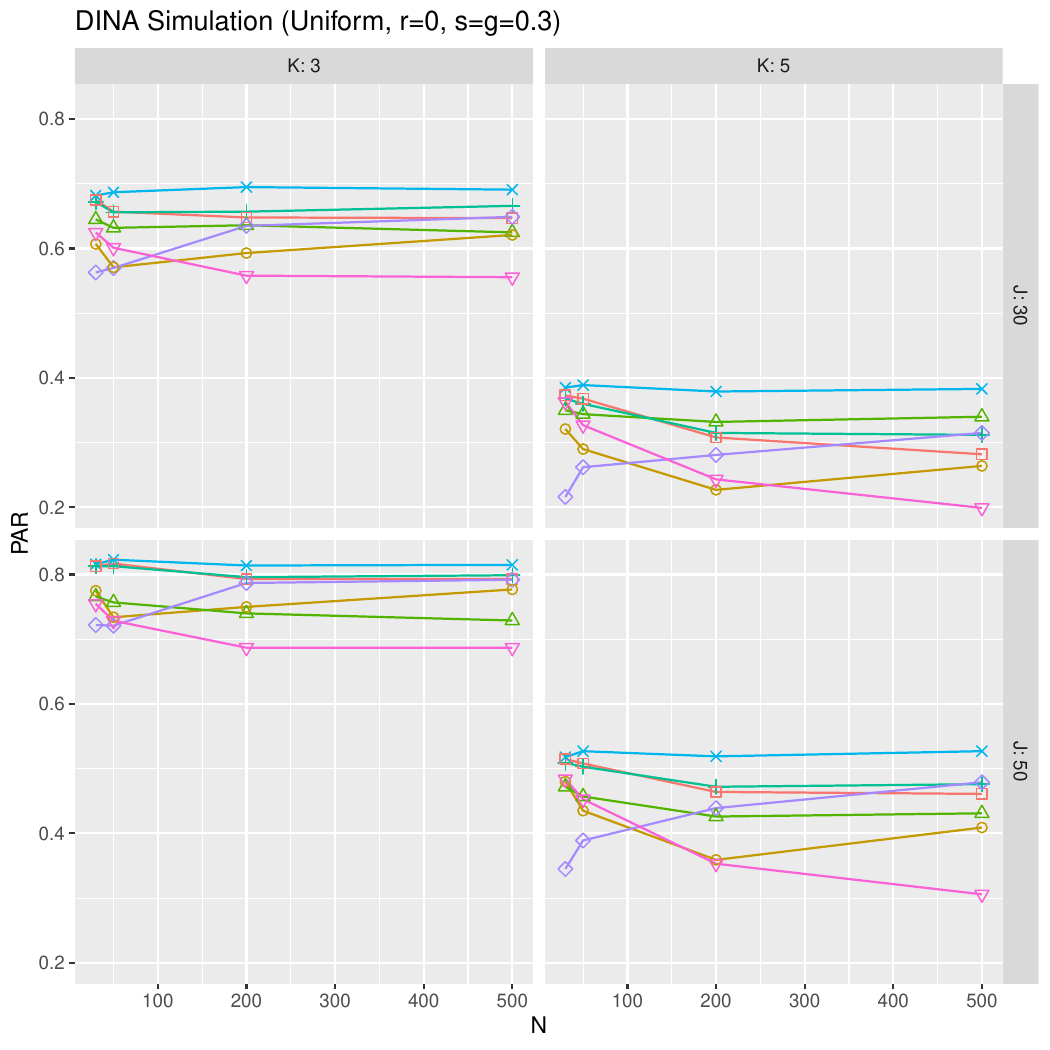}
    }
    \subfigure{
    \includegraphics[width=2.56in]{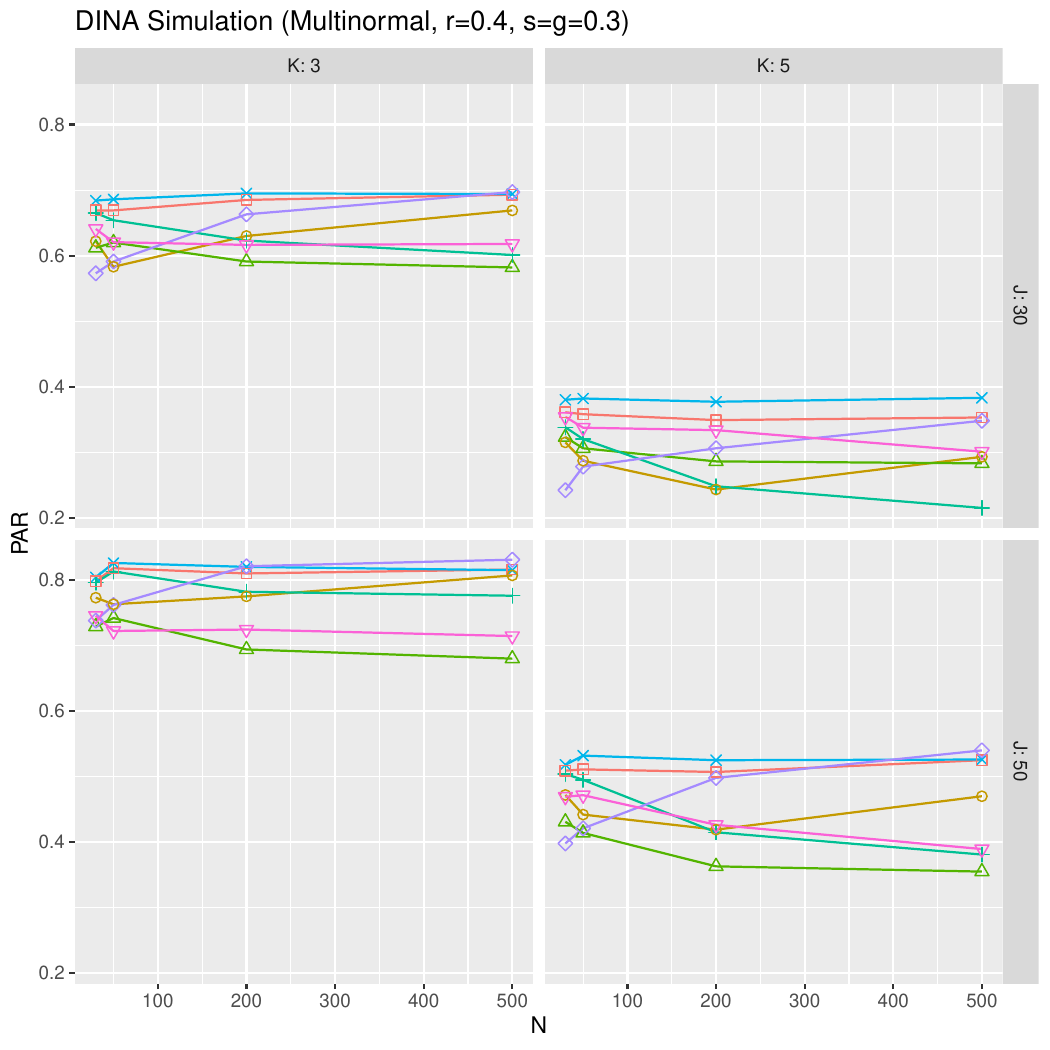}
    }
    \subfigure{
	\includegraphics[width=2.56in]{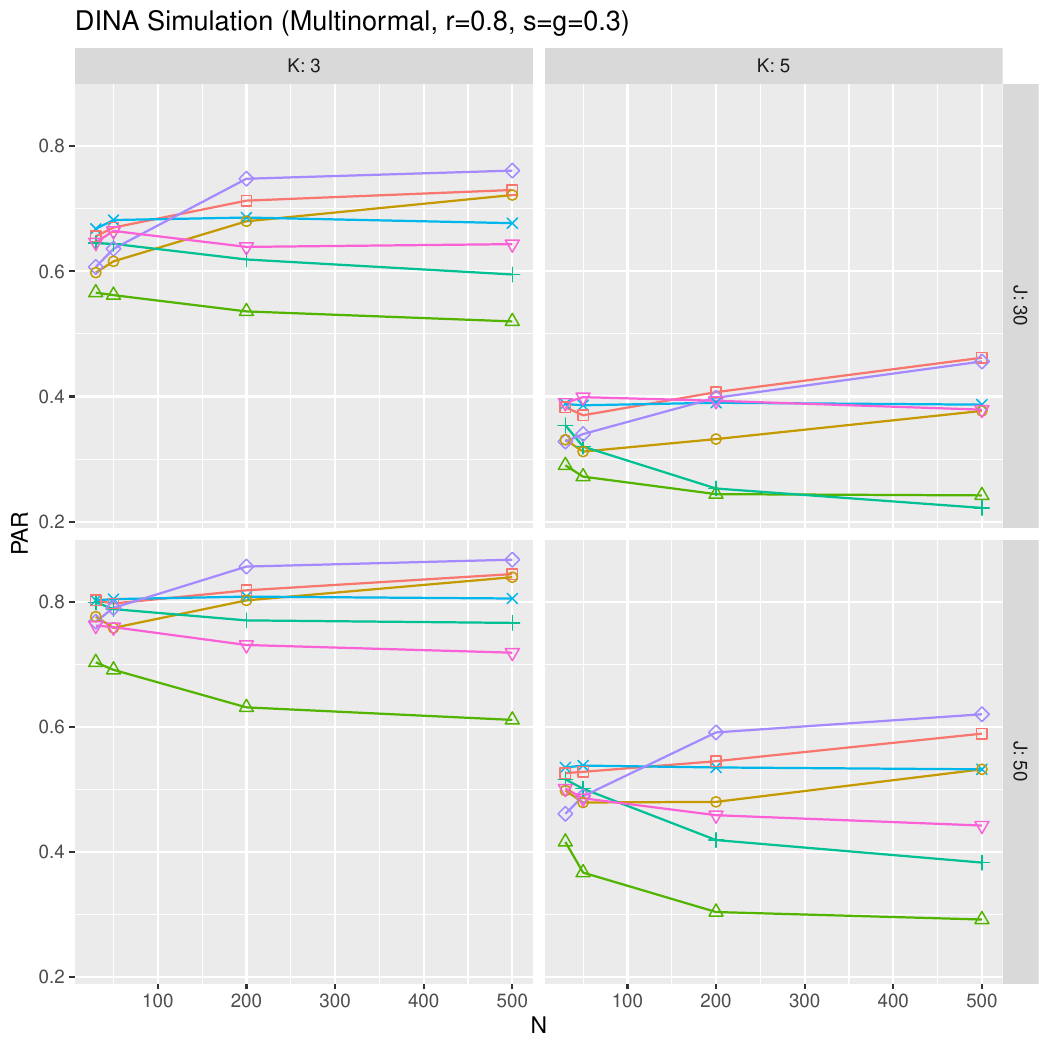}
    }
    \subfigure{
    \includegraphics[width=0.45in]{plots/PAR_legend.png}
    }
    \caption{PARs when the data conformed to the DINA model}
    \label{fig:DINA-PAR-zoom}
\end{figure}{}
	
\end{landscape}

\begin{landscape}
	\begin{figure}[H]
    \centering
    \subfigure{
    \includegraphics[width=2.56in]{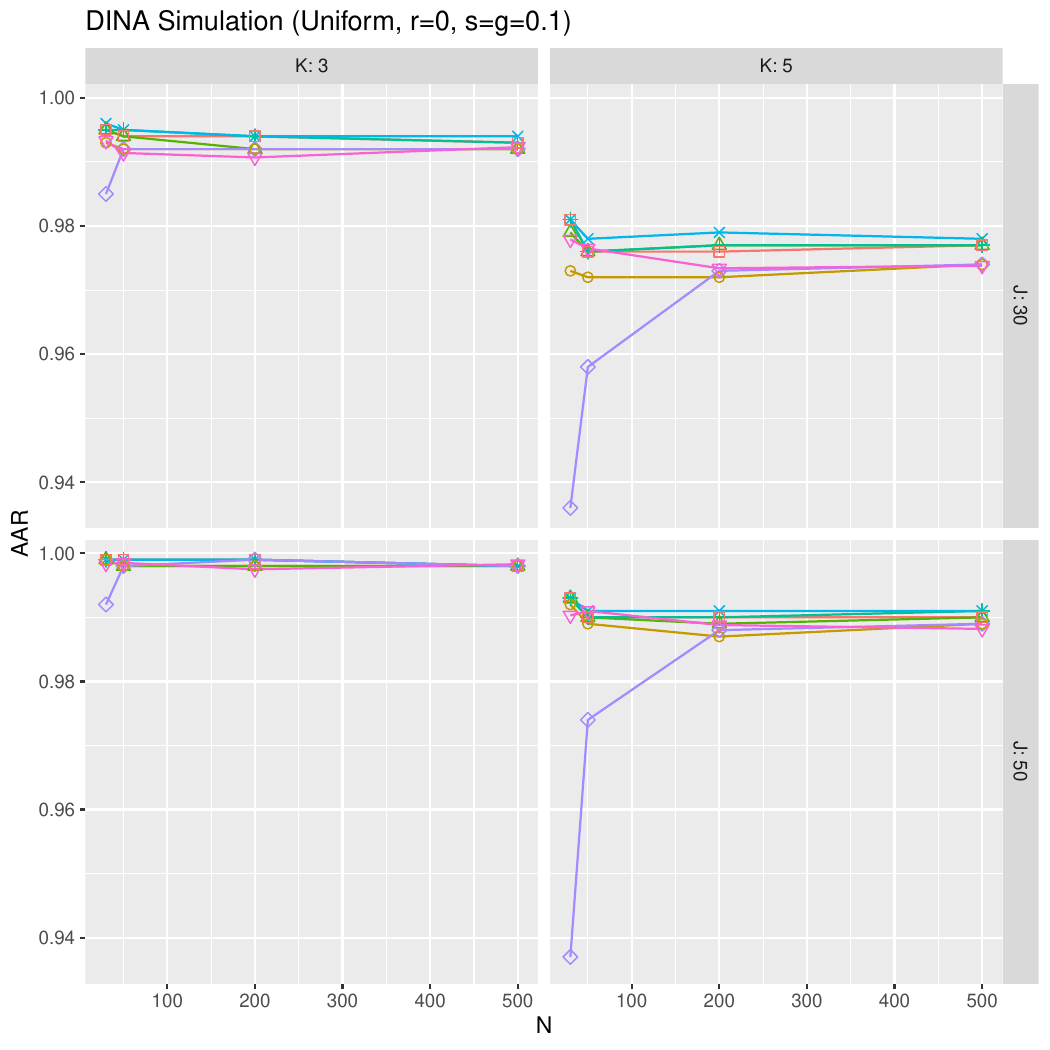}
    }
    \subfigure{
    \includegraphics[width=2.56in]{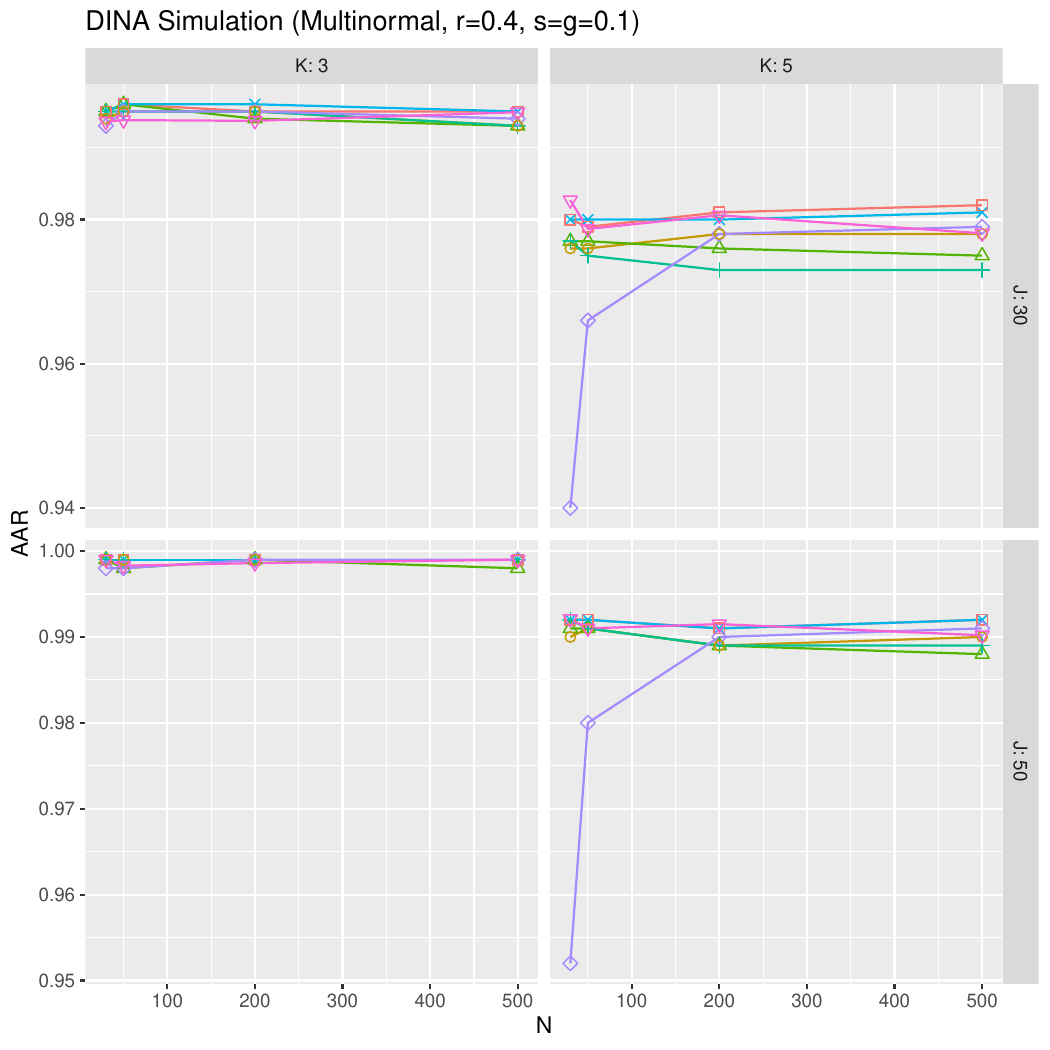}
    }
    \subfigure{
    \includegraphics[width=2.56in]{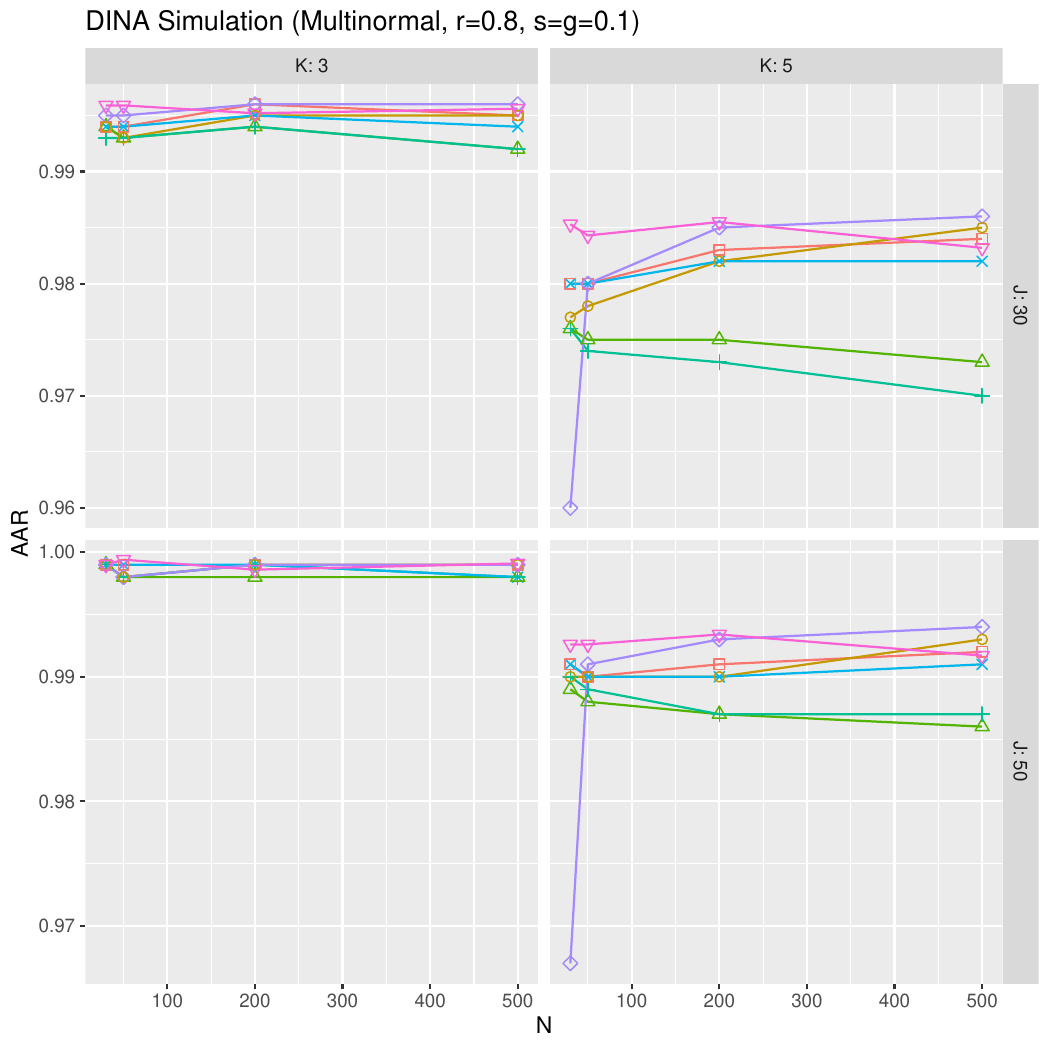}
    }
    \subfigure{
    \includegraphics[width=0.45in]{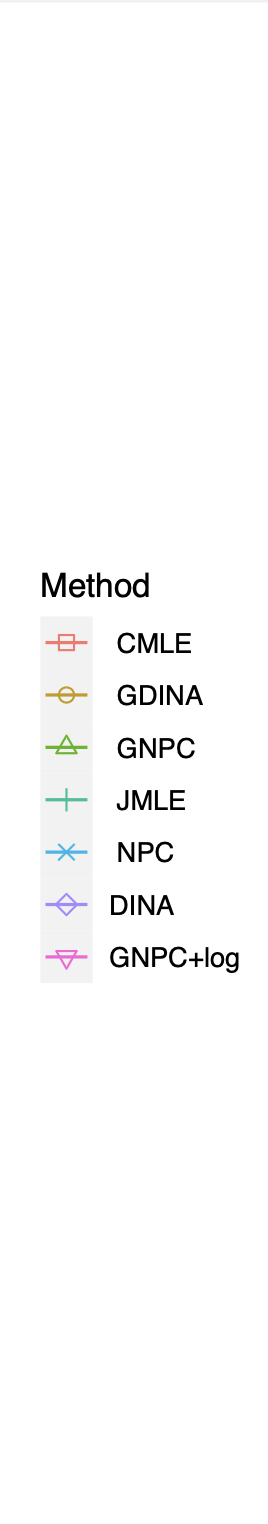}
    }
	\\
    \subfigure{
    \includegraphics[width=2.56in]{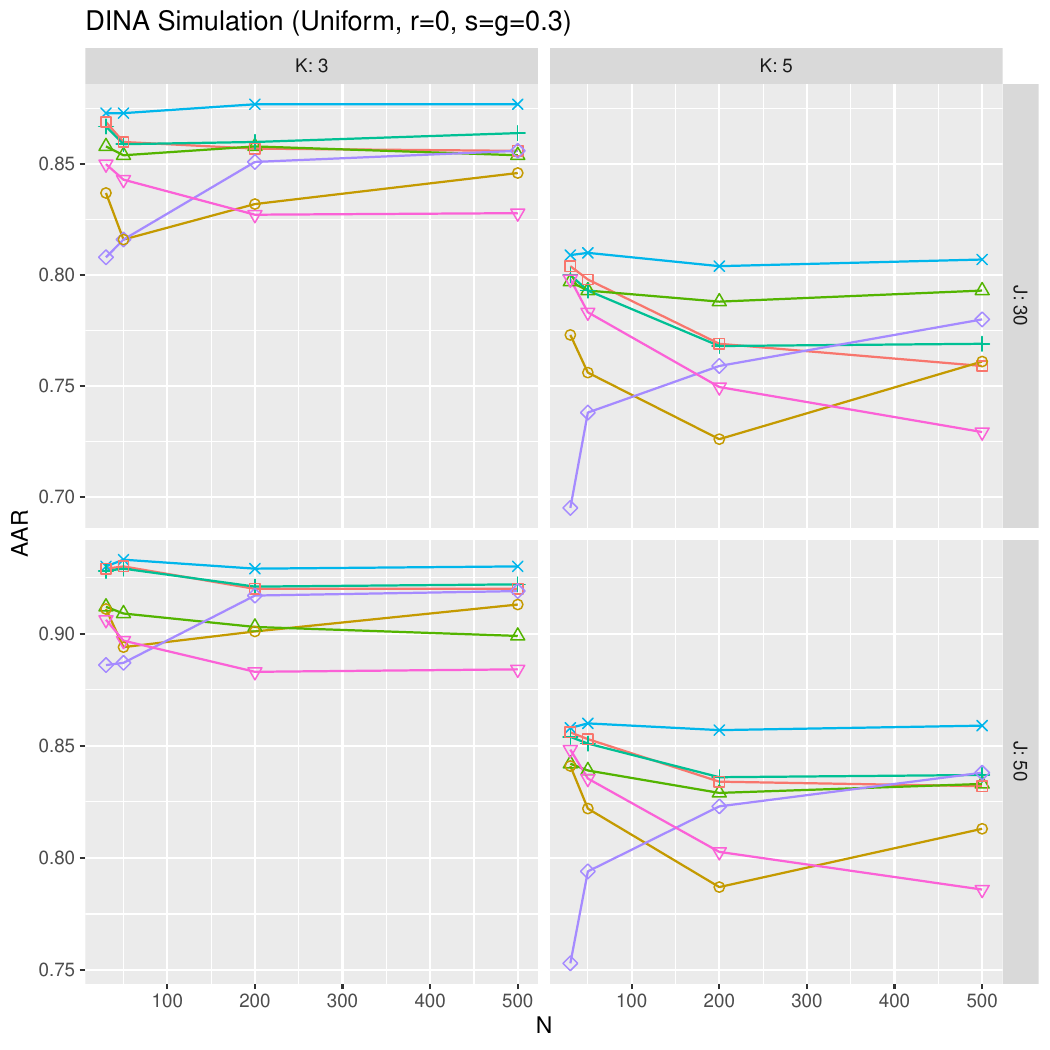}
    }
    \subfigure{
    \includegraphics[width=2.56in]{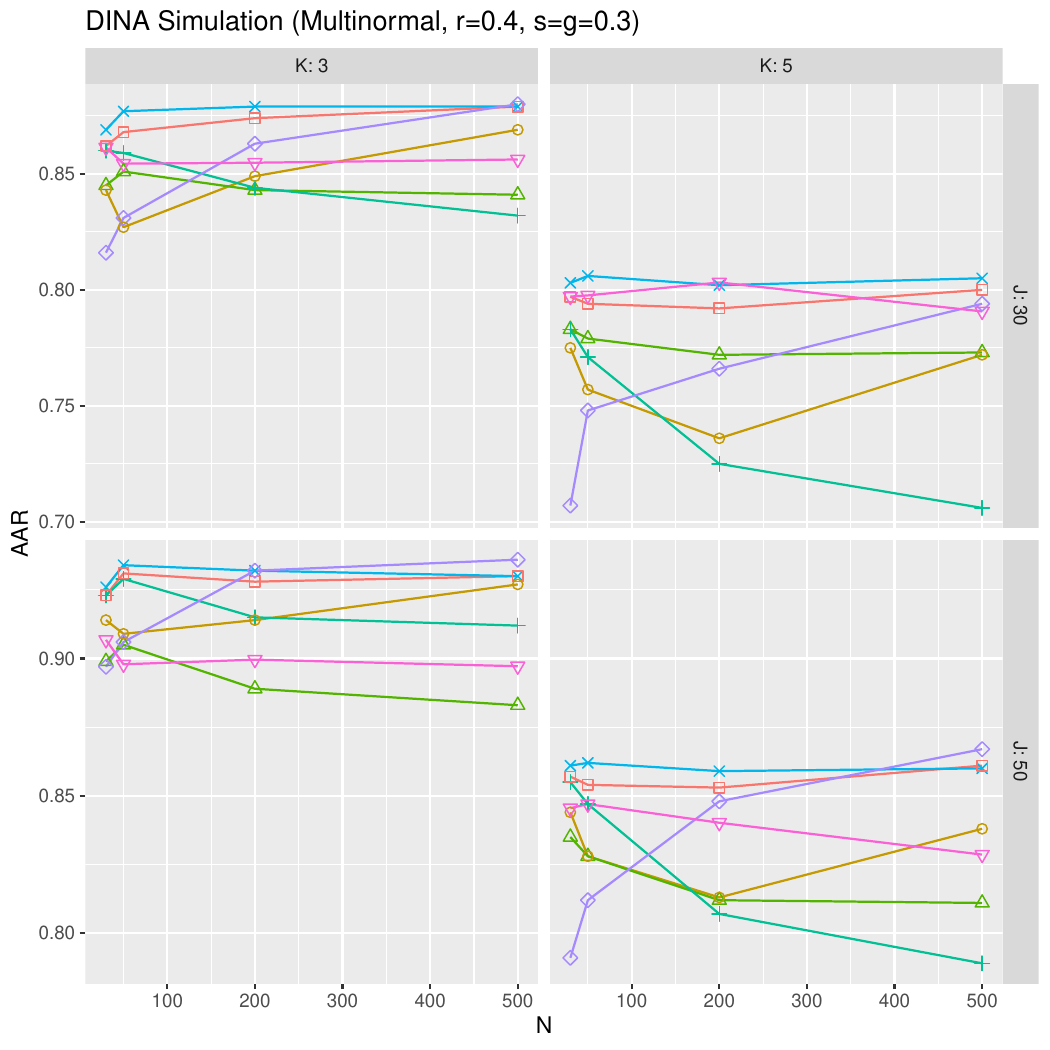}
    }
    \subfigure{
    \includegraphics[width=2.56in]{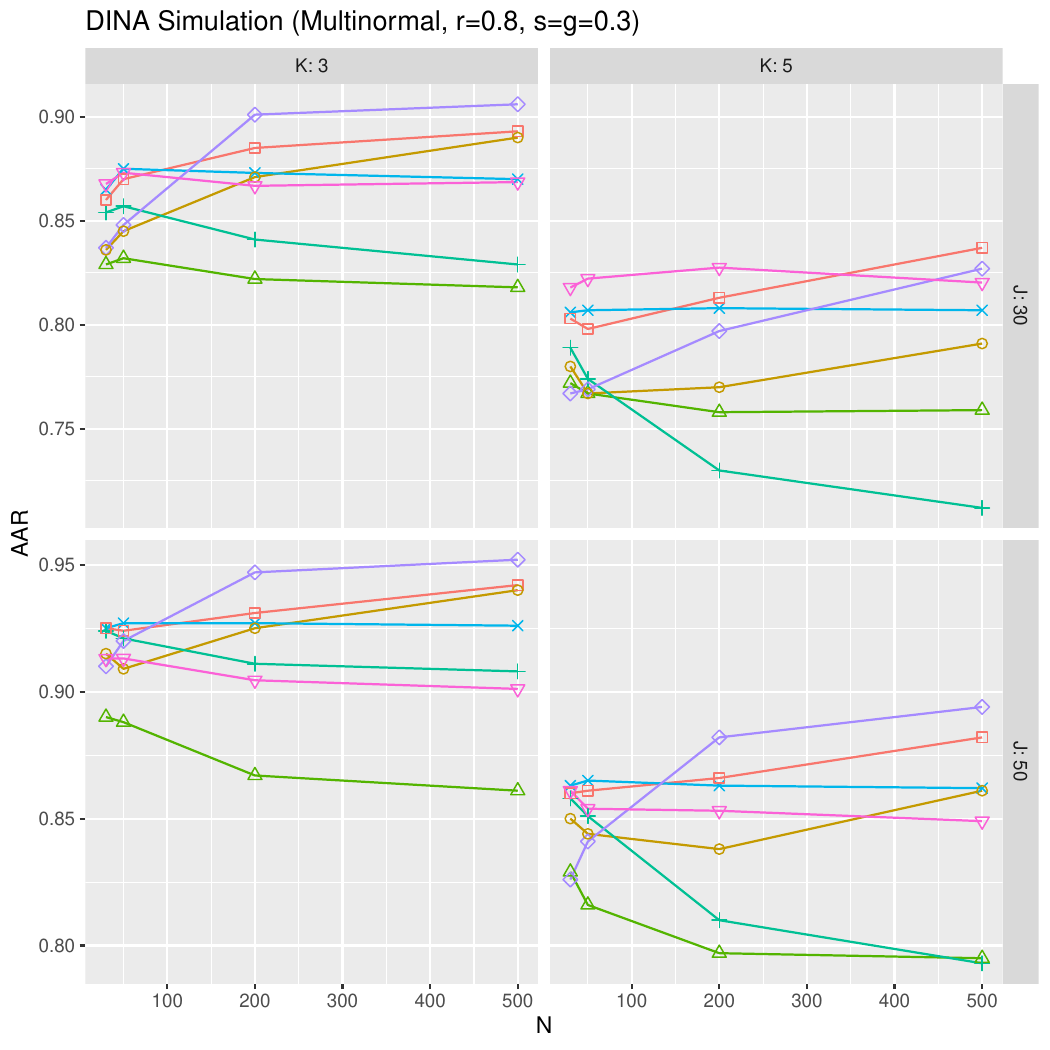}
    }
    \subfigure{
    \includegraphics[width=0.45in]{plots/AAR_legend.png}
    }
    \caption{AARs when the data conformed to the DINA model}
    \label{fig:DINA-AAR-zoom}
\end{figure}{}

\begin{figure}[H]
    \centering
    \subfigure{
    \includegraphics[width=2.56in]{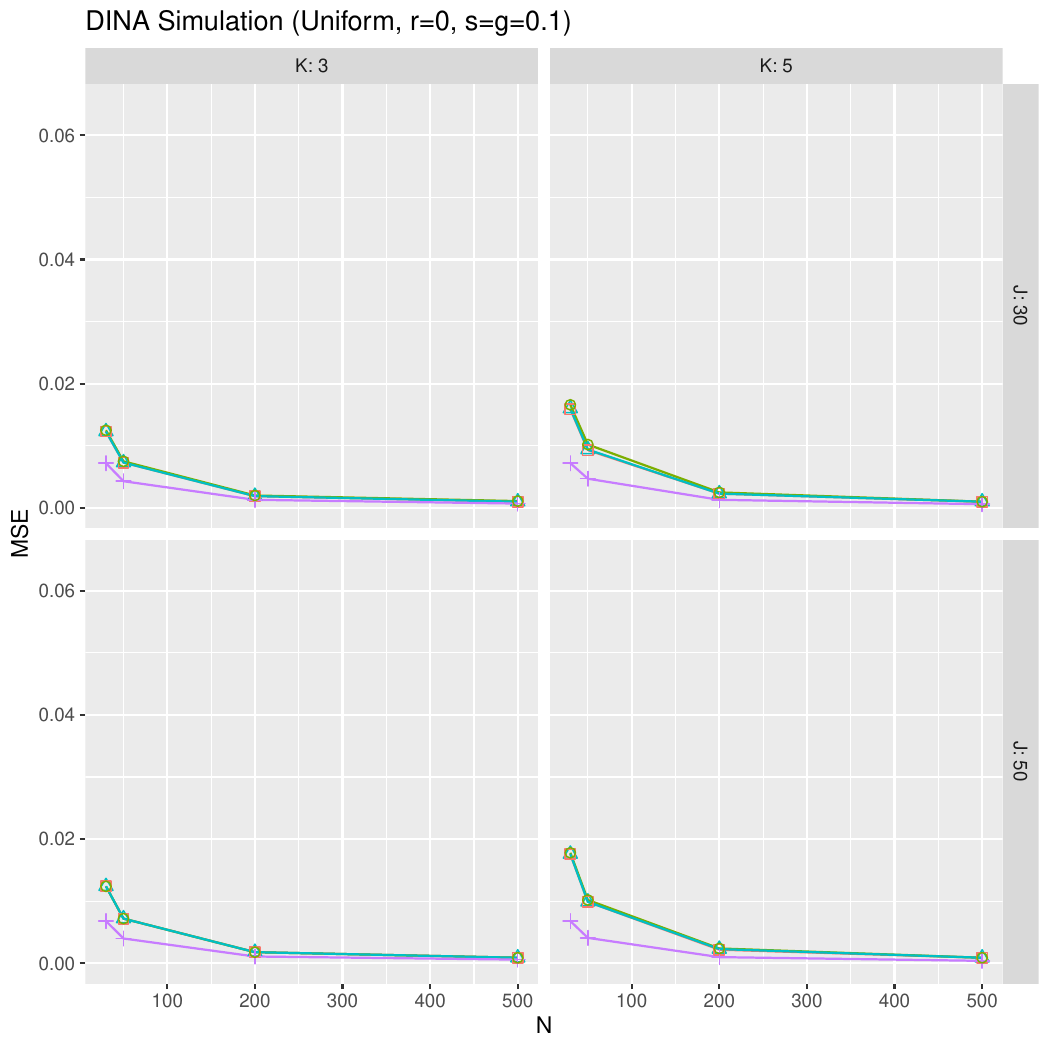}
    }
    \subfigure{
    \includegraphics[width=2.56in]{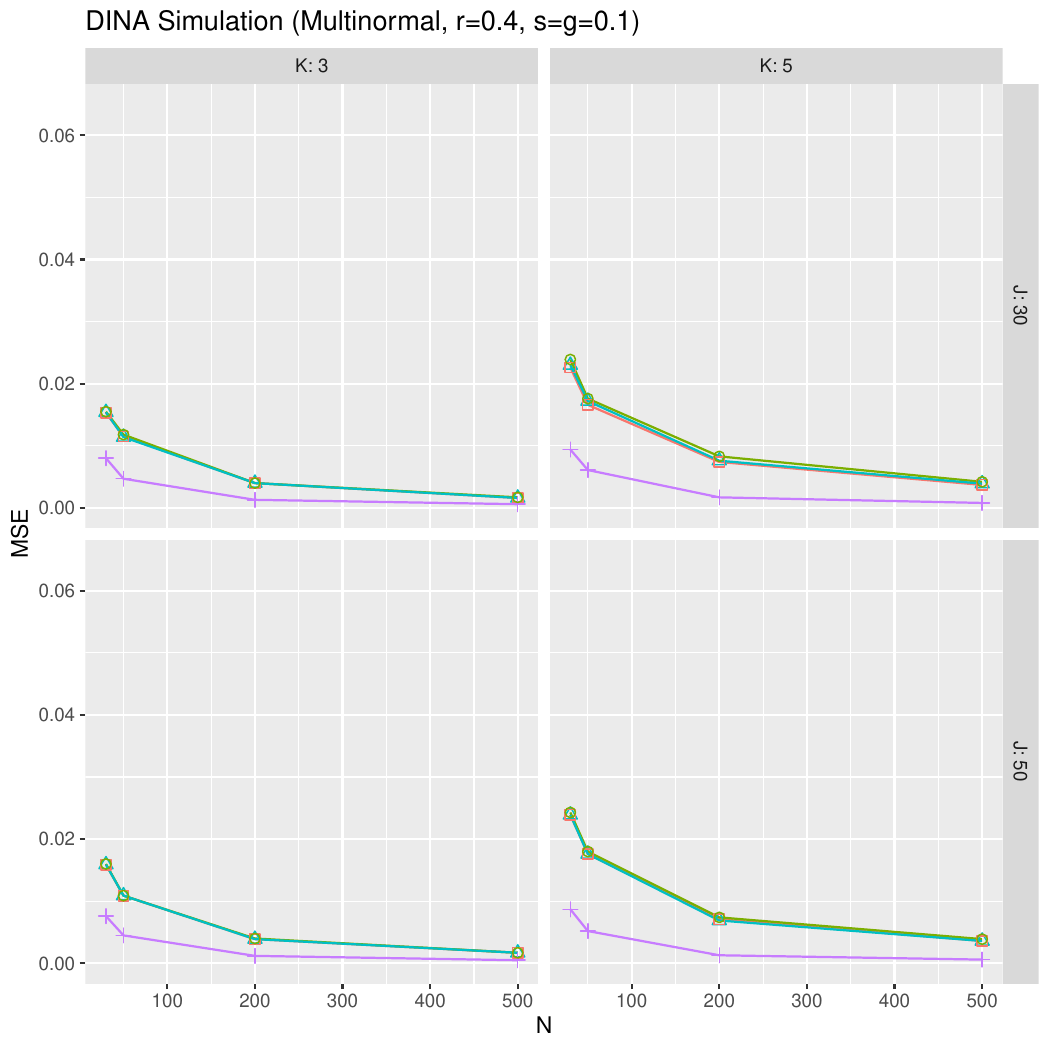}
    }
    \subfigure{
    \includegraphics[width=2.56in]{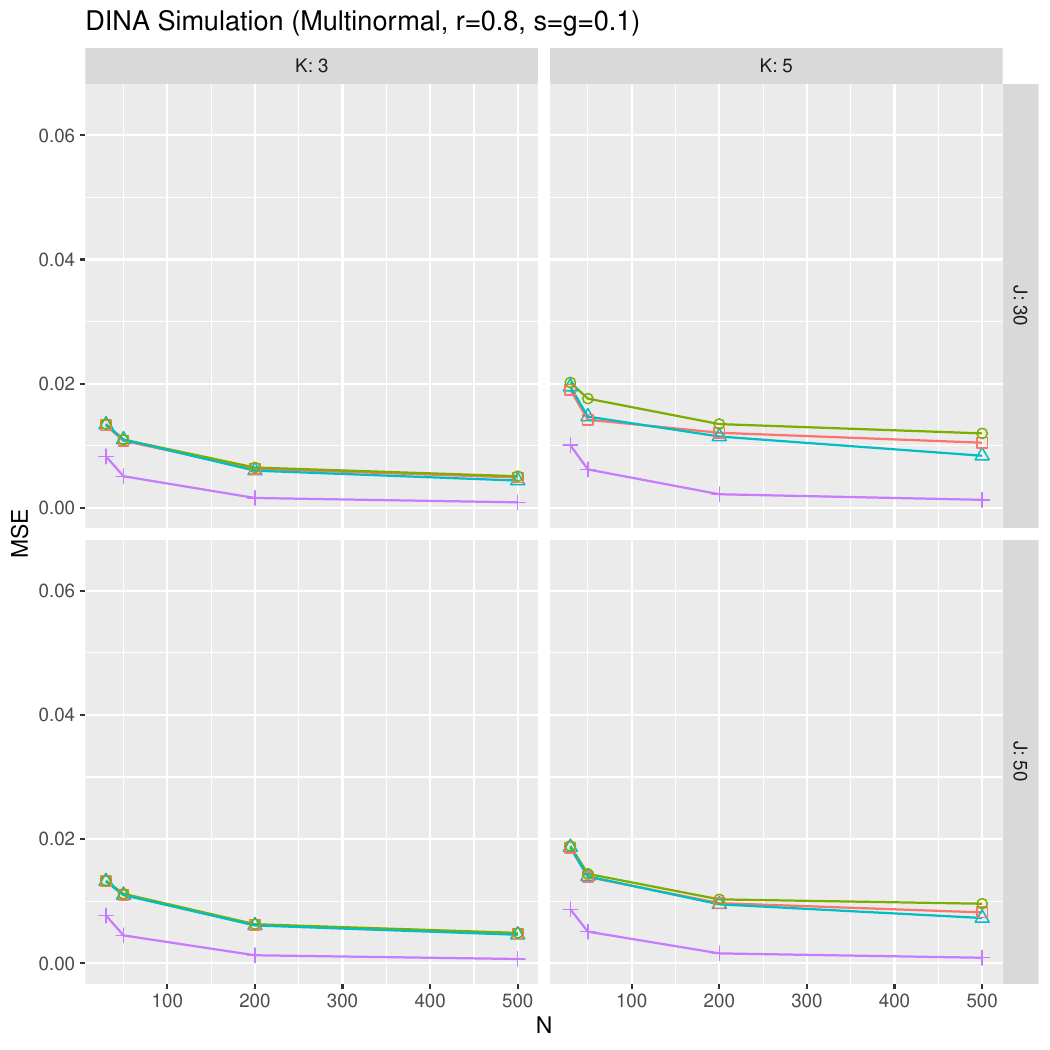}
    }
    \subfigure{
    \includegraphics[width=0.45in]{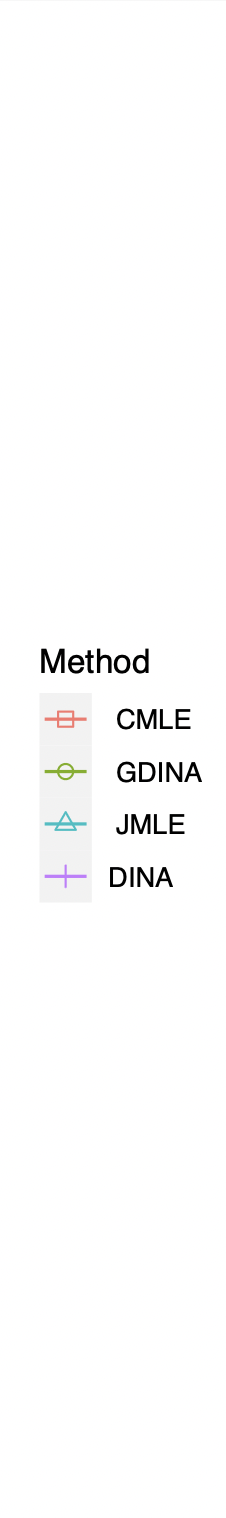}
    }
	\\
    \subfigure{
    \includegraphics[width=2.56in]{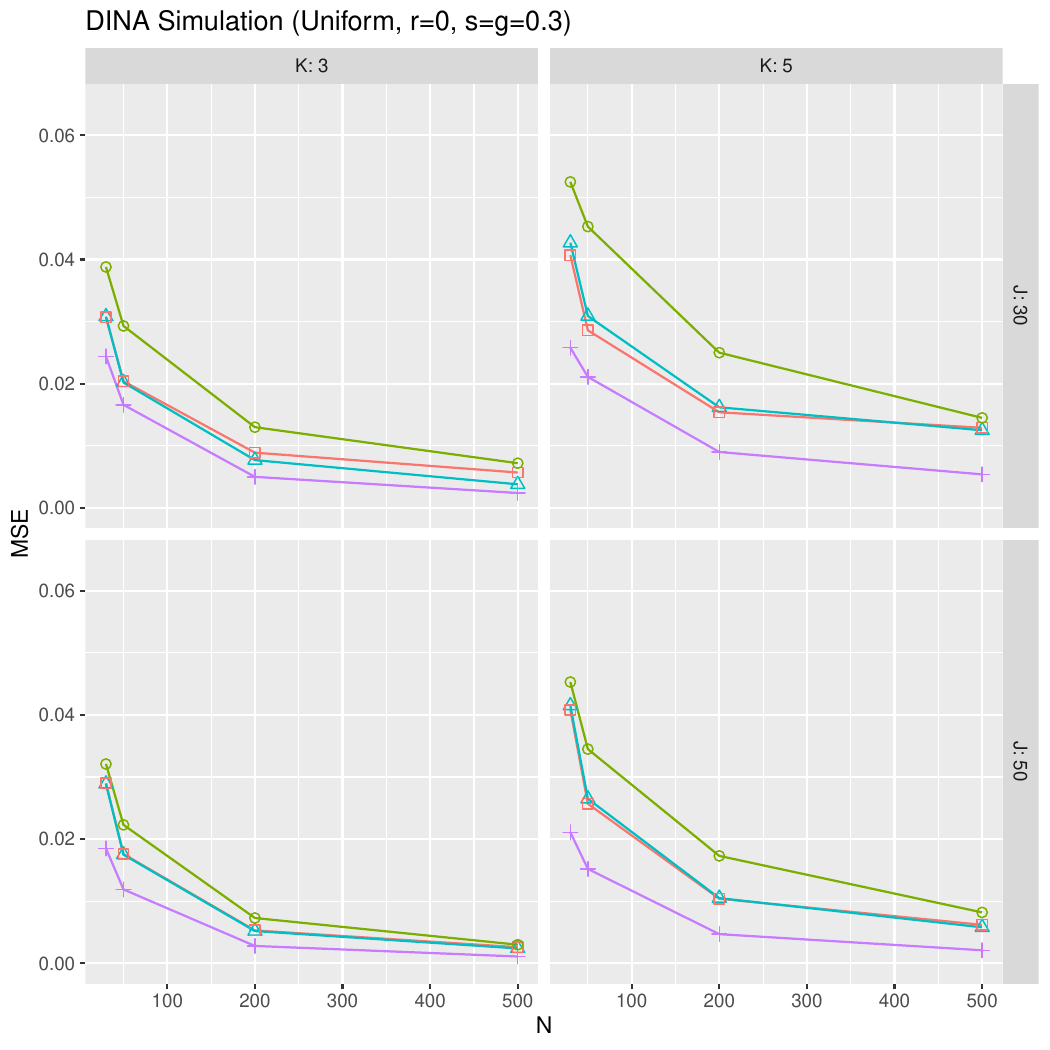}
    }
    \subfigure{
    \includegraphics[width=2.56in]{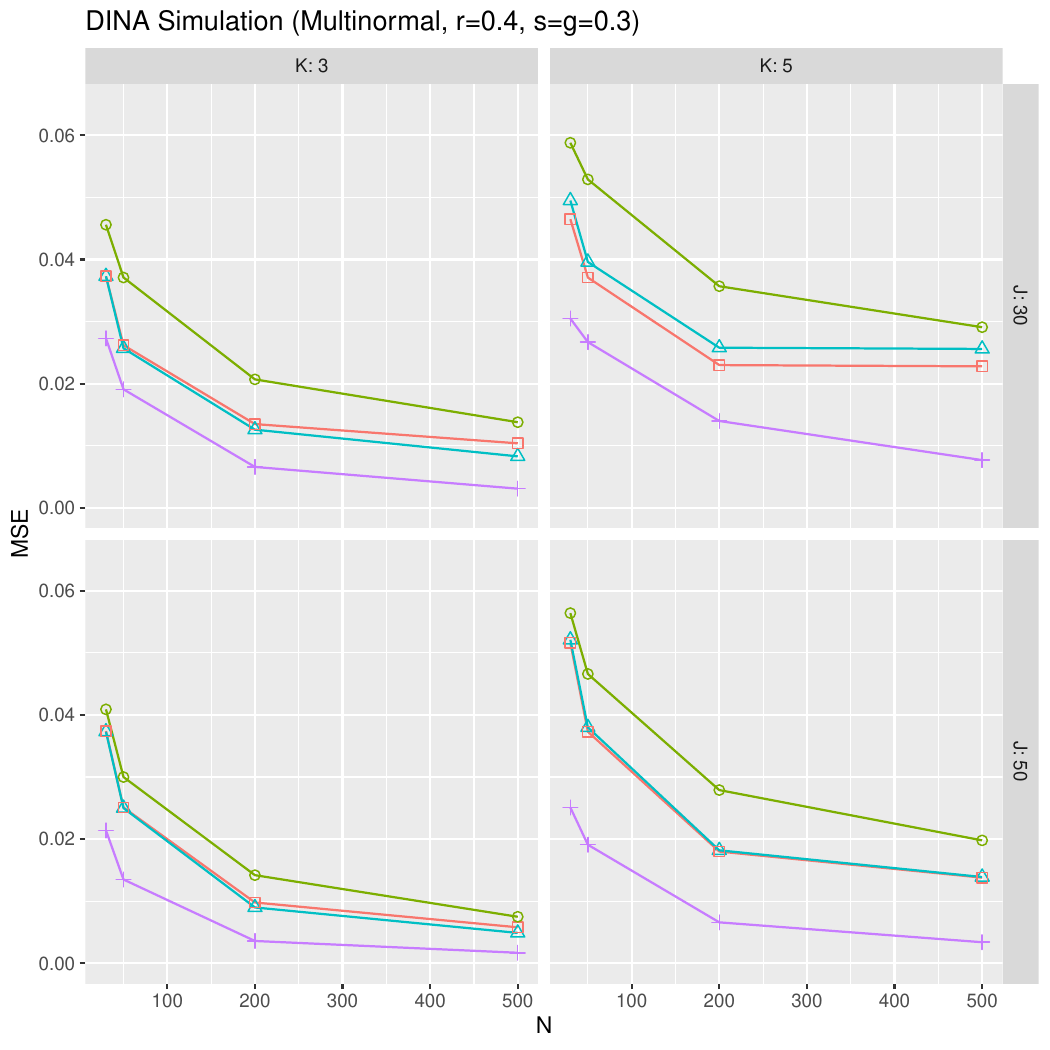}
    }
    \subfigure{
    \includegraphics[width=2.56in]{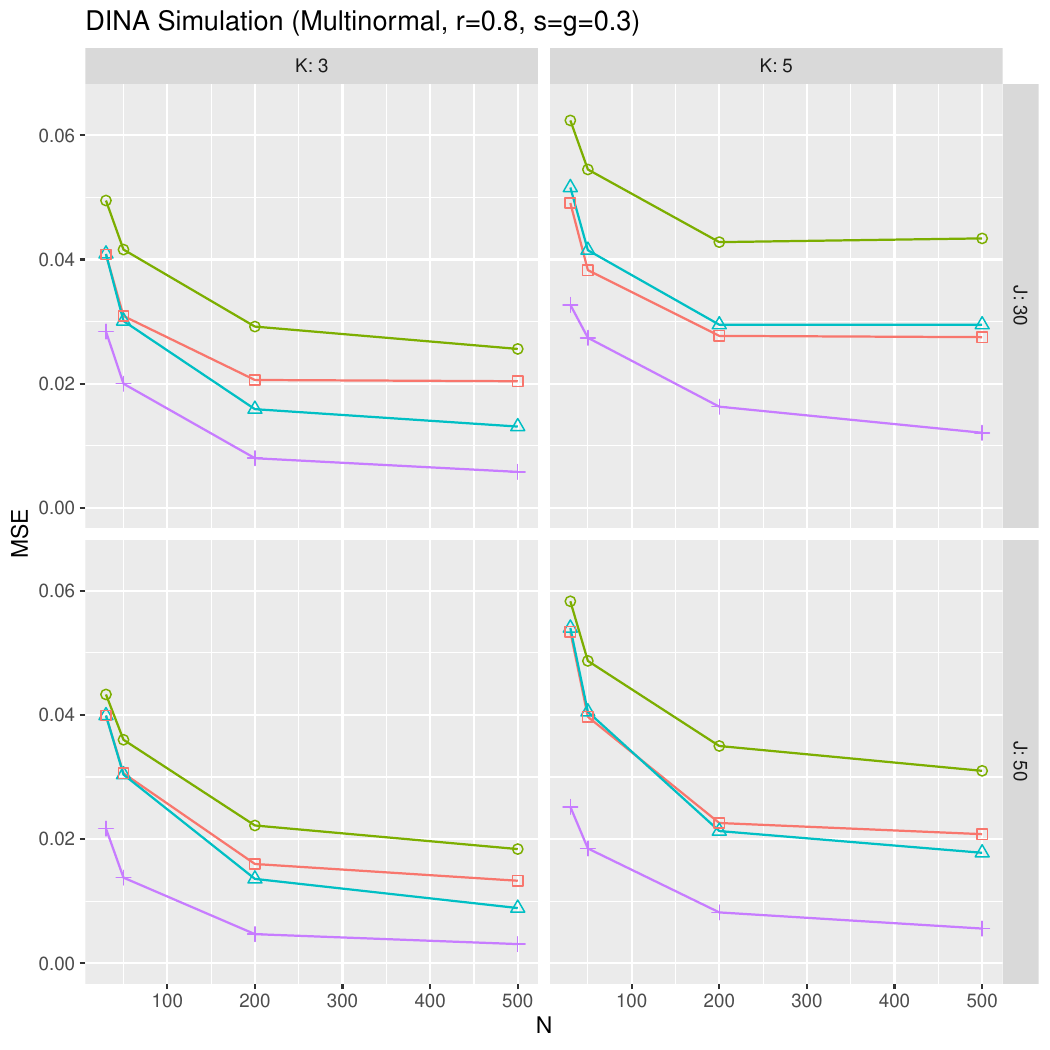}
    }
    \subfigure{
    \includegraphics[width=0.45in]{plots/MSE_legend.png}
    }
    \caption{MSE when the data conformed to the DINA model}
    \label{fig:DINA-MSE}
\end{figure}{}

\end{landscape}

\begin{landscape}
	\begin{figure}[H]
    \centering
    \subfigure{
    \includegraphics[width=2.56in]{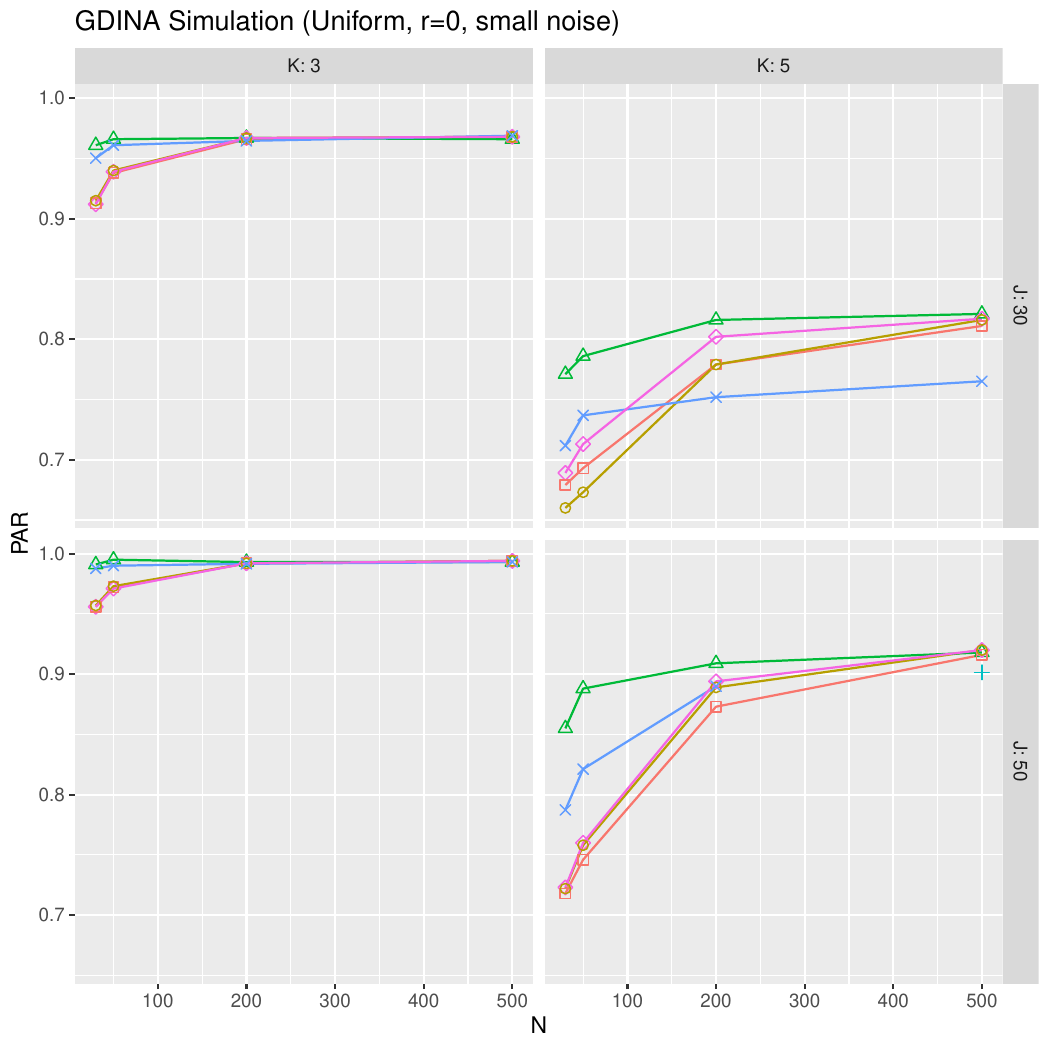}
    }
    \subfigure{
    \includegraphics[width=2.56in]{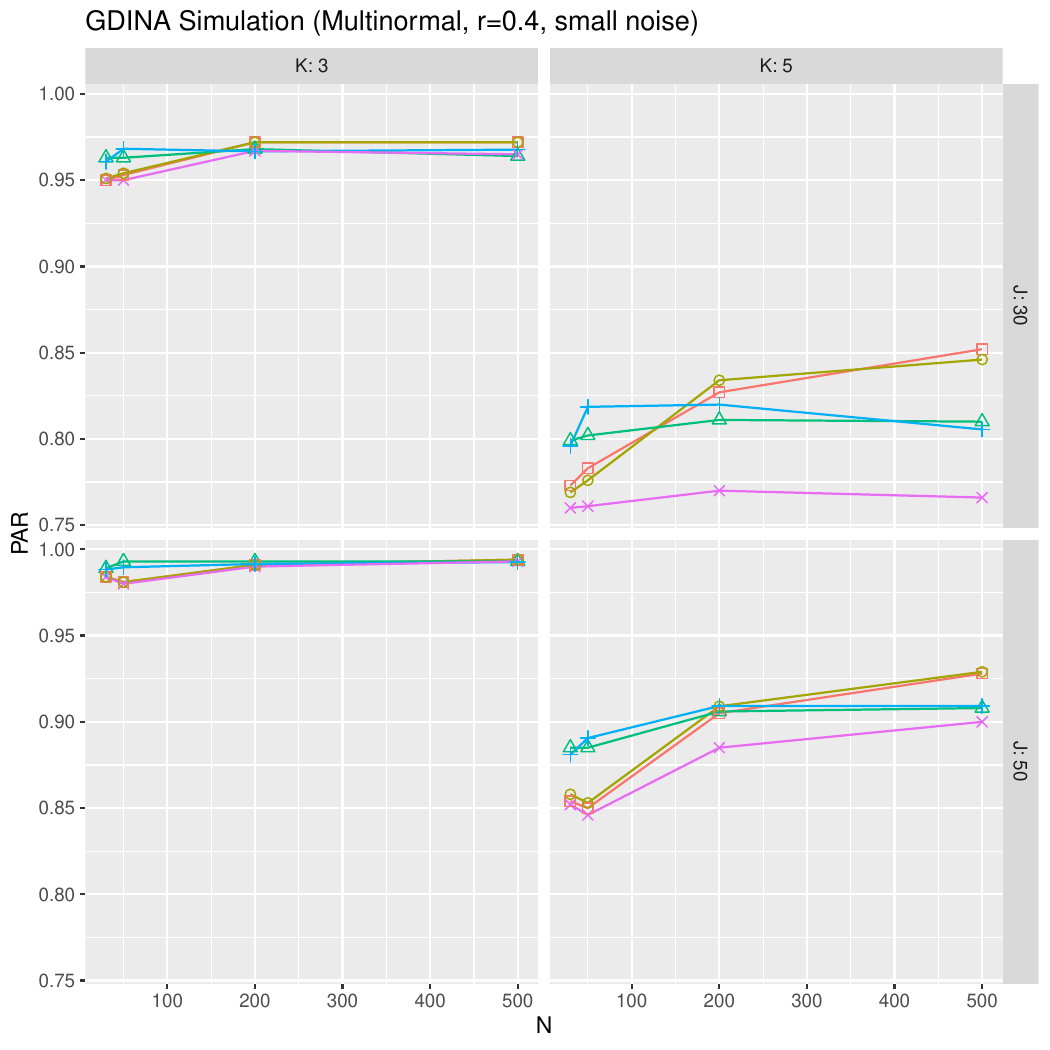}
    }
    \subfigure{
    \includegraphics[width=2.56in]{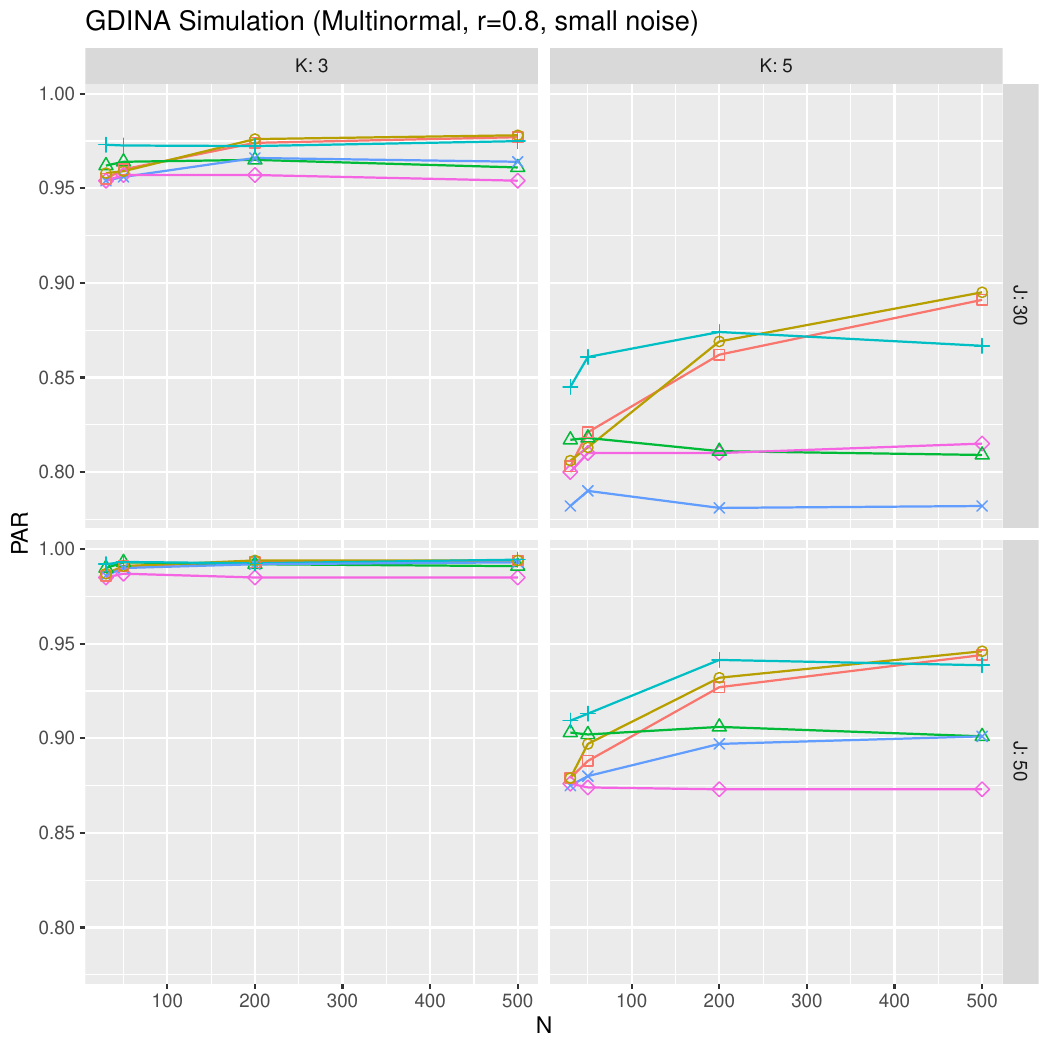}
    }
    \subfigure{
    \includegraphics[width=0.45in]{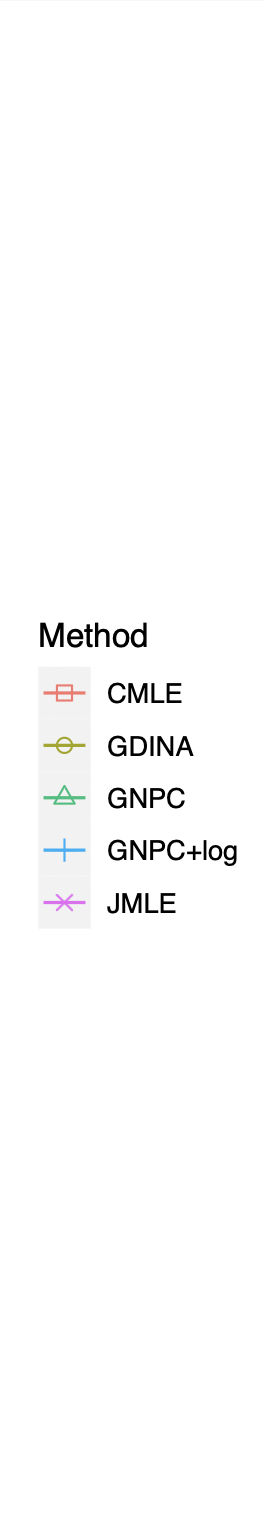}
    }
    \\
    \subfigure{
    \includegraphics[width=2.56in]{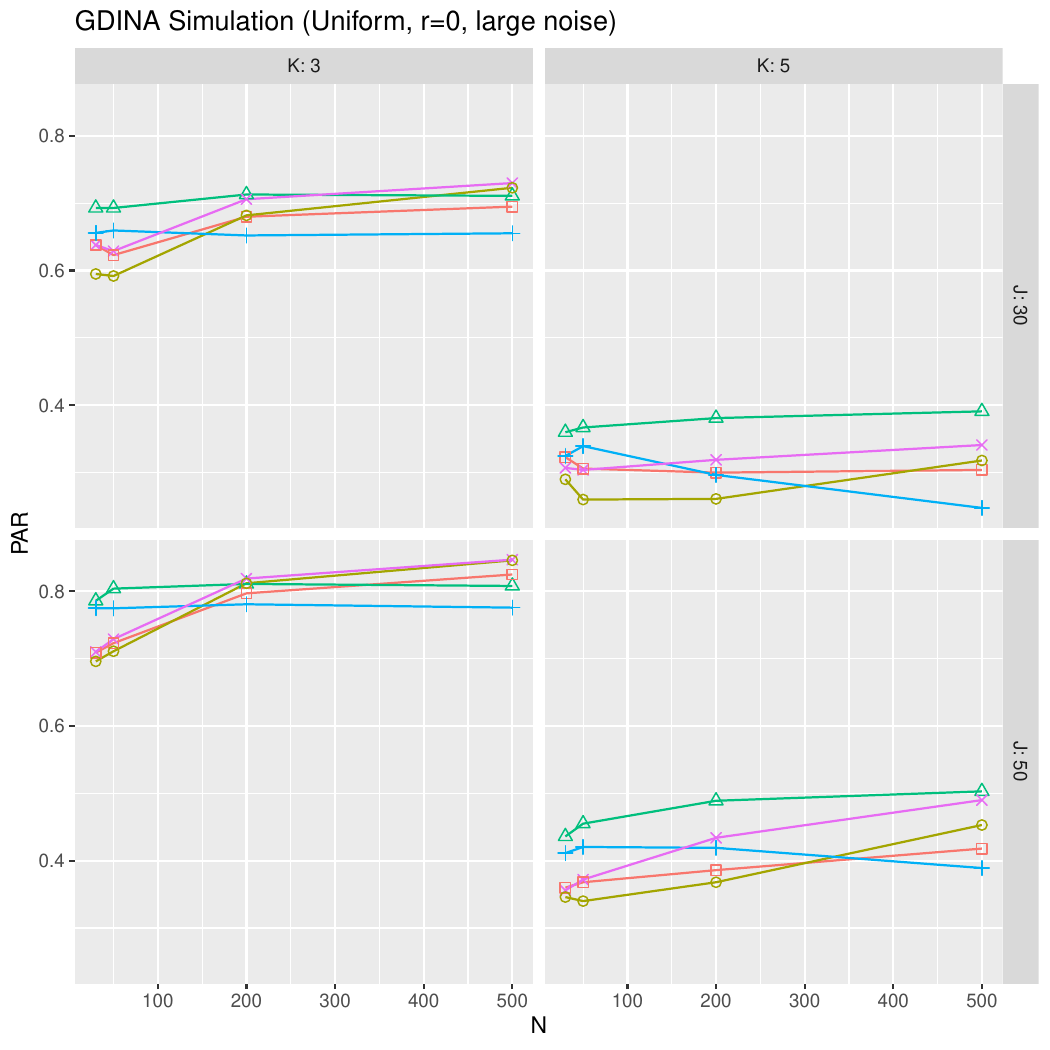}
    }
    \subfigure{
    \includegraphics[width=2.56in]{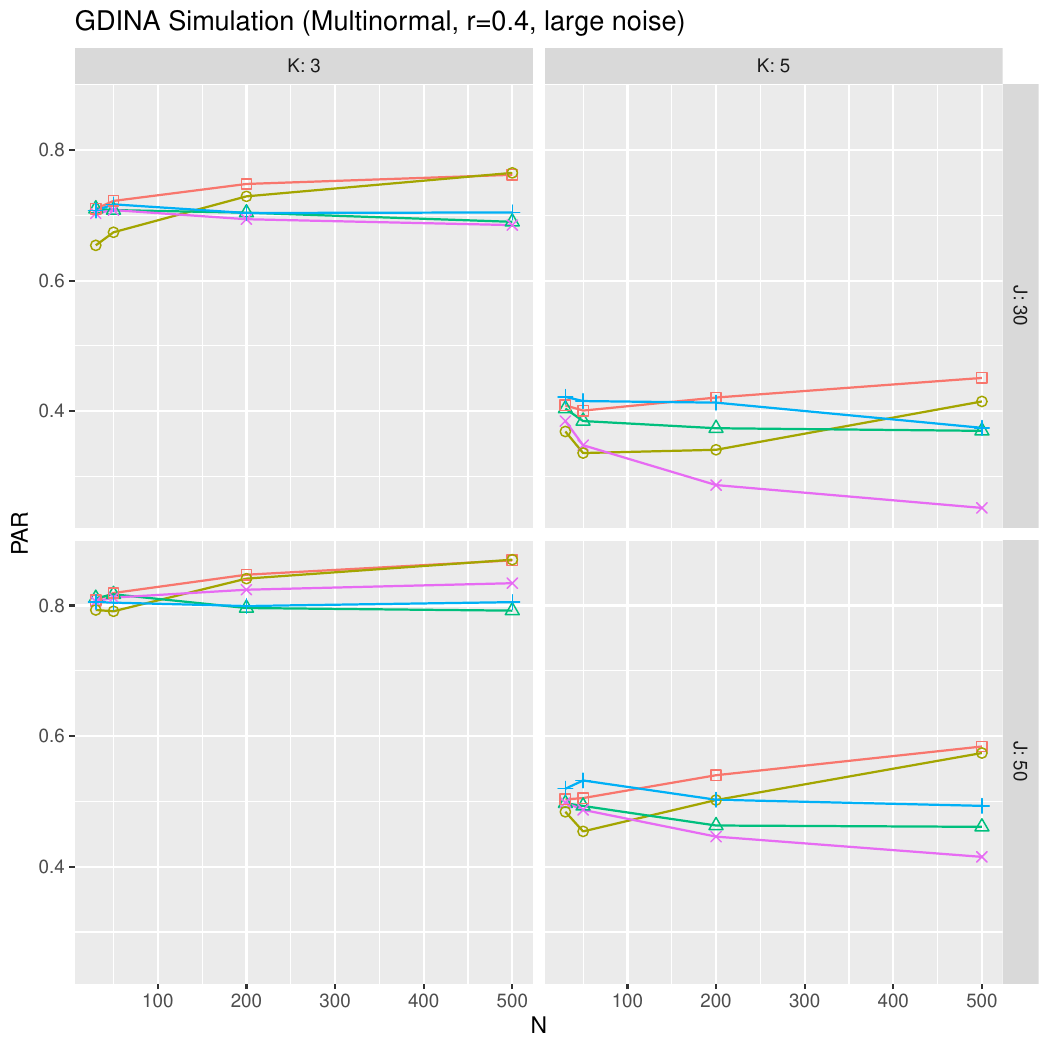}
    }
    \subfigure{
    \includegraphics[width=2.56in]{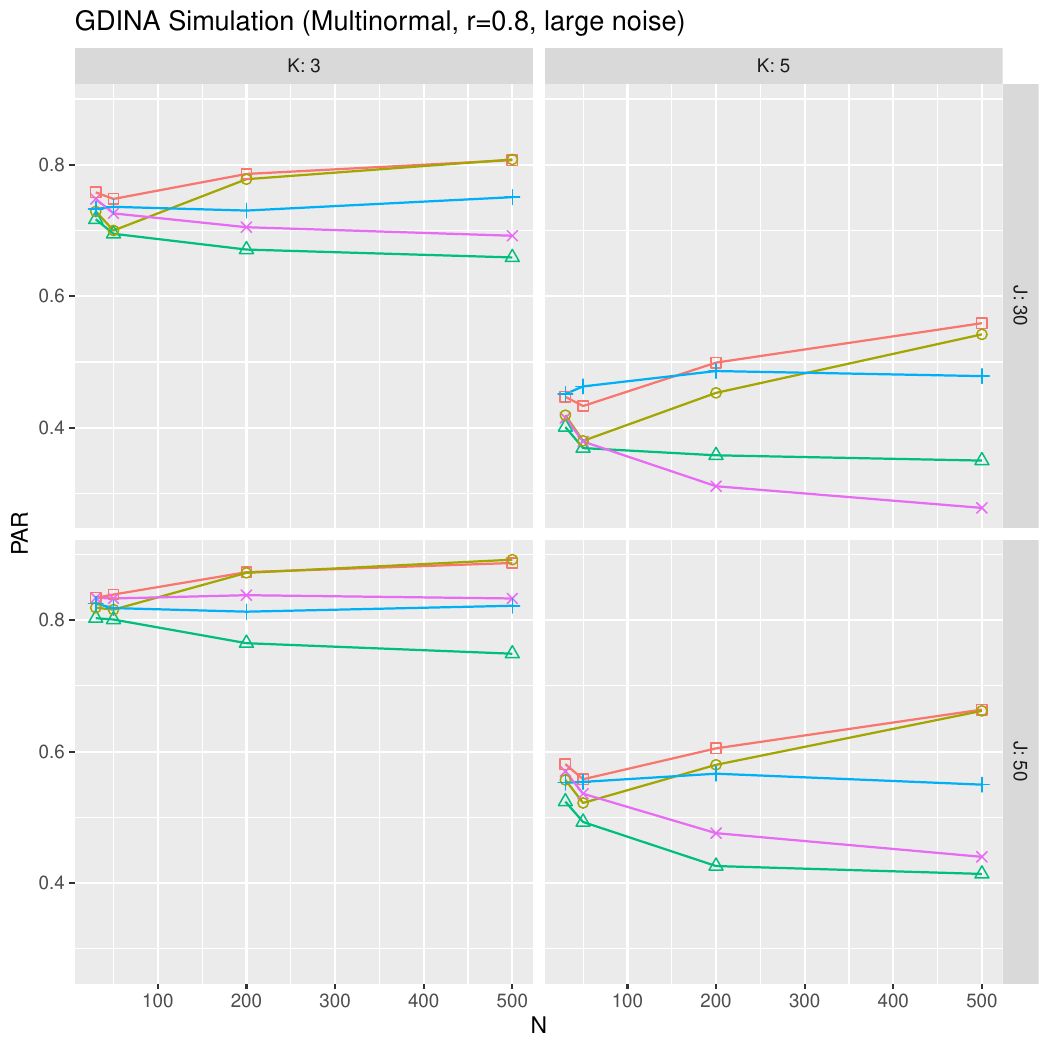}
    }
    \subfigure{
    \includegraphics[width=0.45in]{plots/GDINA_legend.png}
    }
    \caption{PARs when the data conformed to the GDINA model}
    \label{fig:GDINA-PAR-zoom}
\end{figure}{}
\end{landscape}

\begin{landscape}
	\begin{figure}[H]
    \centering
    \subfigure{
    \includegraphics[width=2.56in]{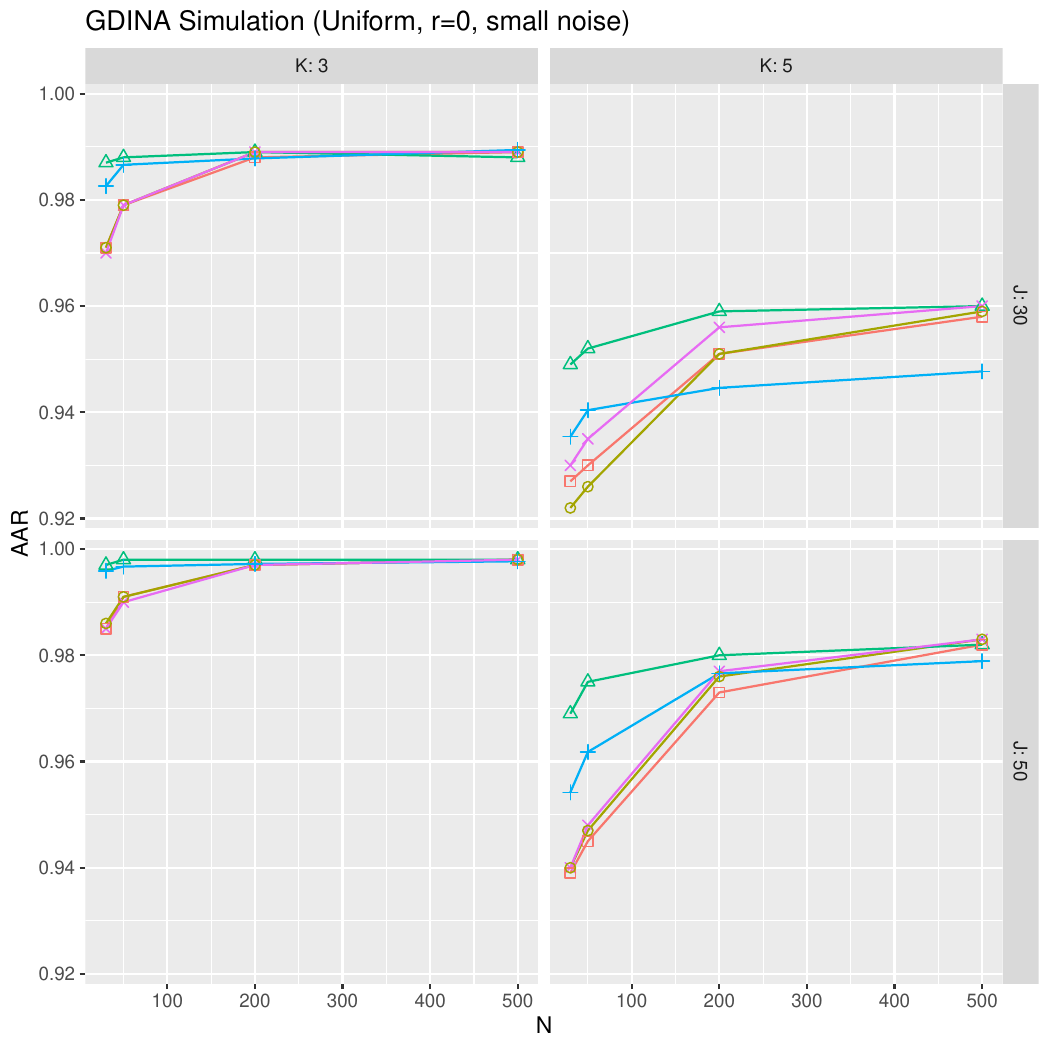}
    }
    \subfigure{
    \includegraphics[width=2.56in]{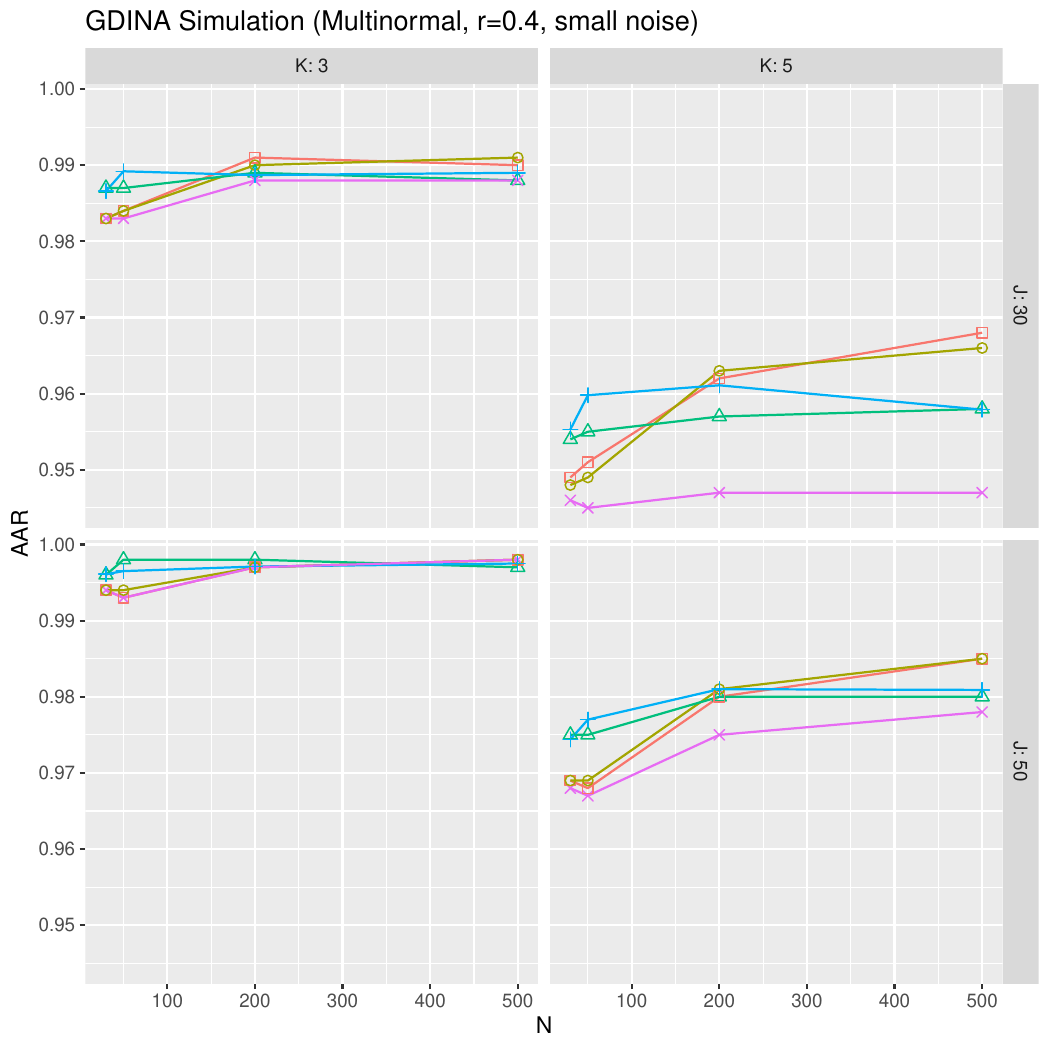}
    }
    \subfigure{
    \includegraphics[width=2.56in]{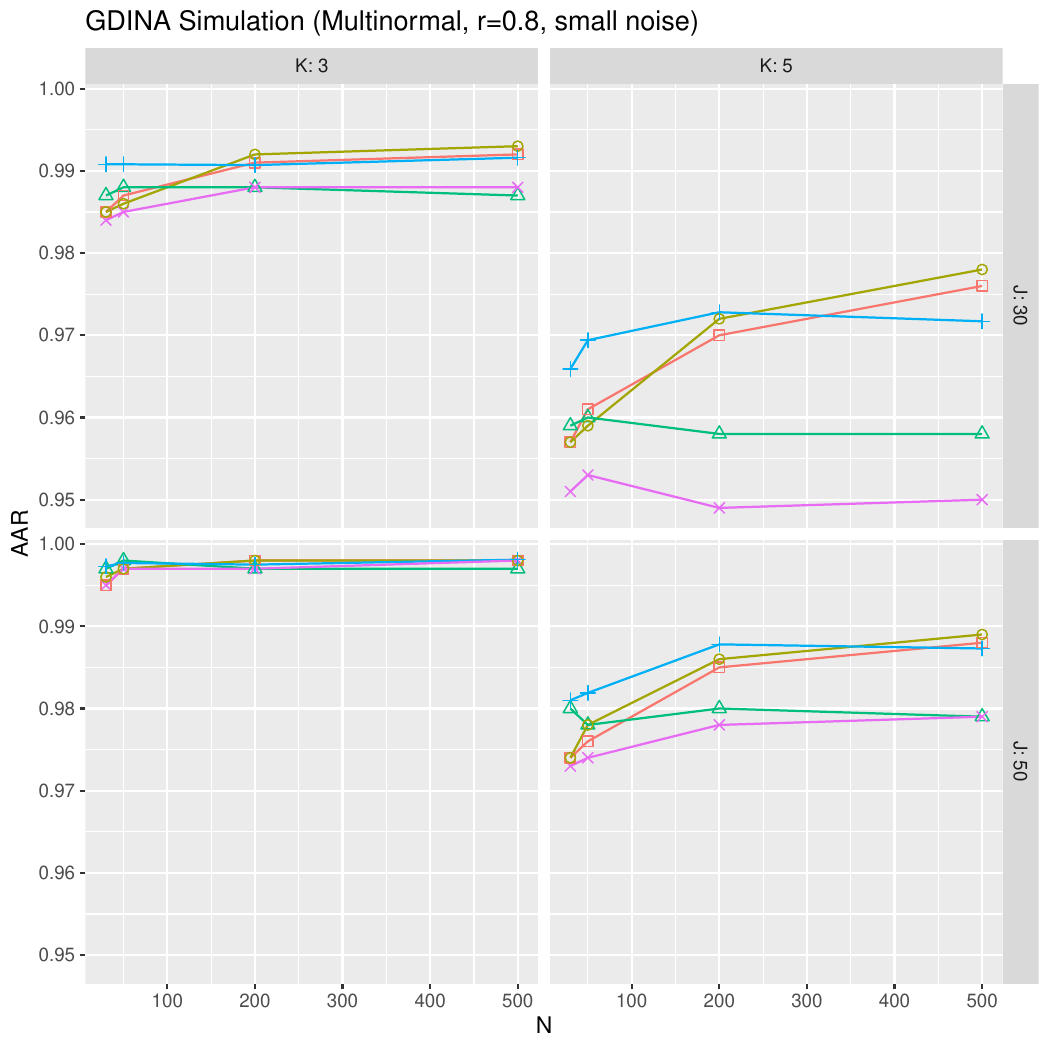}
    }
    \subfigure{
    \includegraphics[width=0.45in]{plots/GDINA_legend.png}
    }
    \\
    \subfigure{
    \includegraphics[width=2.56in]{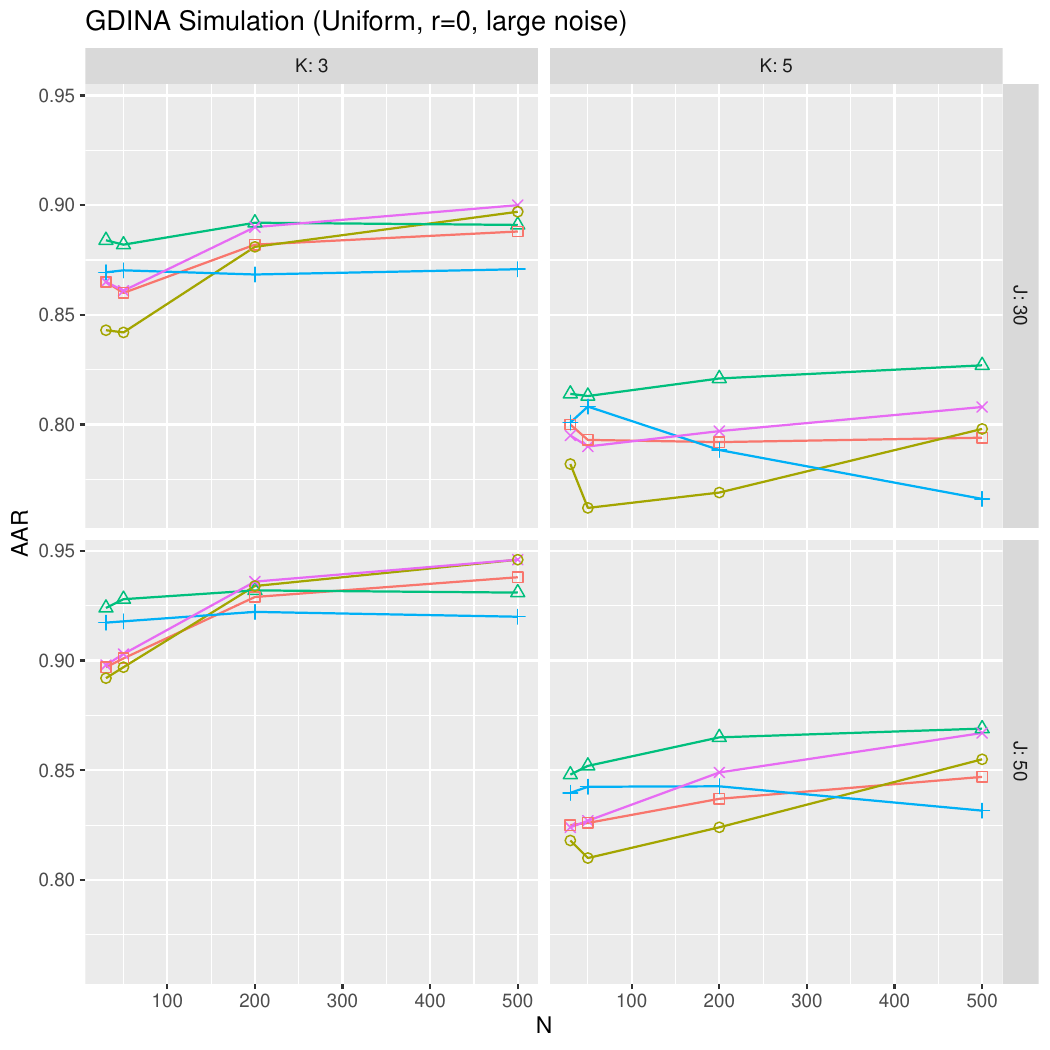}
    }
    \subfigure{
    \includegraphics[width=2.56in]{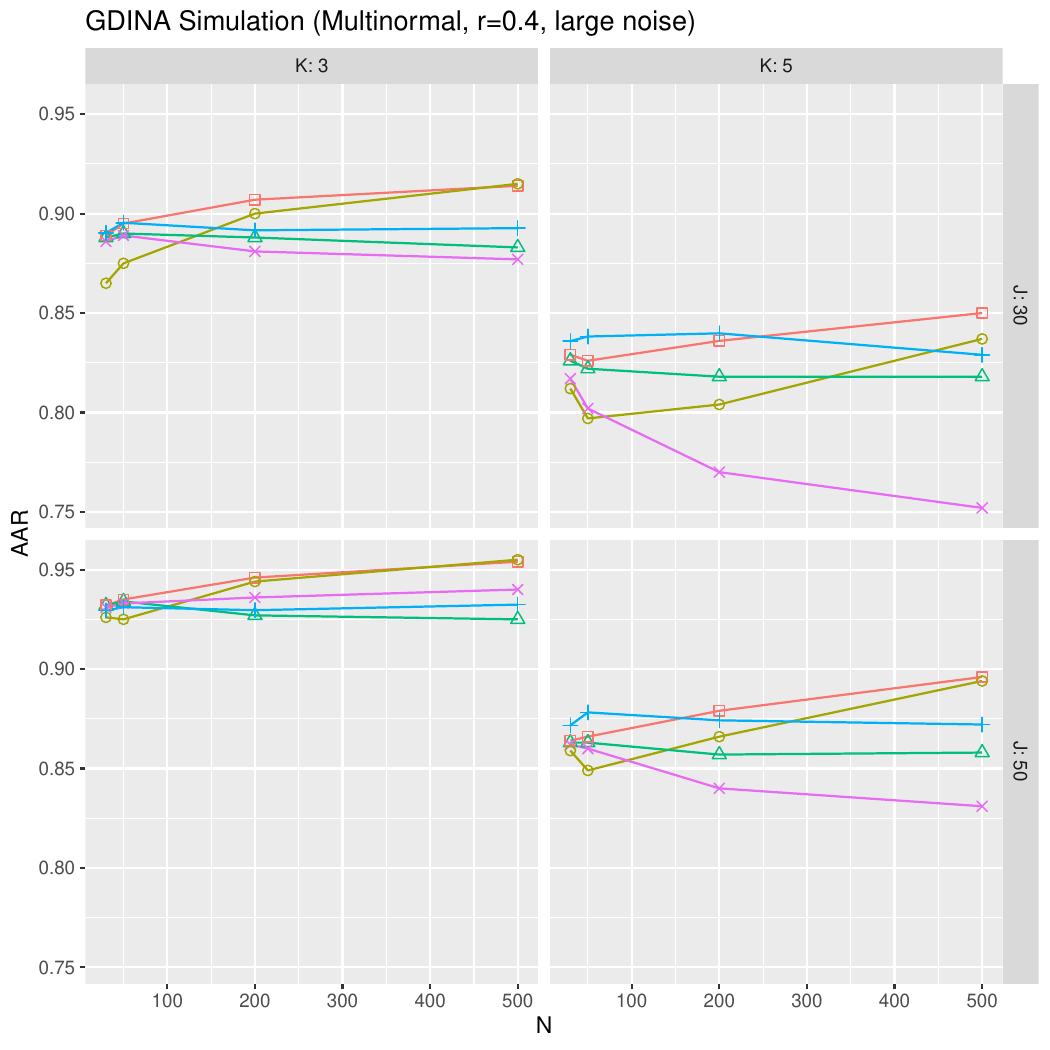}
    }
    \subfigure{
    \includegraphics[width=2.56in]{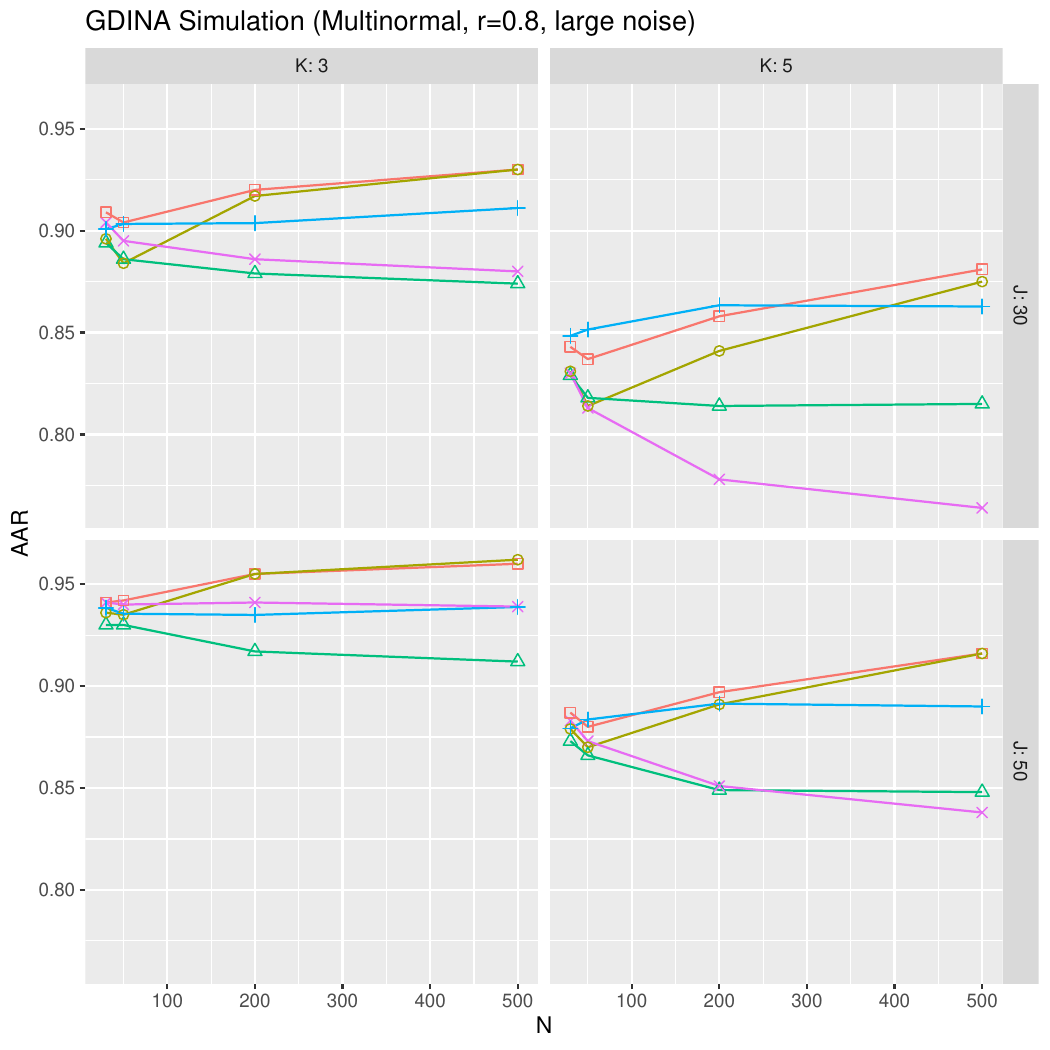}
    }
    \subfigure{
    \includegraphics[width=0.45in]{plots/GDINA_legend.png}
    }
    \caption{AARs when the data conformed to the GDINA model}
    \label{fig:GDINA-AAR-zoom}
\end{figure}{}
\end{landscape}

\begin{landscape}
\begin{figure}[H]
    \centering
    \subfigure{
    \includegraphics[width=2.56in]{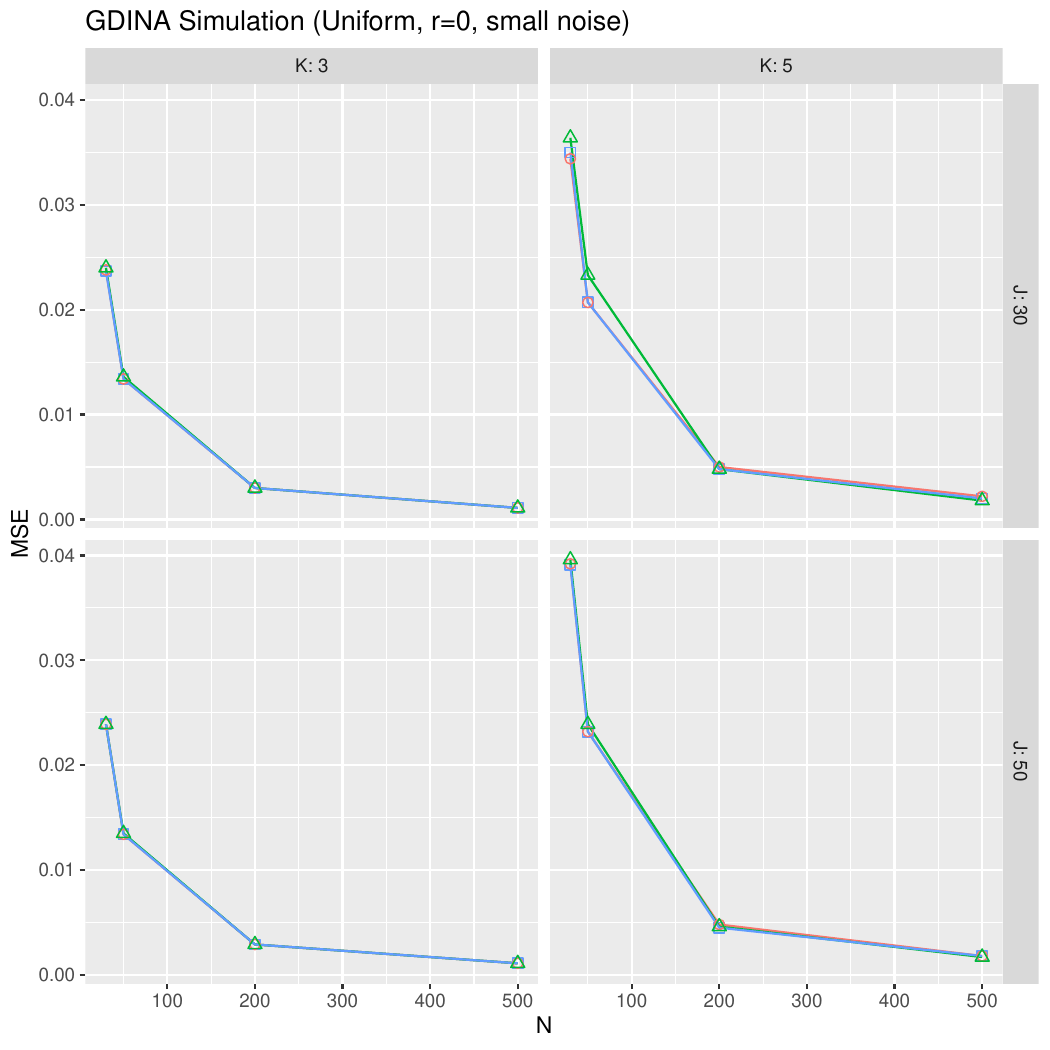}
    }
    \subfigure{
    \includegraphics[width=2.56in]{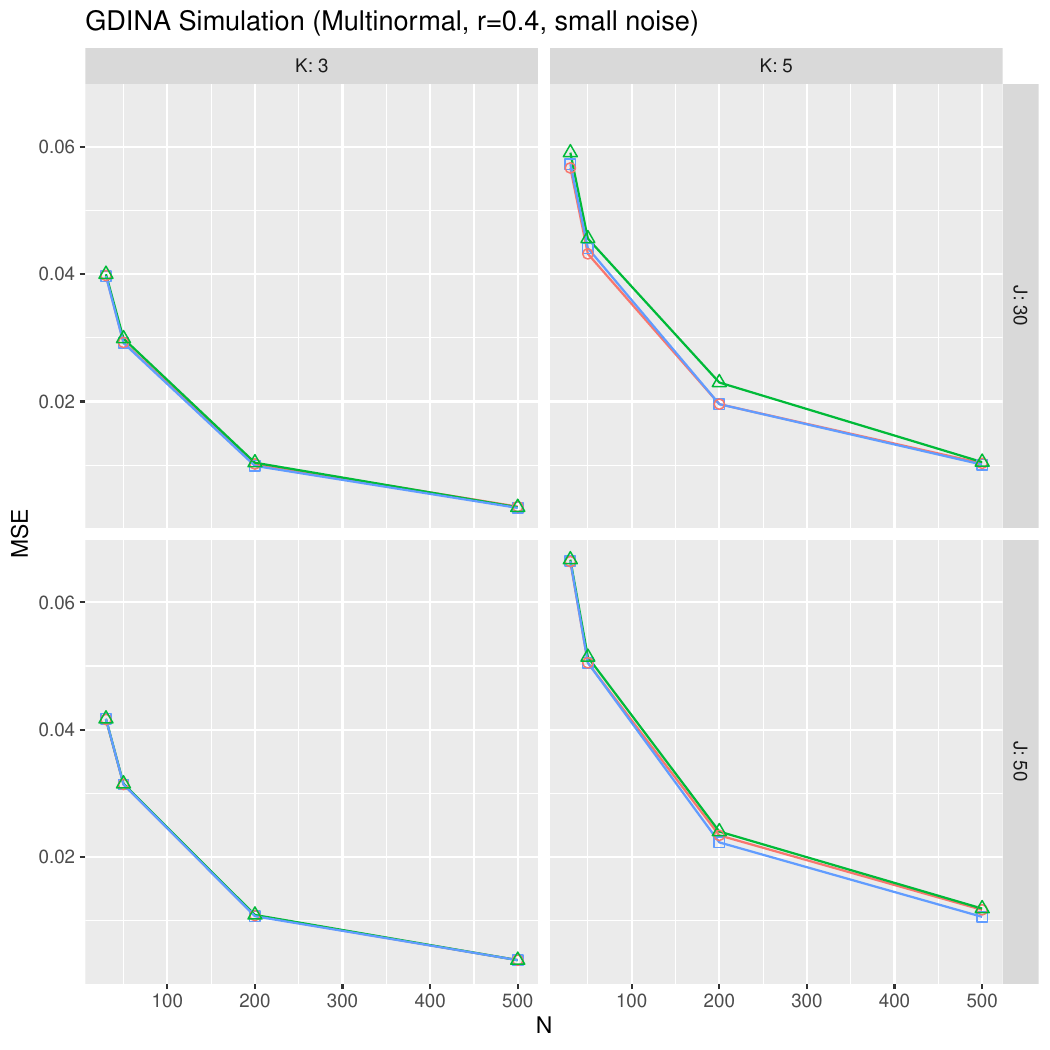}
    }
    \subfigure{
    \includegraphics[width=2.56in]{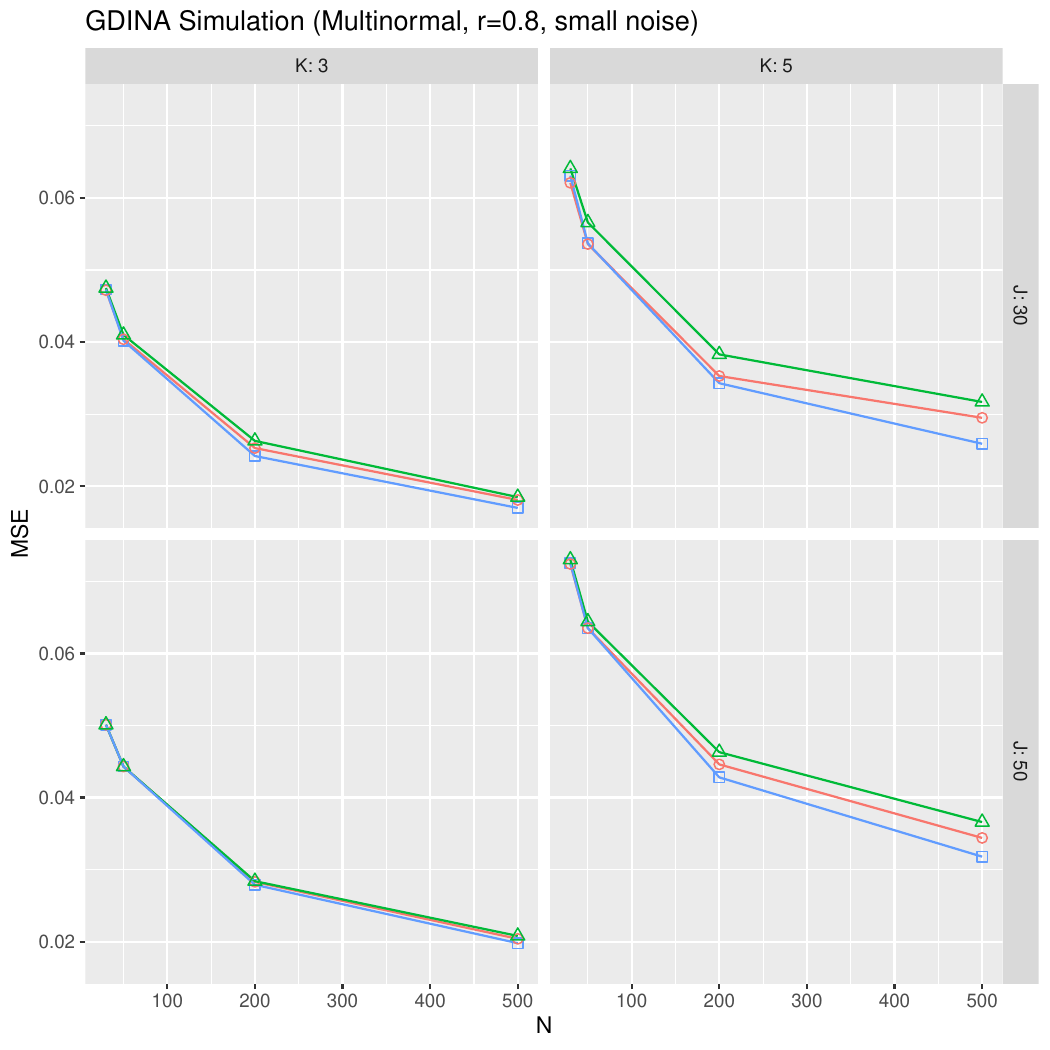}
    }
    \subfigure{
    \includegraphics[width=0.35in]{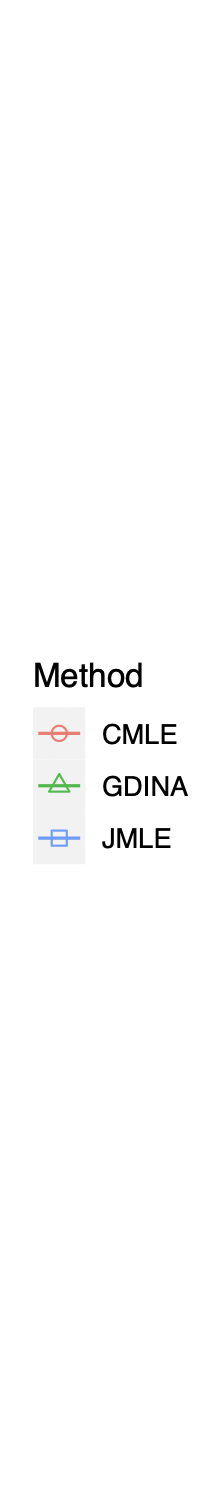}
    } 
    \\
    \subfigure{
    \includegraphics[width=2.56in]{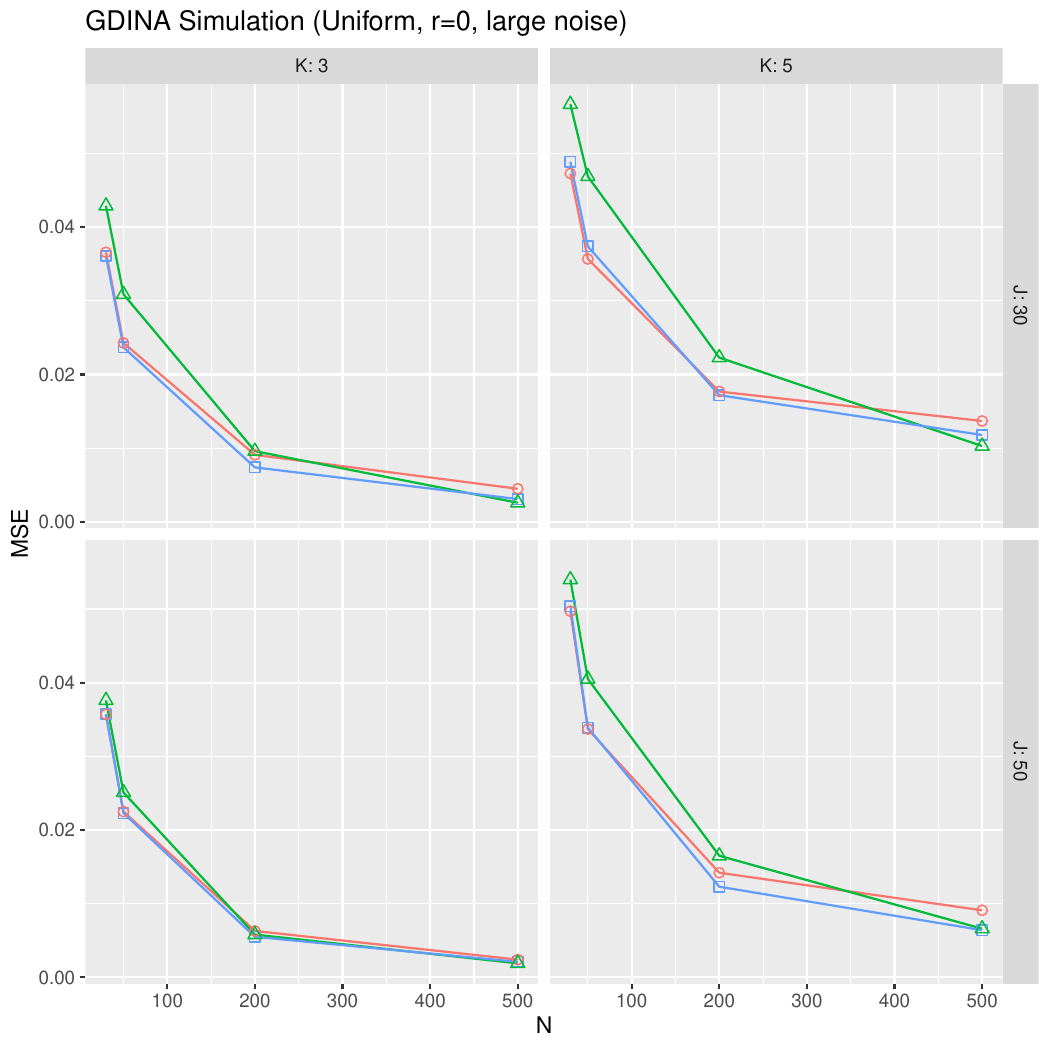}
    }
   	\subfigure{
    \includegraphics[width=2.56in]{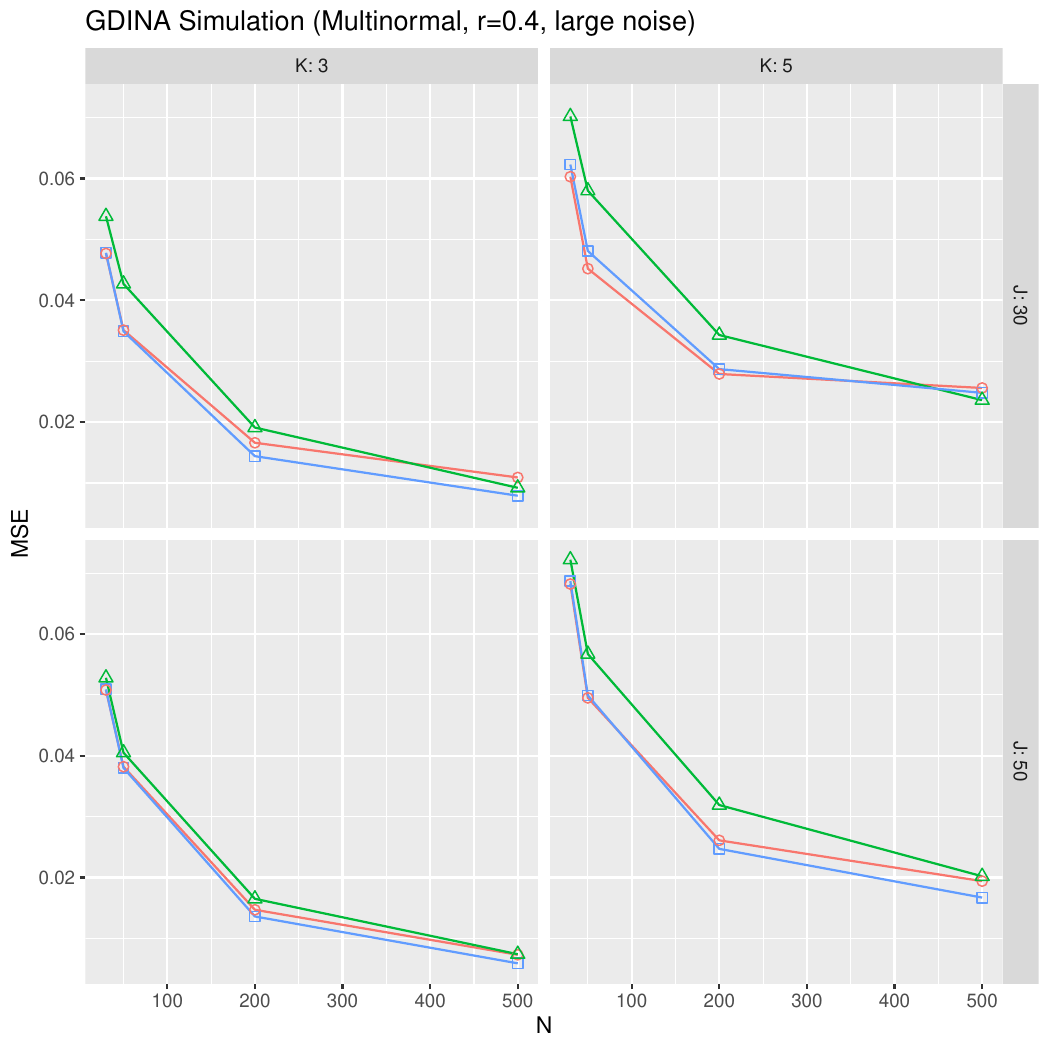}
   	}
    \subfigure{
    \includegraphics[width=2.56in]{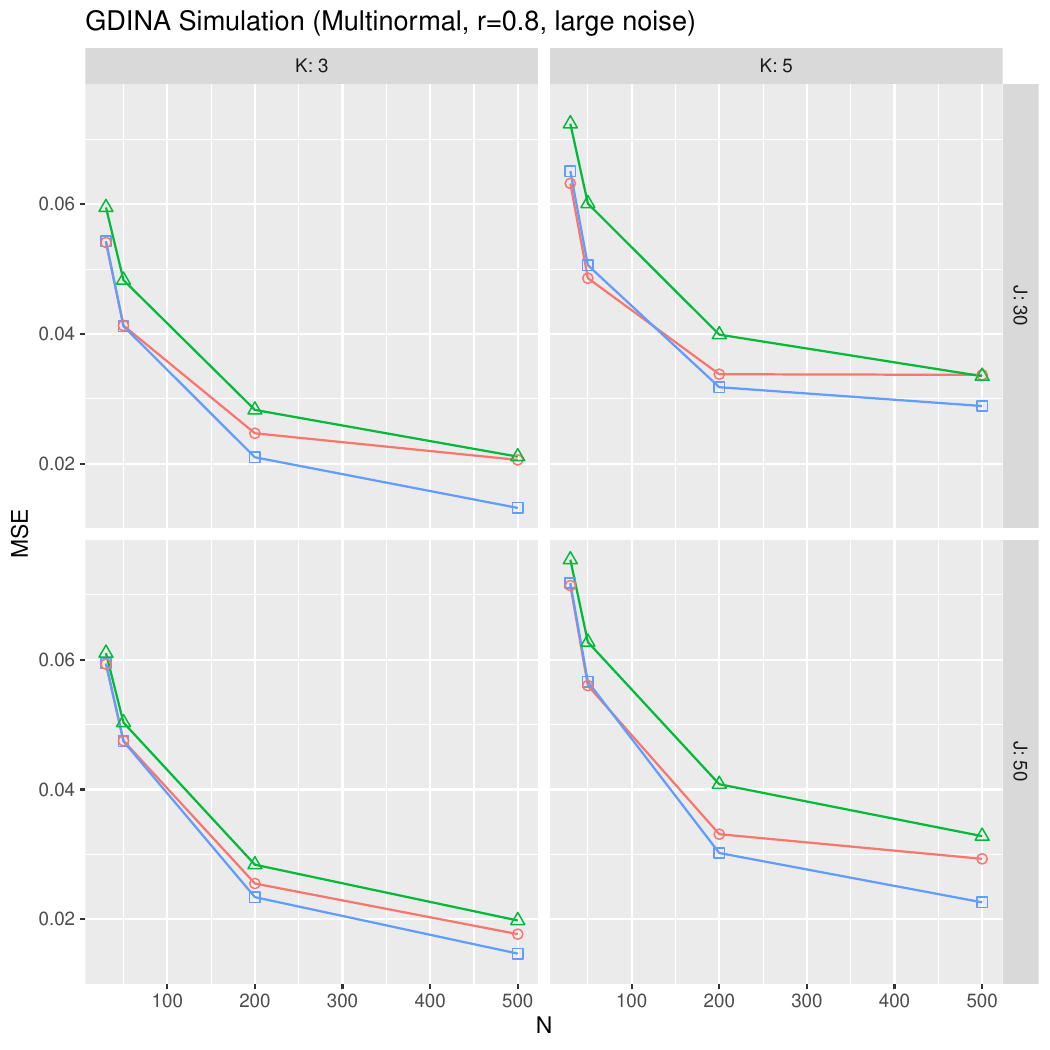}
    }
    \subfigure{
    \includegraphics[width=0.35in]{plots/GDINA_MSE_legend.png}
    }
    \caption{MSEs when the data conformed to the GDINA model}
    \label{fig:GDINA-MSE}
	\end{figure}{}
\end{landscape}

\section{Discussion}

In this paper, a unified estimation framework is proposed to bridge the parametric and nonparametric methods of cognitive diagnosis, and corresponding computational algorithms are developed.
Specifically, by choosing different loss functions and potentially imposing additional constraints on the centroids of the proficiency classes, the proposed framework essentially provides estimations for both parametric cognitive diagnosis models and nonparametric methods for classifying subjects to proficiency classes.
Moreover, we also provide theoretical analysis and establish consistency theories of the proposed framework.
The simulation studies under various settings demonstrate the advantages and disadvantages of different methods.

In our proposed framework \eqref{eq-C2}, we decompose the loss function into two additive parts.
In addition to the losses between the responses and class centroids, we also put a regularization term on the class proportions.
The regularization term can also play a role in selecting significant latent classes in the population.
For instance, similar to the CML in Examples \ref{ex-DINA} and \ref{ex-GDINA},  a log-type penalty  $h(\pi_{\boldsymbol{\alpha}}) = -\lambda \log (\pi_{\boldsymbol{\alpha}})$, where $\lambda > 0$ is a tuning parameter and $\pi_{\boldsymbol{\alpha}}$ is the proportion parameter for the latent pattern $\boldsymbol{\alpha}$, can be used.
Such a log-type penalty penalizes smaller proportions more heavily, and
as recently shown in \cite{gu2019learning}, can effectively select significant latent classes in the population.
Alternatively, to perform such latent class selection, the use of Lasso or elastic-net type penalty can be explored in the future.

Another interesting problem is the uncertainty quantification of the latent pattern classification.
Since in the proposed framework we directly assign the latent patterns by minimizing a loss function, the subjects' latent patterns are treated as fixed effects instead of random variables.
Based on the clustering literature, it is theoretically challenging to quantify the uncertainty of clustering accuracy.
One practical approach is to use bootstrap, where we resample the data multiple times and use the bootstrapped samples to quantify the estimation and classification uncertainty.
It is also possible to further model latent pattern probabilities and use large deviation theory to approximate the misclassification errors.
For instance, \cite{liu2015rate} studied the asymptotic misclassification error rate for CDMs under the assumption that the item parameters are pre-calibrated.
However, in the proposed framework, the item parameters and the latent patterns are unknown and  jointly estimated, and we focus on a more complicated double asymptotic regime, where the sample size $N$ and the number of items $J$ both go to infinity, making uncertainty quantification even more challenging.
This interesting problem will be explored further in the future.

One constraint of all the methods discussed in this paper pertain to the assumption that the $Q$-matrix is known and accurately specified.
In practice, the $Q$-matrix may not be given or subjectively specified by domain experts, with possible misspecifications.
There are some existing methods for estimating the $Q$-matrix in the literature \citep*{Chen2018, chen_liu_Xu_ying2015, Chung2018AnMA,7232852c3cdb4a2ebcac33aa88c1fedc, Liu_Xu_Z2012, XuEstimateQmatrix}.
Developing computational methods and theories for estimating CDMs with unknown $Q$-matrix under our proposed general framework is a natural next step that is left for future work.
Another possible extension is to consider hierarchical structures among the latent attributes \citep*{leighton2004attribute,templin2014hierarchical,ma_xu_2021},
which may exclude some latent patterns in the subjects' population.
Our proposed framework and computational algorithms should be easily adapted if the latent hierarchical structure is given.
Our theoretical analysis will also be readily carried over to the hierarchical setting.

\bibliographystyle{chicago}
\bibliography{bibref}

\pagebreak
\appendix

\begin{center}
{\large\bf SUPPLEMENTAL MATERIAL}
\end{center}

 \setcounter{equation}{0}
 \setcounter{figure}{0}
 \setcounter{table}{0}
 \setcounter{page}{1}
 \setcounter{section}{0}
 \setcounter{algocf}{0}
 \renewcommand{\theequation}{S\arabic{equation}}
 \renewcommand{\thefigure}{S\arabic{figure}}
\renewcommand{\thealgocf}{S\arabic{algocf}}

\section{Appendix}
\label{sec-appendix}

In the appendix, we provide detailed proofs of the Lemmas and Theorems in Section \ref{sec-analysis}.

\subsection{Proof of Theorem 1}
\label{appendix-thm1}
\begin{proof}
{Our proof uses similar arguments as in  \cite{chiu2016joint}.}
First consider the case when the true membership $\boldsymbol{A}_c^0$ is known.
Since  $\hat{\mu}_{j,\boldsymbol{\alpha}} = \sum_{i\in C_{\boldsymbol{\alpha}}}x_{ij}/ |C_{\boldsymbol{\alpha}}| := \Bar{x}_{j,\boldsymbol{\alpha}}$,
by Hoeffding's inequality \citep{hoeffding1994probability}, for any $\epsilon>0$,
\begin{align*}
  \ P\big(\|\hat{\boldsymbol{\mu}}_{\boldsymbol{\alpha}}  - \boldsymbol{\theta}_{\boldsymbol{\alpha}}^0\|_\infty \geq \epsilon \ \big| \ \hat{\boldsymbol{A}}_c = \boldsymbol{A}_c^0 ) & =   P\Big(\max_j|\Bar{x}_{j,\boldsymbol{\alpha}}-\theta_{j,\boldsymbol{\alpha}}^0| \geq \epsilon \mid \hat{\boldsymbol{A}}_c = \boldsymbol{A}_c^0\Big ) \\
    & \leq \sum_{j=1}^J P\Big(|\Bar{x}_{j,\boldsymbol{\alpha}}-\theta_{j,\boldsymbol{\alpha}}^0| \geq \epsilon \mid \hat{\boldsymbol{A}}_c = \boldsymbol{A}_c^0\Big )  \\
    & \leq \ 2 J \exp \big( -2|C_{\boldsymbol{\alpha}}|\cdot \epsilon^2  \big).
 \end{align*}
Since $\lim_{n\to\infty}|C_{\boldsymbol{\alpha}}|/N_c\rightarrow \pi_{\boldsymbol{\alpha}}$ almost surely and $J \exp  \big( -N_c\epsilon  \big)\to 0$ for any $\epsilon>0$, we have $ J \exp \big( -2|C_{\boldsymbol{\alpha}}|\cdot \epsilon^2  \big) =J \exp \Big( -2\big(1 + o(1)\big)N_c \cdot \pi_{\boldsymbol{\alpha}}\cdot \epsilon^2  \Big) \to 0$ almost surely.

Now consider the case when $\hat{\boldsymbol{A}}_c$ is consistent for $\boldsymbol{A}_c^0$, that is, $P (\hat{\boldsymbol{A}}_c \neq \boldsymbol{A}_c^0) \rightarrow 0$.

Then for any $\epsilon >0$, we have
\begin{align*}
	& P\big(\|\hat{\boldsymbol{\mu}}_{\boldsymbol{\alpha}} - \boldsymbol{\theta}^0_{\boldsymbol{\alpha}}\|_\infty \geq \epsilon \big) \\
	\leq & \ P\big(\|\hat{\boldsymbol{\mu}}_{\boldsymbol{\alpha}}  - \boldsymbol{\theta}^0_{\boldsymbol{\alpha}}\|_\infty \geq \epsilon \ \big| \ \hat{\boldsymbol{A}}_c = \boldsymbol{A}_c^0 )\cdot P \big( \hat{\boldsymbol{A}}_c = \boldsymbol{A}_c^0 \big) + P\big(\|\hat{\boldsymbol{\mu}}_{\boldsymbol{\alpha}}  - \boldsymbol{\theta}^0_{\boldsymbol{\alpha}} \|_\infty \geq \epsilon \ \big| \ \hat{\boldsymbol{A}}_c \neq \boldsymbol{A}_c^0 )\cdot P \big( \hat{\boldsymbol{A}}_c \neq \boldsymbol{A}_c^0 \big)\\
	\leq &\ P\big(\|\hat{\boldsymbol{\mu}}_{\boldsymbol{\alpha}}  - \boldsymbol{\theta}_{\boldsymbol{\alpha}}^0\|_\infty \geq \epsilon  \ \big| \ \hat{\boldsymbol{A}}_c = \boldsymbol{A}_c^0 ) + P\big(\hat{\boldsymbol{A}}_c \neq \boldsymbol{A}_c^0 \big)\\
	\xrightarrow{P} & \ 0, \quad  \text{ as } J \rightarrow \infty.
\end{align*}
Therefore we have $\|\hat{\boldsymbol{\mu}}_{\boldsymbol{\alpha}} -\boldsymbol{\theta}^0_{\boldsymbol{\alpha}} \|_\infty \xrightarrow{P} 0$.
Since there are finitely many $\boldsymbol{\alpha}$'s, we have $\|\hat{\boldsymbol{\mu}} -\boldsymbol{\theta}^0 \|_\infty \xrightarrow{P} 0$
\end{proof}

\subsection{Proof of Lemma 1}
\label{appendix-lemma1}
\begin{proof}
\label{thm-theta-proof}
Let $\Tilde{\boldsymbol{\alpha}}_i$ denote the latent attribute pattern that minimizes $E[l(\boldsymbol{x}_i,\hat{\boldsymbol{\mu}}_{\boldsymbol{\alpha}})+h(\hat{\pi}_{\boldsymbol{\alpha}})]$, that is,
\begin{align*}
    \Tilde{\boldsymbol{\alpha}}_i := &
    \underset{\boldsymbol{\alpha}}{\arg\min}
    \big\{ E\big[l(\boldsymbol{x}_i,\hat{\boldsymbol{\mu}}_{\boldsymbol{\alpha}})+h(\hat{\pi}_{\boldsymbol{\alpha}})\big] \big\} \\
    = & \underset{\boldsymbol{\alpha}}{\arg\min}\ E\Big[\sum_{j=1}^J l(x_{ij},\hat{\mu}_{j,\boldsymbol{\alpha}}) + h(\hat{\pi}_{\boldsymbol{\alpha}})\Big] \\
    =& \underset{\boldsymbol{\alpha}}{\arg\min} \Big\{\frac{1}{J}\sum_{j=1}^J  E\big[l(x_{ij},\hat{\mu}_{j,\boldsymbol{\alpha}})\big] + \frac{1}{J} h(\hat{\pi}_{\boldsymbol{\alpha}})\Big\}.
\end{align*}
For the second term, under the Assumption 2, since $\hat{\pi}_{\boldsymbol{\alpha}}$ is asymptotically bounded and $h(\cdot)$ is continuous, hence $h(\hat{\pi}_{\boldsymbol{\alpha}})$ is also bounded,
and we have $ h(\hat{\pi}_{\boldsymbol{\alpha}})/J \xrightarrow{} 0$ as $J\xrightarrow{}\infty$, which is asymptotically negligible.
For the first term, we need to compare $\frac{1}{J}\sum_{j=1}^J  E\big[l(x_{ij},\hat{\mu}_{j,\boldsymbol{\alpha}})\big]$ and $\frac{1}{J}\sum_{j=1}^J  E\big[l(x_{ij},\hat{\mu}_{j,\boldsymbol{\alpha}_i^0})\big]$ for any $\boldsymbol{\alpha}\neq \boldsymbol{\alpha}_i^0$.
\begin{align} \nonumber
    &\ \frac{1}{J}\sum_{j=1}^J  E\big[l(x_{ij},\hat{\mu}_{j,\boldsymbol{\alpha}})\big] - \frac{1}{J}\sum_{j=1}^J  E\big[l(x_{ij},\hat{\mu}_{j,\boldsymbol{\alpha}^0_i})\big] \\ \nonumber
    =& \ \Big(
    \frac{1}{J}\sum_{j=1}^J  E\big[l(x_{ij},\hat{\mu}_{j,\boldsymbol{\alpha}})\big] - \frac{1}{J}\sum_{j=1}^J  E\big[l(x_{ij},\theta_{j,\boldsymbol{\alpha}}^0)\big]
    \Big) +
    \Big(
    \frac{1}{J}\sum_{j=1}^J  E\big[l(x_{ij},\theta_{j,\boldsymbol{\alpha}}^0)\big] - \frac{1}{J}\sum_{j=1}^J  E\big[l(x_{ij},\theta_{j,\boldsymbol{\alpha}^0_i}^0)\big]
    \Big)\\ \nonumber
    & \ +\Big(
    \frac{1}{J}\sum_{j=1}^J  E\big[l(x_{ij},\theta_{j,\boldsymbol{\alpha}^0_i}^0)\big] - \frac{1}{J}\sum_{j=1}^J  E\big[l(x_{ij},\hat{\mu}_{j,\boldsymbol{\alpha}^0_i})\big]
    \Big)\\
    :=&\ E_1 + E_2 + E_3. \label{eq-E123}
\end{align}
Since $\hat{\boldsymbol{\mu}}$ is consistent for $\boldsymbol{\theta}^0$, by Assumption 1, we have $E_1\xrightarrow{P} 0$ and $E_3\xrightarrow{P} 0$.
Specifically, first consider the case when $\hat{\boldsymbol{A}}_c = \boldsymbol{A}_c^0$.
By Assumption \ref{assump-delta-theta}, we know that the true item  response probabilities  are bounded.
There exists $\delta_2 \in (0,0.5)$ such that $\delta_2 \leq \underset{j,\boldsymbol{\alpha}}{\min} \  \theta_{j,\boldsymbol{\alpha}}^0 < \underset{j,\boldsymbol{\alpha}}{\max} \  \theta_{j,\boldsymbol{\alpha}}^0 \leq 1-\delta_2, \forall 1\leq j \leq J, \boldsymbol{\alpha}\in \{0,1\}^K.$
Let's now look at the probability that $\hat{\mu}_{j,\boldsymbol{\alpha}}$ is also bounded.
Specifically, we consider $P(\hat{\mu}_{j,\boldsymbol{\alpha}} \geq 1-\delta_2/2 \mid \hat{\boldsymbol{A}}_c = \boldsymbol{A}_c^0)$ and $P(\hat{\mu}_{j,\boldsymbol{\alpha}} \leq \delta_2/2 \mid \hat{\boldsymbol{A}}_c = \boldsymbol{A}_c^0)$ respectively.
Since  $\hat{\mu}_{j,\boldsymbol{\alpha}} = \sum_{i\in C_{\boldsymbol{\alpha}}}x_{ij}/ |C_{\boldsymbol{\alpha}}| := \Bar{x}_{j,\boldsymbol{\alpha}}$,
we have
\begin{align*}
    P(\hat{\mu}_{j,\boldsymbol{\alpha}} \geq 1-\delta_2/2 \mid \hat{\boldsymbol{A}}_c = \boldsymbol{A}_c^0 ) &= \
    P(\Bar{x}_{j,\boldsymbol{\alpha}}-\theta_{j,\boldsymbol{\alpha}}^0 \geq 1-\delta_2/2 -\theta_{j,\boldsymbol{\alpha}}^0 \mid \hat{\boldsymbol{A}}_c = \boldsymbol{A}_c^0 )\\[10pt]
    &\leq \ \exp \big( -2|C_{\boldsymbol{\alpha}}|(1-\delta_2/2-\theta_{j,\boldsymbol{\alpha}}^0)^2  \big)\\[10pt]
    &\leq \ \exp \big( -|C_{\boldsymbol{\alpha}}|\delta_2^2/2 \big).
\end{align*}
Similarly, we also have $P(\hat{\mu}_{j,\boldsymbol{\alpha}} \leq \delta_2/2 \mid \hat{\boldsymbol{A}}_c = \boldsymbol{A}^0_c ) \leq \exp \big( -|C_{\boldsymbol{\alpha}}|\delta_2^2/2 \big)$.
Therefore,
\begin{equation*}
	P\big(\min_j \hat{\mu}_{j,\boldsymbol{\alpha}} \leq
	\delta_2/2 \text{ or } \max_j \hat{\mu}_{j,\boldsymbol{\alpha}} \geq
	1 - \delta_2/2\mid \hat{\boldsymbol{A}}_c = \boldsymbol{A}^0_c \big) \leq 2 J \exp \big( -|C_{\boldsymbol{\alpha}}|\delta_2^2/2 \big).
\end{equation*}
Moreover, since under the Assumption \ref{assump-minima},
the loss function is assumed to be H\"older continuous, that is, there exist $c>0$ and $\beta>0$, such that for any $\mu_1,\mu_2 \in (\delta_2/2, 1-\delta_2/2)$,
we have $|l(x,\mu_1) - l(x,\mu_2)| \leq c |\mu_1 - \mu_2|^\beta$ for $x = 0$ or $1$.
Then
\begin{align*}
	|E_1| & = \Big| \frac{1}{J}\sum_{j=1}^J  E\big[l(x_{ij},\hat{\mu}_{j,\boldsymbol{\alpha}})\big] - \frac{1}{J}\sum_{j=1}^J  E\big[l(x_{ij},\theta_{j,\boldsymbol{\alpha}}^0)\big] \Big|\\
	& \leq \frac{1}{J}\sum_{j=1}^J E\Big[\big| l(x_{ij},\hat{\mu}_{j,\boldsymbol{\alpha}}) - l(x_{ij},\theta_{j,\boldsymbol{\alpha}}^0)\big|\Big] \\
	& \leq \frac{1}{J}\sum_{j=1}^J   E\big[c|\hat{\mu}_{j,\boldsymbol{\alpha}} - \theta_{j,\boldsymbol{\alpha}}^0|^{\beta}\big] \\
	& \leq c \max_{j} \{E \big[|\hat{\mu}_{j,\boldsymbol{\alpha}} - \theta_{j,\boldsymbol{\alpha}}^0|^{\beta}\big]\}
\end{align*}
Therefore for any $\epsilon > 0$,
\begin{align}\nonumber
	& \ P(|E_1| > \epsilon)\\ \nonumber
	\leq & \  P(|E_1| > \epsilon \mid  \hat{\boldsymbol{A}}_c = \boldsymbol{A}^0_c) + P(\hat{\boldsymbol{A}}_c \neq\boldsymbol{A}^0_c) \\ \nonumber
	\leq &\ P(E_1| > \epsilon \mid  \hat{\boldsymbol{A}}_c = \boldsymbol{A}^0_c, \delta_2 /2 < \hat{\mu}_{j,\boldsymbol{\alpha}} < 1 - \delta_2 / 2, j = 1, \dots, J)\\ \nonumber
	& + P\big(\min_j \hat{\mu}_{j,\boldsymbol{\alpha}} \leq
	\delta_2/2 \text{ or } \max_j \hat{\mu}_{j,\boldsymbol{\alpha}} \geq
	1 - \delta_2/2\mid \hat{\boldsymbol{A}}_c = \boldsymbol{A}^0_c \big) + P(\hat{\boldsymbol{A}}_c \neq\boldsymbol{A}^0_c)\\ \nonumber
	\leq &\ P(||\hat{\boldsymbol{\mu}}_{\boldsymbol{\alpha}} - \boldsymbol{\theta}_{\boldsymbol{\alpha}}^0||_{\infty} > (\epsilon / c)^{1/\beta}) + 2 J\exp(-|C_{\boldsymbol{\alpha}}|\delta_2^2/2) + P(\hat{\boldsymbol{A}}_c \neq\boldsymbol{A}^0_c)\\ \label{eq-E1}
	\leq & \ 2 J \exp(-2 |C_{\boldsymbol{\alpha}}|(\epsilon/c)^{2/\beta}) + 2 J\exp(-|C_{\boldsymbol{\alpha}}|\delta_2^2/2) + P(\hat{\boldsymbol{A}}_c \neq\boldsymbol{A}^0_c)\\ \nonumber
	= & \ 2 J \exp\Big(-2 \big(1 + o(1)\big) N_c\cdot \pi_{\boldsymbol{\alpha}}\cdot (\epsilon/c)^{2/\beta}\Big) + 2 J\exp\Big(- \big(1 + o(1)\big) N_c\cdot \pi_{\boldsymbol{\alpha}}\cdot \delta_2^2/2\Big) + P(\hat{\boldsymbol{A}}_c \neq\boldsymbol{A}^0_c)\\ \nonumber
	& \rightarrow 0,
\end{align}
where \eqref{eq-E1} follows from Theorem 1.
Similarly we can show that $E_3 \xrightarrow{P} 0$ as well.

For the second term, by Assumption \ref{assump-delta-loss}, we have
\begin{equation}
E_2 =
\frac{1}{J}\sum_{j=1}^J  E\big[l(x_{ij},\theta_{j,\boldsymbol{\alpha}}^0)\big] - \frac{1}{J}\sum_{j=1}^J  E\big[l(x_{ij},\theta_{j,\boldsymbol{\alpha}^0_i}^0)\big]
\geq \frac{1}{J}\sum_{j=1}^J |\theta_{j,\boldsymbol{\alpha}^0_i}^0 - \theta_{j,\boldsymbol{\alpha}}^0|^\delta,
\label{eq-E2}	
\end{equation}
for any $\boldsymbol{\alpha}\neq \boldsymbol{\alpha}_i^0$.
Since in Assumption \ref{assump-delta-theta}, there exists $\delta_1 > 0$ such that $\underset{J\rightarrow \infty}{\lim} \underset{\boldsymbol{\alpha}\neq \boldsymbol{\alpha}'}{\min} ||\boldsymbol{\theta}^0_{\boldsymbol{\alpha}} - \boldsymbol{\theta}^0_{\boldsymbol{\alpha}'} ||_1 / J> \delta_1$,
then for a small enough $c_0>0$, there exists $ c_1 > 0$ such that $\big| \{ j: |\theta_{j,\boldsymbol{\alpha}}^0 - \theta_{j,\boldsymbol{\alpha'}}^0| \geq c_0 \} \big| \geq c_1 J$ for any $\boldsymbol{\alpha} \neq \boldsymbol{\alpha}'$ and large enough $J$.
That is, there should be as many items as of order $J$ that can differentiate two different classes.
Otherwise,   $\big| \{ j: |\theta_{j,\boldsymbol{\alpha}}^0 - \theta_{j,\boldsymbol{\alpha'}}^0| \geq c_0 \} \big| / J \rightarrow 0$, which contradicts with the Assumption \ref{assump-delta-theta} for a small enough $c_0$.
Then in \eqref{eq-E2}, we have $E_2 \geq c_1 c_0^{\delta}$ as $J\rightarrow \infty$.
Therefore, the true attribute pattern   minimizes $E[l(\boldsymbol{x}_i,\hat{\boldsymbol{\mu}}_{\boldsymbol{\alpha}};\hat{\pi}_{\boldsymbol{\alpha}})]$ with probability approaching 1.
\end{proof}{}

\subsection{Proof of Lemma 2}
\label{appendix-lemma2}
\begin{proof}
We first decompose the probability in Lemma \ref{lemma-2} into two parts:
\begin{align}
   &\ P\Big(\underset{\boldsymbol{\alpha}}{\max} \Big|
    \frac{1}{J} \sum_{j=1}^J \big(
    l(x_{ij},\hat{\mu}_{j,\boldsymbol{\alpha}})
    - E[l(x_{ij},\theta_{j,\boldsymbol{\alpha}}^0)]
    \big) \Big| \geq \epsilon  \Big)\\
    \leq & \ P\Big(\underset{\boldsymbol{\alpha}}{\max} \Big|
    \frac{1}{J} \sum_{j=1}^J \big(
    l(x_{ij},\hat{\mu}_{j,\boldsymbol{\alpha}})
    - l(x_{ij},\theta_{j,\boldsymbol{\alpha}}^0)
    \big) \Big| \geq \epsilon/2  \Big)
    + P\Big(\underset{\boldsymbol{\alpha}}{\max} \Big|
    \frac{1}{J} \sum_{j=1}^J \big(
    l(x_{ij},\theta_{j,\boldsymbol{\alpha}}^0)
    - E[l(x_{ij},\theta_{j,\boldsymbol{\alpha}}^0)]
    \big) \Big| \geq \epsilon/2  \Big).
    \label{ineq-hoeff}
\end{align}
The first term in (\ref{ineq-hoeff}) goes to zero since $\hat{\boldsymbol{\theta}}$ is uniform consistent for $\boldsymbol{\theta}^0$.
Specifically, from Lemma 1, we have $P(\hat{\mu}_{j,\boldsymbol{\alpha}} \leq \delta_2/2 \text{ or } \hat{\mu}_{j,\boldsymbol{\alpha}} \geq 1 - \delta_2/2\mid \hat{\boldsymbol{A}}_c = \boldsymbol{A}^0_c ) \leq 2\exp \big( -|C_{\boldsymbol{\alpha}}|\delta_2^2/2 \big)$.
Moreover, 
due to the H\"older continuity of the loss function,
we have $|l(x,\mu_1) - l(x,\mu_2)| \leq c |\mu_1 - \mu_2|^\beta$ for $x = 0$ or $1$.
Then
\begingroup
\allowdisplaybreaks
\begin{align*}
    & P\Big(\underset{\boldsymbol{\alpha}}{\max}\Big|
    \frac{1}{J} \sum_{j=1}^J \big(
    l(x_{ij},\hat{\mu}_{j,\boldsymbol{\alpha}})
    - l(x_{ij},\theta_{j,\boldsymbol{\alpha}}^0)
    \big) \Big| \geq \epsilon/2  \ \Big| \ \delta_2/2 < \hat{\mu}_{j,\boldsymbol{\alpha}} \leq 1-\delta_2/2, \hat{\boldsymbol{A}}_c = \boldsymbol{A}^0_c \Big)\\
    \leq & \ \sum_{\boldsymbol{\alpha}}P\Big(\Big|
    \frac{1}{J} \sum_{j=1}^J \big(
    l(x_{ij},\hat{\mu}_{j,\boldsymbol{\alpha}})
    - l(x_{ij},\theta_{j,\boldsymbol{\alpha}}^0)
    \big) \Big| \geq \epsilon/2 \ \Big| \ \delta_2/2 < \hat{\mu}_{j,\boldsymbol{\alpha}} \leq 1-\delta_2/2,\ \hat{\boldsymbol{A}}_c = \boldsymbol{A}^0_c \Big)\\
    \leq & \ 2^K \sum_{j=1}^J P\Big(\Big|
    l(x_{ij},\hat{\mu}_{j,\boldsymbol{\alpha}})
    - l(x_{ij},\theta_{j,\boldsymbol{\alpha}}^0)
    \Big| \geq \epsilon/2 \ \Big| \ \delta_2/2 < \hat{\mu}_{j,\boldsymbol{\alpha}} \leq 1-\delta_2/2, \hat{\boldsymbol{A}}_c  = \boldsymbol{A}^0_c  \Big) \\
    \leq &\ 2^K \sum_{j=1}^J P\Big(\Big|
   \hat{\mu}_{j,\boldsymbol{\alpha}}
    - \theta_{j,\boldsymbol{\alpha}}^0
    \Big|^{\beta} \geq \epsilon/2c \ \Big| \ \delta_2/2 < \hat{\mu}_{j,\boldsymbol{\alpha}} \leq 1-\delta_2/2, \hat{\boldsymbol{A}}_c = \boldsymbol{A}^0_c \Big) \\
    = & \ 2^K \sum_{j=1}^J P\Big(\Big|
   \Bar{x}_{j,\boldsymbol{\alpha}}
    - \theta_{j,\boldsymbol{\alpha}}^0
     \Big| \geq (\epsilon/2c)^{1/\beta} \ \Big| \ \delta_2/2 < \hat{\mu}_{j,\boldsymbol{\alpha}} \leq 1-\delta_2/2, \hat{\boldsymbol{A}}_c = \boldsymbol{A}^0_c \Big)\\[10pt]
    \leq &\ 2^{K+1} J \exp \big(-2|C_{\boldsymbol{\alpha}}|(\epsilon/2c)^{2/\beta}
    \big).
\end{align*}
\endgroup
Then we have
\begin{align*}
    & \quad  P\Big(\underset{\boldsymbol{\alpha}}{\max} \Big|
    \frac{1}{J} \sum_{j=1}^J \big(
    l(x_{ij},\hat{\mu}_{j,\boldsymbol{\alpha}})
    - l(x_{ij},\theta_{j,\boldsymbol{\alpha}}^0)
    \big) \Big| \geq \epsilon/2 \ \Big| \ \hat{\boldsymbol{A}}_c = \boldsymbol{A}^0_c \Big)\\
    &\leq   \sum_{\boldsymbol{\alpha}} \sum_{j=1}^J \Big[ P(\hat{\mu}_{j,\boldsymbol{\alpha}} < \delta_2/2 \text{ or } \hat{\mu}_{j,\boldsymbol{\alpha}} > 1 - \delta_2/2 \mid \hat{\boldsymbol{A}}_c = \boldsymbol{A}^0_c) \\
    &\qquad \qquad \  + P\big(\big|
   \hat{\mu}_{j,\boldsymbol{\alpha}}
    - \theta_{j,\boldsymbol{\alpha}}^0
    \big| \geq (\epsilon/2c)^{1/\beta} \ \big| \ \delta_2/2 < \hat{\mu}_{j,\boldsymbol{\alpha}} \leq 1-\delta_2/2,\ \hat{\boldsymbol{A}}_c = \boldsymbol{A}^0_c \big)\Big]\\
     & \leq 2^{K+1} J \exp(-|C_{\boldsymbol{\alpha}}| \delta_2^2/2)
    + 2^{K+1} J \exp (-2|C_{\boldsymbol{\alpha}}|(\epsilon/2c)^{2/\beta})\\
    & = 2^{K+1} J \exp\Big(-\big(1 + o(1)\big)N_c\cdot \pi_{\boldsymbol{\alpha}}\cdot  \delta_2^2/2\Big)
    + 2^{K+1} J \exp \Big(-2\big(1 + o(1)\big)N_c\cdot \pi_{\boldsymbol{\alpha}}\cdot (\epsilon/2c)^{2/\beta}\Big)\\
    & \rightarrow 0, \text{ as } J \rightarrow \infty.
\end{align*}
Therefore, we have
\begin{align*}
	& \ P\Big(\underset{\boldsymbol{\alpha}}{\max} \Big|
    \frac{1}{J} \sum_{j=1}^J \big(
    l(x_{ij},\hat{\mu}_{j,\boldsymbol{\alpha}})
    - l(x_{ij},\theta_{j,\boldsymbol{\alpha}}^0)
    \big) \Big| \geq \epsilon/2 \Big)\\
	& \leq P\Big(\underset{\boldsymbol{\alpha}}{\max} \Big|
    \frac{1}{J} \sum_{j=1}^J \big(
    l(x_{ij},\hat{\mu}_{j,\boldsymbol{\alpha}})
    - l(x_{ij},\theta_{j,\boldsymbol{\alpha}}^0)
    \big) \Big| \geq \epsilon/2 \ \Big| \ \hat{\boldsymbol{A}}_c = \boldsymbol{A}^0_c \Big) \cdot P(\hat{\boldsymbol{A}}_c = \boldsymbol{A}^0_c) +  P(\hat{\boldsymbol{A}}_c \neq \boldsymbol{A}^0_c)\\
    &\leq P\Big(\underset{\boldsymbol{\alpha}}{\max} \Big|
    \frac{1}{J} \sum_{j=1}^J \big(
    l(x_{ij},\hat{\mu}_{j,\boldsymbol{\alpha}})
    - l(x_{ij},\theta_{j,\boldsymbol{\alpha}}^0)
    \big) \Big| \geq \epsilon/2 \ \Big| \ \hat{\boldsymbol{A}}_c = \boldsymbol{A}^0_c \Big)  +  P(\hat{\boldsymbol{A}}_c \neq \boldsymbol{A}^0_c)\\
    & \rightarrow 0, \text{ as } J \rightarrow \infty.
\end{align*}

Next we need to bound the second term.
By Assumption \ref{assump-delta-theta}, $\theta_{j,\boldsymbol{\alpha}}^0$'s are uniformly bounded and thus $l(x_{ij},\theta_{j,\boldsymbol{\alpha}}^0)$'s are also uniformly bounded.
There exists $M > 0$ such that $\big|l(x_{ij},\theta_{j,\boldsymbol{\alpha}}^0)\big| \leq M$ for any $j$ and $\boldsymbol{\alpha}$. 
Then by Hoeffding's inequality \citep{hoeffding1994probability}, we have
\begin{equation*}
    P\Big(\Big|
    \frac{1}{J} \sum_{j=1}^J\big(
    l(x_{ij},\theta_{j,\boldsymbol{\alpha}}^0)
    - E[l(x_{ij},\theta_{j,\boldsymbol{\alpha}}^0)]
    \big) \Big| \geq \epsilon/2  \Big)
    \leq
    2 \exp \big(
    - J\epsilon^2 / 2M^2
    \big),
\end{equation*}
and therefore
\begin{align*}
    &\quad P\Big(\underset{\boldsymbol{\alpha}}{\max} \Big|
    \frac{1}{J} \sum_{j=1}^J\big(
    l(x_{ij},\theta_{j,\boldsymbol{\alpha}}^0)
    - E[l(x_{ij},\theta_{j,\boldsymbol{\alpha}}^0)]
    \big) \Big| \geq \epsilon/2  \Big) \\
    & \leq  \ \sum_{\boldsymbol{\alpha}} P\Big(\Big|
    \frac{1}{J}\sum_{j=1}^J \big(
    l(x_{ij},\theta_{j,\boldsymbol{\alpha}}^0)
    - E[l(x_{ij},\theta_{j,\boldsymbol{\alpha}}^0)]
    \big) \big| \geq \epsilon/2  \Big)\\
    & \leq \ 2^{K+1}\exp \big(
    - J\epsilon^2 / 2M^2
    \big) \longrightarrow 0, \text{ as } J\xrightarrow{} \infty.
\end{align*}
\end{proof}{}

\subsection{Proof of Theorem 2}
\label{appendix-thm2}
\begin{proof}
Since $\hat{\boldsymbol{A}}_c$ is consistent for $\boldsymbol{A}_c^0$, by Theorem \ref{thm-theta}, $\hat{\boldsymbol{\mu}}$ is consistent for $\boldsymbol{\theta}^0$.
Note that $\hat{\boldsymbol{\alpha}}_i \neq \boldsymbol{
\alpha}_i^0$ is equivalent to that
\begin{equation}
    \frac{1}{J} \sum_{j=1}^J l(x_{ij},\hat{\mu}_{j,\boldsymbol{\alpha}_i^0}) + \frac{1}{J} h(\hat{\pi}_{\boldsymbol{\alpha}_i^0})
    >
    \frac{1}{J} \sum_{j=1}^J l(x_{ij},\hat{\mu}_{j,\hat{\boldsymbol{\alpha}}_i}) + \frac{1}{J} h(\hat{\pi}_{\hat{\boldsymbol{\alpha}}_i}).
    \label{eq-delta}
\end{equation}
From Assumptions \ref{assump-minima} and \ref{assump-delta-loss} and the proof of Lemma \ref{lemma-1}, we know
\begin{equation}
    \frac{1}{J} \sum_{j=1}^J E[l(x_{ij},\theta_{j,\boldsymbol{\alpha}_i^0}^0)]
    <
    \frac{1}{J} \sum_{j=1}^J E[l(x_{ij},\theta_{j,\hat{\boldsymbol{\alpha}}_i}^0)] - c_1 c_0^{\delta}
\end{equation}
Let $c_2 = c_1 c_0^{\delta}$ and take $\epsilon = c_2 / 4$ in Lemma \ref{lemma-2}, and consider the event
\[
B_{\epsilon}(J):=\big\{ \underset{\boldsymbol{\alpha}}{\max} \Big|
    \frac{1}{J} \sum_{j=1}^J \big(
    l(x_{ij},\hat{\mu}_{j,\boldsymbol{\alpha}})
    - E[l(x_{ij},\theta_{j,\boldsymbol{\alpha}}^0)]
    \big) \Big| < \epsilon \big\}.
\]
Since $h(\hat{\pi}_{\alpha})$ is bounded, there exists some $J_0$ such that for any $J\geq J_0$, we have
$
    \big|\frac{1}{J} h(\hat{\pi}_{\boldsymbol{\alpha}_i^0}) - \frac{1}{J} h(\hat{\pi}_{\hat{\boldsymbol{\alpha}}_i})\big| < c_2 / 4.
$
When $B_{c_2/4}(J)$ occurs, it implies that
\[
    \Big|
    \frac{1}{J} \sum_{j=1}^J \big(
    l(x_{ij},\hat{\mu}_{j,\boldsymbol{\alpha}_i^0})
    - E[l(x_{ij},\theta_{j,\boldsymbol{\alpha}_i^0}^0)]
    \big) \Big| < c_2/4,
\]
and
\[
    \Big|
    \frac{1}{J} \sum_{j=1}^J \big(
    l(x_{ij},\hat{\mu}_{j,\hat{\boldsymbol{\alpha}}_i})
    - E[l(x_{ij},\theta_{j,\hat{\boldsymbol{\alpha}}_i}^0)]
    \big) \Big| < c_2/4.
\]
Then in equation \eqref{eq-delta},
\[
    \text{LHS} < \frac{1}{J} \sum_{j=1}^J E[l(x_{ij},\theta_{j,\boldsymbol{\alpha}_i^0}^0)] + c_2/4 + \frac{1}{J} h(\hat{\pi}_{\boldsymbol{\alpha}_i^0}),
\]
and
\[
    \text{RHS} > \frac{1}{J} \sum_{j=1}^J E[l(x_{ij},\theta_{j,\hat{\boldsymbol{\alpha}}_i}^0)] - c_2/4 + \frac{1}{J} h(\hat{\pi}_{\hat{\boldsymbol{\alpha}}_i}),
\]
which implies that
\begin{align*}
    \frac{1}{J} \sum_{j=1}^J E[l(x_{ij},\theta_{j,\hat{\boldsymbol{\alpha}}_i}^0)] &< \frac{1}{J} \sum_{j=1}^J E[l(x_{ij},\theta_{j,\boldsymbol{\alpha}_i^0}^0)] + c_2 / 2 + \frac{1}{J} h(\pi_{\boldsymbol{\alpha}_i^0}) - \frac{1}{J} h(\pi_{\hat{\boldsymbol{\alpha}}_i})\\
    &< \frac{1}{J} \sum_{j=1}^J E[l(x_{ij},\theta_{j,\boldsymbol{\alpha}_i^0}^0)] + 3c_2/4\\
    &< \frac{1}{J} \sum_{j=1}^J E[l(x_{ij},\theta_{j,\hat{\boldsymbol{\alpha}}_i}^0)],
\end{align*}
where the last inequality is from equation (\ref{eq-delta}) and results in a contradiction.
It indicates that $\{\hat{\boldsymbol{\alpha}}_i \neq \boldsymbol{\alpha}_i^0\} \subset B_{c_2/4}(J)^c$ for $J$ large enough.
And therefore we have
\begin{align*}
\vspace{-5in}
    P\Big(\hat{\boldsymbol{\alpha}}_i \neq \boldsymbol{\alpha}_i^0\Big)
    \leq & P\big(B_{c_2/4}(J)^c\big)\\
    \leq &  P\Big(\underset{\boldsymbol{\alpha}}{\max} \Big|
    \frac{1}{J} \sum_{j=1}^J\big(
    l(x_{ij},\hat{\mu}_{j,\boldsymbol{\alpha}})
    - E[l(x_{ij},\theta_{j,\boldsymbol{\alpha}}^0)]
    \big) \Big| \geq c_2/4 \Big)\\[5pt]
     & \longrightarrow 0, \text{ as } J\xrightarrow{}\infty. \quad \text{ (by Lamma \ref{lemma-2})}
\end{align*}

\end{proof}{}

\subsection{Proof of Theorem 3}
\label{appendix-thm3}
\begin{proof}

Following the proof of Theorem \ref{thm-alpha}, we have
\begin{align*}
    &P\Big(\bigcup_i \big\{\hat{\boldsymbol{\alpha}}_i \neq \boldsymbol{\alpha}_i^0\big\} \ \Big| \ \hat{\boldsymbol{A}}_c = \boldsymbol{A}^0_c \Big)\\
    \leq &\ \sum_{i} P\Big(\big\{\hat{\boldsymbol{\alpha}}_i \neq \boldsymbol{\alpha}_i^0\big\} \ \Big| \ \hat{\boldsymbol{A}}_c = \boldsymbol{A}^0_c \Big)\\
    \leq & \ N \cdot P\Big(B_{c_2/4}(J)^c \ \Big| \ \hat{\boldsymbol{A}}_c = \boldsymbol{A}^0_c \Big)\\
    \leq &\  N \cdot P\Big(\underset{\boldsymbol{\alpha}}{\max} \Big|
    \frac{1}{J} \sum_{j=1}^J\Big(
    l(x_{ij},\hat{\mu}_{j,\boldsymbol{\alpha}})
    - E\big[l(x_{ij},\theta_{j,\boldsymbol{\alpha}}^0)\big]
    \Big) \Big| \geq c_2/4 \ \Big|\ \hat{\boldsymbol{A}}_c = \boldsymbol{A}^0_c \Big)\\[3pt]
    \leq & \ 2^{K+1} NJ \exp(-|C_{\boldsymbol{\alpha}}| \delta_2^2/2)
    + 2^{K+1} NJ \exp (-2|C_{\boldsymbol{\alpha}}|(c_2/8c)^{2/\beta}) + 2^{K+1} N \exp \big(
    - Jc_2^2 / 32M^2
    \big)\\[10pt]
    \leq & \ 2^{K+1} N^2 \exp(-|C_{\boldsymbol{\alpha}}| \delta_2^2/2)
    + 2^{K+1} N^2 \exp (-2|C_{\boldsymbol{\alpha}}|(c_2/8c)^{2/\beta}) + 2^{K+1} N \exp \big(
    - Jc_2 ^2 / 32M^2
    \big).
\end{align*}
Under the Assumption 2, we have $\lim_{n\to\infty}|C_{\boldsymbol{\alpha}}|/N_c \rightarrow \pi_{\boldsymbol{\alpha}}$ almost surely; therefore $N^2 \exp(-|C_{\boldsymbol{\alpha}}| \delta_2^2/2) = N^2 \exp\Big(-\big(1 + o(1)\big)N_c \cdot \pi_{\boldsymbol{\alpha}} \cdot \delta_2^2/2\Big)$ and $N^2 \exp (-2|C_{\boldsymbol{\alpha}}|(c_2/8c)^{2/\beta}) = N^2 \exp \Big(-2 \big(1 + o(1)\big)N_c \cdot \pi_{\boldsymbol{\alpha}}\cdot (c_2/8c)^{2/\beta}\Big)$.
Then we have
\begin{align*}
    &P\Big(\bigcup_i \{\hat{\boldsymbol{\alpha}}_i \neq \boldsymbol{\alpha}_i^0\}\Big)\\
    \leq & \ P\Big(\bigcup_i \{\hat{\boldsymbol{\alpha}}_i \neq \boldsymbol{\alpha}_i^0\}\ \Big| \ \hat{\boldsymbol{A}}_c = \boldsymbol{A}^0_c \Big)P\Big(\hat{\boldsymbol{A}}_c = \boldsymbol{A}^0_c \Big)
	 + P\Big(\bigcup_i \{\hat{\boldsymbol{\alpha}}_i \neq \boldsymbol{\alpha}_i^0\}\ \Big| \ \hat{\boldsymbol{A}}_c \neq \boldsymbol{A}^0_c \Big) P\Big(\hat{\boldsymbol{A}}_c \neq \boldsymbol{A}^0_c \Big)\\
   	\leq & \ P\Big(\bigcup_i \{\hat{\boldsymbol{\alpha}}_i \neq \boldsymbol{\alpha}_i^0\}\ \Big| \ \hat{\boldsymbol{A}}_c = \boldsymbol{A}^0_c \Big)	 + P\Big(\hat{\boldsymbol{A}}_c \neq \boldsymbol{A}^0_c \Big)\\
    \leq & \ 2^{K+1}N^2 \exp\Big(-\big(1 + o(1)\big)N_c \cdot \pi_{\boldsymbol{\alpha}} \cdot \delta_2^2/2\Big)
   + 2^{K+1} N^2 \exp \Big(-\big(1 + o(1)\big)2N_c \cdot \pi_{\boldsymbol{\alpha}}\cdot (c_2/8c)^{2/\beta}\Big)\\
   & \ + 2^{K+1} N \exp \big(
    - Jc_2^2 / 32M^2
    \big) +P\Big(\hat{\boldsymbol{A}}_c \neq \boldsymbol{A}^0_c \Big) \\
    \leq & \ 2^{K+1}N^2 \exp\Big(-\big(1 + o(1)\big)J \cdot \pi_{\boldsymbol{\alpha}} \cdot \delta_2^2/2\Big)
   + 2^{K+1} N^2 \exp \Big(-2\big(1 + o(1)\big)J \cdot \pi_{\boldsymbol{\alpha}}\cdot (c_2/8c)^{2/\beta}\Big) \\
   &\ + 2^{K+1} N \exp \big(
    - Jc_2^2 / 32M^2
    \big) +P\Big(\hat{\boldsymbol{A}}_c \neq \boldsymbol{A}^0_c \Big)  \\
    = & \ 2^{K+1}\Big[N \exp\Big(-\big(1 + o(1)\big)J \cdot \pi_{\boldsymbol{\alpha}} \cdot \delta_2^2/4\Big)\Big]^2
   + 2^{K+1} \Big[ N \exp \Big(-\big(1 + o(1)\big)J \cdot \pi_{\boldsymbol{\alpha}}\cdot (c_2/8c)^{2/\beta}\Big) \Big]^2\\
   & \ + 2^{K+1} N \exp \big(
    - Jc_2^2 / 32M^2
    \big) +P\Big(\hat{\boldsymbol{A}}_c \neq \boldsymbol{A}^0_c \Big)  \\
   \rightarrow &\  0, \text{ as } J\xrightarrow{}\infty.
\end{align*}
Therefore, $\hat{\boldsymbol{\alpha}}_i$'s are uniformly consistent for $\boldsymbol{\alpha}_i$'s for all $i=1,\dots,N$.
\end{proof}

\subsection{Proof of Proposition 1}
\label{appendix-prop3}

\begin{proof}
Our proof uses similar arguments as in \cite{celeux1992classification}.
In Step 3 of Algorithm \ref{algo1}, we have
\begin{equation*}
	L(\boldsymbol{A}^{(t)},\boldsymbol{\mu}^{(t+1)}, \boldsymbol{\pi}^{(t+1)}) \leq L(\boldsymbol{A}^{(t)},\boldsymbol{\mu}^{(t)}, \boldsymbol{\pi}^{(t)}).
\end{equation*}
\noindent Moreover, since
$\hat{\boldsymbol{\alpha}}_i^{(t+1)} = \underset{\boldsymbol{\alpha}}{\arg\min} \ l(\boldsymbol{x}_i,\hat{\boldsymbol{\mu}}_{\boldsymbol{\alpha}}^{(t+1)}) + h(\hat{\pi}_{\boldsymbol{\alpha}}^{(t+1)})$, which is equivalent to that $l\big(\boldsymbol{x}_i,\hat{\boldsymbol{\mu}}_{\hat{\boldsymbol{\alpha}}_i^{(t+1)}}^{(t+1)}\big) + h\big(\hat{\pi}_{\hat{\boldsymbol{\alpha}}_i^{(t+1)}}^{(t+1)}\big) \leq l(\boldsymbol{x}_i,\hat{\boldsymbol{\mu}}_{\boldsymbol{\alpha}}^{(t+1)}) + h(\hat{\pi}_{\boldsymbol{\alpha}}^{(t+1)})$ for any $\boldsymbol{\alpha}\neq \hat{\boldsymbol{\alpha}}_i^{(t+1)}$,
we have
\begin{equation}
	L(\boldsymbol{A} ^{(t+1)},\boldsymbol{\mu}^{(t+1)}, \boldsymbol{\pi}^{(t+1)}) \leq L(\boldsymbol{A} ^{(t)},\boldsymbol{\mu}^{(t)}, \boldsymbol{\pi}^{(t)}).
\end{equation}
Therefore the criterion \eqref{eq-C2} is decreasing.

In the finite sample setting, since there is finite number of partitions into $2^K$ classes, the decreasing sequence $L(\boldsymbol{A} ^{(t)},\boldsymbol{\mu}^{(t)}, \boldsymbol{\pi}^{(t)})$ also takes a finite number of values, which makes it converge to a stationary value.
Moreover, since the minima of the loss function is well-defined, the sequence $(\boldsymbol{A} ^{(t)},\boldsymbol{\mu}^{(t)}f, \boldsymbol{\pi}^{(t)})$ also converges.
\end{proof}

\subsection{Proof of Proposition 2}
\label{appendix-prop4}

\begin{proof}
Our proof directly follows that in \cite{celeux1992classification}. Since
\begin{align*}
L(\boldsymbol{U},\boldsymbol{\mu},\boldsymbol{\pi}) = & \sum_{\boldsymbol{\alpha}\in \{0,1\}^K} \sum_{i=1}^n u_{i\boldsymbol{\alpha}} \Big(l(\boldsymbol{x_i},\boldsymbol{\mu}_{\boldsymbol{\alpha}}) + h(\pi_{\boldsymbol{\alpha}})\Big)\\
	\geq & \sum_{\boldsymbol{\alpha}\in \{0,1\}^K} \sum_{i=1}^n u_{i\boldsymbol{\alpha}} \min_{\boldsymbol{\alpha}'}\Big(l(\boldsymbol{x_i},\boldsymbol{\mu}_{\boldsymbol{\alpha}'}) + h(\pi_{\boldsymbol{\alpha}'})\Big)\\
	\geq & \sum_{\boldsymbol{\alpha}\in \{0,1\}^K} \sum_{i=1}^n \min_{\boldsymbol{\alpha}'}\Big(l(\boldsymbol{x_i},\boldsymbol{\mu}_{\boldsymbol{\alpha}'}) + h(\pi_{\boldsymbol{\alpha}'})\Big),
\end{align*}
where the RHS is attained when $\boldsymbol{U}$ is equivalent to some partition,
the Algorithm \ref{algo1} can be regarded as an alternating optimization algorithm to minimize $L(\boldsymbol{U},\boldsymbol{\mu},\boldsymbol{\pi})$.
Specifically, the Algorithm \ref{algo1} is in fact a grouped coordinate descent method.
Following the Theorem 2.2 of \cite*{bezdek1987local}, the Proposition \ref{prop-finite-2} is proved.
\end{proof}



\end{document}